\documentclass[10pt, a4paper]{article}


\usepackage{graphicx}%
\usepackage{multirow}%
\usepackage{amsmath,amssymb,amsfonts}%
\usepackage{amsthm}%
\usepackage{mathrsfs}%
\usepackage[title]{appendix}%
\usepackage[table, x11names]{xcolor}%
\usepackage{textcomp}%
\usepackage{manyfoot}%
\usepackage{booktabs}%
\usepackage{listings}%

\usepackage{authblk}


\theoremstyle{thmstyleone}%
\newtheorem{theorem}{Theorem}
\newtheorem{proposition}[theorem]{Proposition}%

\theoremstyle{thmstyletwo}%
\newtheorem{remark}{Remark}%

\theoremstyle{thmstylethree}%
\newtheorem{definition}{Definition}%
\newtheorem{assumption}{Assumption}%
\newtheorem{corollary}{Corollary}
\newtheorem{lemma}[theorem]{Lemma}

\usepackage{enumitem}
\usepackage{bm}
\usepackage{amsmath, amssymb, amsthm}
\usepackage{graphicx}
\usepackage{eurosym}
\usepackage{tikz}
\usepackage{xcolor}
\usepackage[ruled,vlined,linesnumbered,noend]{algorithm2e}

\usepackage{subcaption}

\usetikzlibrary{decorations.pathreplacing}

\SetCommentSty{mycommfont}

\usepackage{adjustbox}

\usepackage{multicol}
\usepackage{multirow}

\colorlet{lightgray}{gray!60}
\DeclareMathOperator*{\argmax}{arg\,max}


\newcommand{\minpc}{n}
\newcommand{\np}{s}
\newcommand{\nc}{q}

\newcolumntype{L}[1]{>{\raggedright\let\newline\\\arraybackslash\hspace{0pt}}m{#1}}
\newcolumntype{C}[1]{>{\centering\let\newline\\\arraybackslash\hspace{0pt}}m{#1}}
\newcolumntype{R}[1]{>{\raggedleft\let\newline\\\arraybackslash\hspace{0pt}}m{#1}}

\definecolor{aurometalsaurus}{rgb}{0.21, 0.25, 0.25}

\newcommand{\ind}[1]{{\color{aurometalsaurus}$^{\mathbf{(#1)}}$}}

\newcommand{\ie}{i.e.,}

\newcommand{\greedylocal}[0]{Greedy-Local}
\newcommand{\doublegreedylocal}[0]{Double-Greedy-Local}
\newcommand{\naivealgo}[0]{Naive-Local}
\newcommand{\boundedgreedy}[0]{$\ell$-Greedy-Local}

\newcommand{\naiveedge}[0]{Naive-Edge}

\title{Greediness is not always a vice: Efficient Discovery Algorithms for Assignment Problems}

\author[1]{Romaric Duvignau}
\author[2]{Noël Gillet}
\author[3]{Ralf Klasing}

\affil[1]{Chalmers University of Technology and University of Gothenburg, Sweden}
\affil[ ]{\texttt duvignau@chalmers.se}
\affil[2]{Univ. Orléans, INSA Centre Val de Loire, LIFO UR 4022, FR-45067 Orléans, France}
\affil[ ]{\texttt noel.gillet@univ-orleans.fr}
\affil[3]{CNRS, LaBRI, Universit\'e de Bordeaux, Talence, France}
\affil[ ]{\texttt ralf.klasing@labri.fr}



\begin{document}

\maketitle

\begin{abstract} 
Finding a maximum-weight matching is a classical and well-studied problem in computer science, solvable in cubic time in general graphs.
We consider the specialization called \textit{assignment problem} where the input is a bipartite graph, and introduce in this work the ``discovery'' variant considering edge weights that are not provided as input but must be \textit{queried}, requiring additional and costly computations.
We develop here discovery algorithms aiming to minimize the number of queried weights while providing guarantees on the computed solution.
We first show in this work the inherent challenges of designing discovery algorithms for general assignment problems.
We then provide and analyze several efficient greedy algorithms that can make use of natural assumptions about the order in which the nodes are processed by the algorithms.
Our motivations for exploring this problem stem from finding practical solutions to a variation of maximum-weight matching in bipartite hypergraphs, a problem recently emerging in the formation of peer-to-peer energy sharing communities. 
\end{abstract}

\section{Introduction}

One of the most studied problems in computer science and discrete mathematics, the \textit{assignment problem}, has a very simple formulation, yet a plethora of solutions exist for its many variants and possible additional constraints or optimization aims. 
Using the same nomenclature as used in the rest of the paper, the assignment problem consists in pairing together the members of a first set $P$, often referred to as \textit{Producers} or \textit{agents} in the literature, with members of a second and disjoint set $C$, often referred to as \textit{Consumers} or \textit{tasks}.
The target is to obtain a \textit{One-to-One} correspondence, i.e., each producer can be \textit{assigned} to at most a single consumer and vice-versa. Moreover, as not all producers may be able to serve any particular consumer (and vice-versa), some pairs are considered non valid\footnote{A variant of the problem can set a weight of $0$ for invalid pairs but we rule out such null weights in our formulation. The reason is that our objective is to query as few weights as possible, and weights $0$ are assumed to be already encoded in the input edge set $E \in 2^{P \times C}$.}. For each possible pair $(p,c) \in E$ with $E \subseteq P \times C$, $(p,c)$ is associated with a positive \textit{weight} $w(p,c)$ that represents how much \textit{gain} one can obtain if producer $p \in P$ is paired with consumer $c \in C$.
The assignment problem consists then in finding a \textit{one-to-one assignment} $M \subseteq E$ of the consumers to the producers in order to maximize the total gain $w(M) = \sum_{(p,c) \in M} w(p,c)$, slightly abusing the $w$-notation.
This is a well-studied problem where the Hungarian algorithm~\cite{kuhn1955hungarian} computes an optimal solution in time $\mathcal{O}(n\cdot m + n^2 \log n)$ for $n = \min\{|P|,|C|\}$ and $m = |E|$; see among others~\cite{ramshaw2012minimum} for unbalanced assignment problems and \cite{duan2014linear} for linear-time bounded-approximation algorithms. 
The problem can be alternatively formulated as finding a maximum-weight matching in the bipartite graph $G = (P \cup C, E)$, with the two formulations being equivalent and used interchangeably hereafter for convenience.

A ``discovery'' problem is any optimization problem where the information that is the basis of the optimization is not provided as initial input but must rather be \textit{discovered} during the algorithm's execution.
We extend this notion of discovery algorithms, introduced among others in~\cite{Szepesvari04,CaroDB20}, to assignment problems.
We shall study in this work the \textit{Maximum-Weight Matching Discovery (MWMD) problem} that consists in finding a Maximum-Weight Matching (MWM) using weights that can only be obtained through explicit calls to a computationally-expensive weight function. 
We denote by \textit{query complexity} the number of inspected weights used by a given algorithm to produce its solution.
Since one can easily show that in general, finding the MWM requires the computation of all possible weights in the worst case, we aim to investigate in this paper if approximation algorithms can reach a bounded approximation ratio while requiring the calculation of only an asymptotically subquadratic number of weights in $n$.
Our methods apply to the assignment problems (i.e., bipartite graph matchings) and can be further extended to solve a bipartite version of the hypergraph matching problem with interesting practical applications in energy systems~\cite{duvignau2022efficient,duvignau2024geographical}. 

\paragraph*{Contributions}

\begin{table}[t]
    \caption{Approximation ratios shown in this work for the one-to-one assignment discovery problem over input $G = (P \cup C, E)$, with $n = \min\{|P|,|C|\}$ and $m=|E|$. Each bounded ratio is shown to be achievable (upper bound) by the corresponding algorithm and for each, we show there exist instances (input graph and assumption parameters) where the ratio is reached (lower bound).
    Query complexities are shown in Propositions~\ref{claim:beta_gamma} (Alg.~\ref{alg:matching2}), \ref{prop:alg2_weights} (Alg.~\ref{alg:matching3}) and~\ref{proposition:betal_gammal}
    (Alg.~\ref{alg:double_greedy}), where $\ell \geq 0$ is a parameter of the matching algorithms.
    }
    \label{tab:results}
    \centering
    \footnotesize
    \resizebox{\linewidth}{!}{
    \def\arraystretch{1.45}
    \begin{tabular}{|C{3cm}|| c | c | C{3cm} | c | c | c |}
        \hline
        \rowcolor{gray!30!}

        \textbf{Algorithm} & \textbf{Opt.} & \textbf{Greedy} & \textbf{Alg.~\ref{alg:matching1}} & \textbf{Alg.~\ref{alg:matching2}} & \textbf{Alg.~\ref{alg:matching3}} & \textbf{Alg.~\ref{alg:double_greedy}}   \\
        \hline
        
        \textbf{Query Complexity} & \multicolumn{2}{c|}{$m = |E|$ } & $\leq m$ & $0$ & $\leq (\ell+1) \cdot n$ & $\leq 3 \cdot (\ell+1) \cdot n$  \\
        \cline{1-7}
        
        No weight assumptions & \multirow{10}{*}{1} & \multirow{10}{*}{2} & \multicolumn{4}{c|}{$\infty$ \ind{a}} \\
        \cline{1-1}\cline{4-7}
        $\beta-$strong $P$-order (Assumption~\ref{assumption_beta}) &  &   & $1+\beta$ \ind{b} &  \multicolumn{3}{c|}{$\infty$ \ind{i}}  \\ 
        \cline{1-1}\cline{4-7}
        $\gamma-$strong $C$-order (Assumption~\ref{assumption_gamma}) &  &  & $1+\gamma$ \ind{c} & \multicolumn{3}{c|}{$\infty$ \ind{i}}  \\
        \cline{1-1}\cline{4-7}
        ``Strong orders'' (Assumptions~\ref{assumption_beta} and~\ref{assumption_gamma})  &  &  &$\min \{1 + \beta, \;\; \max\{1, \beta+\gamma\}\}$~\ind{d} & \multicolumn{2}{c|}{$\max \{1, \beta + \gamma\}$ \ind{e}} & $2 \cdot \max \{1, \beta, \gamma \}$ \ind{h}  \\
        \cline{1-1}\cline{4-7}
        Ass.~\ref{assumption_beta} + $\gamma_\ell$-$\ell$-weak $C$-order (Ass. \ref{assumption_gamma_l}) &  &  & $1+\beta$ \ind{b} & $\infty$ \ind{j} & $\beta+\max \{1,\gamma_\ell \}$ \ind{f} & $2 \cdot \max \{1,\beta,\gamma_\ell \}$ \ind{h} \\
        \cline{1-1}\cline{4-7}
        Ass.~\ref{assumption_gamma} +  $\beta_\ell$-$\ell$-weak $P$-order (Ass. \ref{assumption_beta_l}) &  &  & $1+\gamma$ \ind{c} & $\infty$ \ind{j} & $\gamma+\max \{1,\beta_\ell \}$ \ind{g} & $2 \cdot \max \{1,\beta_\ell,\gamma \}$ \ind{h}  \\ 
        \cline{1-1}\cline{4-7}
        ``Weak orders'' (Assumptions~\ref{assumption_gamma_l} and~\ref{assumption_beta_l}) &  &  & \multicolumn{3}{c|}{$\infty$ \ind{k}} & $2 \cdot \max \{1, \beta_\ell, \gamma_\ell \}$ \ind{h}  \\
        
        \hline
    \end{tabular}
   }
   \footnotesize
    \ind{a} Proposition~\ref{claim:oracle_needed};
    \ind{b} Propositions~\ref{theorem_beta} and~\ref{counter_example_beta};
    \ind{c} Remark~\ref{remark:gamma} with Alg.~\ref{alg:matching1} running over input $G=(C \cup P, E)$;
    \ind{d} Remark~\ref{remark:min_matching};
    \ind{e}~Propositions~\ref{claim:beta_gamma},~\ref{prop:counter_example_alg2}, and~\ref{prop:alg3_strong};
    \ind{f} Propositions~\ref{prop_gamma_l} and~\ref{prop:alg3_counter_example};
    \ind{g} Proposition~\ref{prop_beta_l} on $G=(C \cup P, E)$;
    \ind{h} Propositions~\ref{proposition:betal_gammal} and~\ref{example_betal_gammal};
    \ind{i} Remark~\ref{algo:infinity};
    \ind{j}~Remark~\ref{remark:alg2_infinity};
    \ind{k} Propositions~\ref{claim:bounded_search}. 
\end{table}

Recall the greedy matching procedure: consider the edges one by one in decreasing order of weights and add the current edge under consideration whenever both its endpoints are still available at that step of the algorithm. 
It is a folklore result that the greedy matching algorithm produces a $2$-approximate matching $M_g$ compared with the optimal algorithm, i.e., we have $w(M_{opt}) \leq 2 \cdot w(M_g)$ where $M_{opt}$ is the MWM on the input.
Note that both the greedy and the optimal matching (calculated using for instance the Hungarian algorithm) require to inspect the value of all the weights of the input to compute their solution.
The argument for the bounded approximation bound relies on two elements: (1) the order in which the greedy algorithm considers the edges (from largest to smallest weights) and (2) the fact that for each edge $e$ of $M_g$, if $e$ is not present in $M_{opt}$ then it may only be ``replaced'' by two other edges in $M_{opt}$, from which one deduces the approximation bound of $2$.
Our main contribution is to propose a generalization of the above argument to edge sets that are only \textit{partially ordered}, hence allowing to deduce approximation bounds using problem-dependent \textit{heuristic orders} on the vertex sets and this way avoiding to inspect the values of all the weights of the input.
In this work, we introduce the notion of ``order oracles'' (cf. Section~\ref{subsec:oracles}) that are capable to order nodes in specific orders concerning the weights of the edges in their neighborhood without requiring any computation of the edge weights.
This ordering assumption allows us to design efficient greedy algorithms with bounded approximation ratio and requiring to compute only up to $\mathcal{O}(n)$ weights when the vertices of each set are processed in a well-chosen heuristic order.
We summarize our main results in Table~\ref{tab:results}. (``Opt.'' is an optimal matching algorithm, ``Greedy'' refers to the classical greedy algorithm as aforedescribed, while the other algorithms are the ones developed and analyzed in this work. 
Parameters $\beta$, $\gamma$, $\gamma_\ell$, $\beta_\ell$ control the quality of the heuristic orders for processing of the nodes of the input sets $P$ and $C$, and are respectively specified in Assumption~\ref{assumption_beta},~\ref{assumption_gamma},~\ref{assumption_gamma_l} and~\ref{assumption_beta_l}.)

A short and preliminary version of our work appears in~\cite{duvignau2023greediness}. The present work extends~\cite{duvignau2023greediness} and lifts an additional simplifying assumption about the heuristic orders (i.e. that all order parameters $\beta$, $\gamma$, $\gamma_\ell$, $\beta_\ell$ are greater than one), adds a novel and more complex greedy procedure (Alg.~\ref{alg:double_greedy}) and its analysis, achieving a bounded-approximation using only weak orders and a linear number of weight queries, considerations on edge orders and instantiating order oracles as well as further details concerning extending our algorithms to the (bipartite) hypergraph matching problem.

\paragraph*{Motivations} 

In the context of Peer-to-Peer energy sharing~\cite{duvignau2021benefits}, the Geographical Peer Matching (GPM) problem is introduced in~\cite{duvignau2022efficient} to efficiently compute a matching of the peers targeting the maximization of a global objective (i.e., the total cost-savings).
It relies on both geographical information about the peers as well as their local matching preferences, and seeks to build an assignment of the peers into groups of size up to $k$ as advocated by the application.
Building on the discovery algorithms presented and analyzed in this work, we can obtain bounded-approximation algorithms for the GPM problem that run in linear time 
and use only a linear number 
of weight calculations, under certain assumptions occurring in practice (see \S~\ref{one-to-many-assignments:hypergraphs}).

\paragraph*{Related Work} 

Discovery algorithms have been studied in the literature for various problems on weighted graphs. However, as far as we are aware, they have not been investigated so far for the maximum-weight matching problem. 
For any optimization problem (a.k.a. maximization or minimization problems), considering that the solution of the discovery-variant of a given problem (i.e., assuming part of the input is obtained on the fly) is also a valid solution to the original problem where all inputs are provided at the start of the algorithm, the time complexity required to reach an optimal solution is always at least as large as the one for the original non-discovery problem.

Szepesvari \cite{Szepesvari04} introduced the {\em Shortest Path Discovery Problem} (SPDP), in which the task is to discover in a given edge-weighted graph a shortest path for fixed source and target nodes. An algorithm is proposed that is shown to use a small number of queries.
Experimental results on real-world instances are also presented.
Caro et al.~\cite{CaroDB20} generalize the SPDP to the {\em Optimal Path Discovery Problem}. First, they consider a broader class
of cost functions, and relax the constraint that an optimal path has to be discovered, allowing the discovered path to be an $\alpha$-approximation. Second, whereas in \cite{Szepesvari04} the performance of algorithms was measured with the number of queries, Caro et al.~\cite{CaroDB20} propose a more fine-grained performance measure, called the {\em query ratio}, i.e., the ratio between the number of queried edges and the least number of edge values required to solve the problem. They prove a $1+4/n-8/n^2$ lower bound on the query ratio and present an algorithm whose query ratio, when it finds the optimal path, is upper bounded by $2-1/(n-1)$, where $n = |V|$. Finally, they implement different algorithms and evaluate their query ratio experimentally.

Erlebach et al.~\cite{ErlebachHKMR08} consider the minimum spanning tree problem with {\em queryable uncertainty}. This concept refers to settings where the input of a problem is initially not known precisely, but exact information about the input can be obtained at a cost using queries. An algorithm with query ratio 2 is proposed in \cite{ErlebachHKMR08} for the minimum spanning tree problem, and it is shown that this query ratio is the best possible among deterministic algorithms. In \cite{ErlebachHK16}, the authors extend the framework to cheapest set problems with queryable uncertainty that englobe previously studied problems such as the minimum spanning tree, or the minimum matroid base problem under queryable uncertainty. For the cheapest set problems with queryable uncertainty, the authors present an algorithm that makes $d \cdot {\rm OPT} + d$ queries, where OPT is the optimal number of queries required to solve the problem and $d$ is the maximum cardinality of a feasible set in a given instance. An algorithm with query ratio 2 for the minimum matroid base problem is also provided in \cite{ErlebachHK16}. In~\cite{ErlebachHL23}, algorithms for uncertainty problems are studied in which parallel queries are allowed. Round-competitive algorithms are presented for sorting, selection, and for the minimum value problem. In \cite{Erlebach21}, a survey on models and algorithms for problems that access the input via queries can be found.

Another similar line of work considers the robust spanning tree problem with interval data. For a given graph with weight intervals specified for its edges, the goal is to compute e.g.~a spanning tree that minimizes the worst-case deviation from the minimum spanning tree (also called the {\em regret}), over all realizations of the edge weights. This is an off-line problem, and no query operations are involved. The problem is proved NP-hard in \cite{aron2004complexity} while a 2-approximation algorithm is given in \cite{kasperski2006approximation}. Further work has considered heuristics or exact algorithms for the problem, see e.g.~\cite{yaman2001robust}.

Regret minimization was also considered for other combinatorial optimization problems with interval data. Indeed, for problems in P (including the \textsc{Assignment} problem) there is a generic method to obtain constant approximations with respect to the regret~\cite{kasperski2006approximation}.
On the contrary, this was shown not to be true in general for NP-hard optimization problem, by Ganesh et al.~\cite{GaneshMP23}.
For that reason they developed novel techniques for regret minimization of NP-hard optimization problems, opening the door for a new and exciting research direction. 
The result is the first constant factor approximation algorithm for the robust setting of NP-hard optimization problems, including the classical problems \textsc{TSP} 
on metric graphs and \textsc{Steiner Tree}.

The \emph{network verification} problem is that of establishing the accuracy of a high-level description of its physical topology, by making as few measurements as possible on its nodes. This task can be formalized as a {\em Network Discovery} optimization problem that, given a graph and a query model specifying the information returned by a query at a node, asks for finding a minimum-size subset of nodes to be queried so as to univocally identify the graph. This problem has been studied with respect to different query models, assuming that a node has some global knowledge about the network~\cite{BampasBDGKP15,BeerliovaEEHHMR06,BiloEMW10,ErlebachHM07}.

\paragraph*{Plan} 
In Section~\ref{sec:oracles}, we define the assignment discovery problem, show its inherent challenges and hence the need for introducing order oracles to analyze the performance of discovery algorithms. 
In Section~\ref{sec:discovery_algorithms}, we present several greedy algorithms producing a matching without querying the totality of the weights, and analyze them relying on different assumptions about the order in which the nodes are processed in regard to the weights of the edges.
We further complement the section considering orders on edges and how to instantiate order oracles using interval weights or an approximation function in lieu of precise weights.
In Section~\ref{sec:hypergraph_matchings}, we present how our algorithms extend to one-to-many assignment problems, before concluding our work in Section~\ref{sec:conclusions}.

\section{Order Oracles for the Assignment Discovery Problem} \label{sec:oracles}

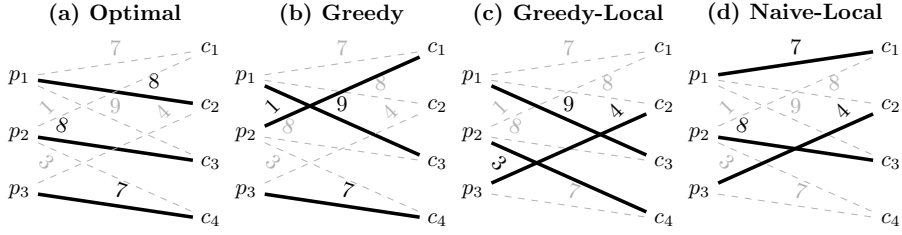
\begin{figure*}[t]
      \centering
      \resizebox{\linewidth}{!}{
\begin{tikzpicture}[yscale=0.45]
        \node (p1) at (0,0) {$p_1$};
        \node (c1) at (3,1) {$c_1$};
        \node (c2) at (3,-1) {$c_2$};
        \node (p2) at (0,-2) {$p_2$};
        \node (c3) at (3,-3) {$c_3$};
        \node (p3) at (0,-4) {$p_3$};
        \node (c4) at (3,-5) {$c_4$};
        \draw[-, dashed, lightgray] (p1) -- node[midway, above] {$7$} (c1);
        \draw[-, ultra thick] (p1) -- node[near end, above,font=\bfseries]  {$8$} (c2);
        \draw[-, dashed, lightgray] (p1) -- node[midway, above]  {$9$} (c3);
         \draw[-, dashed, lightgray] (p2) -- node[above, very near start, sloped] {$1$} (c1);
        \draw[-, ultra thick] (p2) -- node[above, very near start, sloped,font=\bfseries] {$8$} (c3);
         \draw[-, dashed, lightgray] (p2) -- node[below, very near start, sloped] {$3$} (c4);

         \draw[-, dashed, lightgray] (p3) -- node[above, very near end, sloped] {$4$} (c2);
         \draw[-, ultra thick] (p3) -- node[above, midway, sloped,font=\bfseries] {$7$} (c4);

         \node[above,font=\bfseries] at (current bounding box.north) {(a) Optimal};
    \end{tikzpicture}%
    \hfill%
    \begin{tikzpicture}[yscale=0.45]
        \node (p1) at (0,0) {$p_1$};
        \node (c1) at (3,1) {$c_1$};
        \node (c2) at (3,-1) {$c_2$};
        \node (p2) at (0,-2) {$p_2$};
        \node (c3) at (3,-3) {$c_3$};
        \node (p3) at (0,-4) {$p_3$};
        \node (c4) at (3,-5) {$c_4$};
        \draw[-, dashed, lightgray] (p1) -- node[midway, above] {$7$} (c1);
        \draw[-, dashed, lightgray] (p1) -- node[near end, above]  {$8$} (c2);
        \draw[-, ultra thick] (p1) -- node[midway, above,font=\bfseries]  {$9$} (c3);

         \draw[-, ultra thick] (p2) -- node[above, very near start, sloped, font=\bfseries] {$1$} (c1);
        \draw[-, dashed, lightgray] (p2) -- node[above, very near start, sloped] {$8$} (c3);
         \draw[-, dashed, lightgray] (p2) -- node[below, very near start, sloped] {$3$} (c4);

         \draw[-, dashed, lightgray] (p3) -- node[above, very near end, sloped] {$4$} (c2);
         \draw[-, ultra thick] (p3) -- node[above, midway, sloped,font=\bfseries] {$7$} (c4);

         \node[above,font=\bfseries] at (current bounding box.north) {(b) Greedy};
    \end{tikzpicture}%
    \hfill%
\begin{tikzpicture}[yscale=0.45]
        \node (p1) at (0,0) {$p_1$};
        \node (c1) at (3,1) {$c_1$};
        \node (c2) at (3,-1) {$c_2$};
        \node (p2) at (0,-2) {$p_2$};
        \node (c3) at (3,-3) {$c_3$};
        \node (p3) at (0,-4) {$p_3$};
        \node (c4) at (3,-5) {$c_4$};
        \draw[-, dashed, lightgray] (p1) -- node[midway, above] {$7$} (c1);
        \draw[-, dashed, lightgray] (p1) -- node[near end, above]  {$8$} (c2);
        \draw[-, ultra thick] (p1) -- node[midway, above,font=\bfseries]  {$9$} (c3);
        
         \draw[-, dashed, lightgray] (p2) -- node[above, very near start, sloped] {$1$} (c1);
        \draw[-, dashed, lightgray] (p2) -- node[above, very near start, sloped] {$8$} (c3);
         \draw[-, ultra thick] (p2) -- node[below, very near start, sloped,font=\bfseries] {$3$} (c4);

         \draw[-, ultra thick] (p3) -- node[above, very near end, sloped,font=\bfseries] {$4$} (c2);
         \draw[-, dashed, lightgray] (p3) -- node[above, midway, sloped] {$7$} (c4);

         \node[above,font=\bfseries] at (current bounding box.north) {(c) \greedylocal{}};
    \end{tikzpicture}%
    \hfill%
    \begin{tikzpicture}[yscale=0.45]
        \node (p1) at (0,0) {$p_1$};
        \node (c1) at (3,1) {$c_1$};
        \node (c2) at (3,-1) {$c_2$};
        \node (p2) at (0,-2) {$p_2$};
        \node (c3) at (3,-3) {$c_3$};
        \node (p3) at (0,-4) {$p_3$};
        \node (c4) at (3,-5) {$c_4$};
        \draw[-, ultra thick] (p1) -- node[midway, above,font=\bfseries] {$7$} (c1);
        \draw[-, dashed, lightgray] (p1) -- node[near end, above]  {$8$} (c2);
        \draw[-, dashed, lightgray] (p1) -- node[midway, above]  {$9$} (c3);
        
         \draw[-, dashed, lightgray] (p2) -- node[above, very near start, sloped] {$1$} (c1);
        \draw[-, ultra thick] (p2) -- node[above, very near start, sloped,font=\bfseries] {$8$} (c3);
         \draw[-, dashed, lightgray] (p2) -- node[below, very near start, sloped] {$3$} (c4);

         \draw[-, ultra thick] (p3) -- node[above, very near end, sloped,font=\bfseries] {$4$} (c2);
         \draw[-, dashed, lightgray] (p3) -- node[above, midway, sloped] {$7$} (c4);

         \node[above,font=\bfseries] at (current bounding box.north) {(d) \naivealgo{}};
    \end{tikzpicture}
    }
\caption{Examples of matchings: (a) Optimal with weight $23$, (b) Greedy with weight $17$, (c) \greedylocal{} (Alg.~\ref{alg:matching1}) with weight $16$, (d) \naivealgo{} (Alg.~\ref{alg:matching2}) with weight $19$; the $1$-\greedylocal{} (Alg.~\ref{alg:matching3}) algorithm outputs the matching (a) as well as the $1$-\doublegreedylocal{} (Alg.~\ref{alg:double_greedy}). Here, the strong and weak ordering assumptions hold with $\beta = 7/3$, $\gamma = 8$, $\beta_1 = 0$, $\gamma_1=3$ and $\gamma_2=0$.}
    \label{fig:examples}
\end{figure*}

\subsection{Preliminaries}


\begin{table*}[b]
    \caption{Symbols used throughout the paper.} 
    \label{tab:nomenclature}
    \centering
    \fbox{
    \resizebox{\textwidth}{!}{%
    \begin{tabular}{l l | l l}
        \textbf{Symbol} & \textbf{Definition} & \textbf{Symbol} & \textbf{Definition} \\
        \hline
        $G$ & input weighted graph $G =(P\cup C, E, w)$ & $V$ & a vertex set, $V = P \cup C$  \\ 
        $P$ & a set of “agents” & $s$ & size of $P$ \\
        $C$ & a set of “tasks” & $q$ & size of $C$ \\
        $E$ & a set of edges (allowed agent-task pairs) &  $m$ & size of $E$ \\
        $M$ & a matching of the edges & $n$ & maximum matching size, $n = \min\{s,q\}$ \\
        $w(e)$ & weight associated with edge $e \in E$ & $w(M)$ & weight of the matching $M$ \\ 
        $\sigma_P$ & an order over (or permutation of) $P$ &  $\sigma_C$ & an order over (or permutation of) $C$ \\
        $\mathcal{A}$ & a matching algorithm &  $\ell$ & a parameter of some matching algorithms \\
        $\beta, \beta_\ell$ & parameters associated with $\sigma_P$ &  $\gamma, \gamma_\ell$ & parameters associated with $\sigma_C$ \\
    \end{tabular}
    }
    }
\end{table*}

We adopt the following conventions for the notation used hereafter. Let $G = (P \cup C, E)$ denote a bipartite graph serving as our input instance; $P$, a set of ``agents'' to match with ``tasks'' with $\np = |P|$ the number of considered agents; $C$, a set of tasks with $\nc = |C|$;  $E \subseteq P \times C$ the set of possible edges with $(p,c) \in E$ if the task $c$ can be assigned to the agent $p$ and $m = |E|$; $w(e) \in \mathbb{R}^+$ for $e\in E$ is the weight of the edge $e$; $\minpc = \min \{|P|,|C|\}$ is the maximum size of a matching in $G$. 
We slightly abuse the $w$ notation so as to write $w(p,c)$ to denote as well the weight of the edge $(p,c)$ and $w(M)$ for the weight of the matching $M \subseteq E$, i.e.,  $w(M) = \sum_{e\in M} w(e)$.
Refer to Table~\ref{tab:nomenclature} for a quick reference to the definition of the symbols used throughout the paper.

An isolated edge is any edge without any adjacent edges in $G$, i.e., $e = (p,c)$ is isolated if $\lnot (\exists (p,c') \in E, c' \not= c \lor \exists (p',c) \in E, p' \not= p)$.
In the following, we assume that all weights are strictly positive as edges with negative or zero weight are assumed to be removed from the considered input graph.
The {\em query complexity} of an algorithm $\mathcal{A}$ is the number of weights $\mathcal{A}$ \textit{inspects}\footnote{We use hereafter interchangeably the terms \textit{inspect}, \textit{discover}, \textit{query} and \textit{compute} for the same action of checking the value of the weight $w(e)$ of one of the input edges $e \in E$. Because such weight can easily be memorized by the algorithm, we only account for the first inspection of $w(e)$.} in the {\em worst-case} in order to calculate its output. 
Examples of matchings are provided in Figure~\ref{fig:examples} with the discovery algorithms computing them being defined in Section~\ref{sec:discovery_algorithms}.

\begin{definition}
    For $\alpha \geq 1$, we refer for a given matching algorithm $\mathcal{A}$ as being {\em $\alpha$-approximate} if and only if for all possible inputs $G \in \mathcal{G}$ of $\mathcal{A}$, the output matching of  $\mathcal{A}$ denoted $M_{\mathcal{A}}(G)$ has weight at least $1/\alpha$ of the optimal matching $M_{opt}(G)$ of $G$, i.e., $\forall G \in \mathcal{G}, w(M_{opt}(G)) \leq \alpha \cdot M_{\mathcal{A}}(G)$.
\end{definition}

We have chosen a weighting function $w$ taking values in $\mathbb{R}^+$, however, all our results can be shown to also apply when $w$ is restricted to integer weights as our arguments only rely on weight orders and bounds rather than the actual values. Hence, if not explicitly stated, $w$ can be restricted to take only integer values.
To be more precise, all our algorithms work fine with integer weights but we have used rational weights for some graph instances within the lower bound arguments, hence a simple scaling of all weights will entail an argument that is valid for integer weights as well.

\subsection{The Need for Order Oracles} \label{subsec:oracles}

\paragraph{Lower bounds for the number of discovered edges}

We first show that, without additional assumptions, any algorithm requires in the worst case the computation of all possible weights in $G$ (discarding isolated edges) in order to reach a bounded approximation ratio. Note if $G$ has isolated edges, all such edges can be added safely without computing their weights and we can assume that the considered algorithms rather start with $G'$, the graph obtained by stripping all isolated edges from $G$.
Let us first observe the following simple result that restrains the edges that are not inspected when producing a matching for any input graph (not necessarily bipartite).

\begin{lemma} \label{claim:oracle_needed_lemma} 
 Let $\alpha \geq 1$. For any input graph $G$, any $\alpha$-approximate matching discovery algorithm $\mathcal{A}$ must include in its output matching all the edges of $G$ whose weight are never inspected by $\mathcal{A}$.
\end{lemma}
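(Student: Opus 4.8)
The plan is to argue by contradiction using an exchange/perturbation argument on the weights. Suppose $\mathcal{A}$ is $\alpha$-approximate but on some input $G$ it outputs a matching $M_{\mathcal{A}}(G)$ that omits an edge $e = (u,v)$ whose weight is never inspected by $\mathcal{A}$ on input $G$. The key observation is that $\mathcal{A}$, being a discovery algorithm, bases all of its decisions solely on the weights it has queried; hence on any other input $G'$ that agrees with $G$ on all edges other than $e$, the algorithm follows the exact same execution path, queries exactly the same set of weights (never $w(e)$), and therefore outputs the same matching $M_{\mathcal{A}}(G') = M_{\mathcal{A}}(G)$, which still omits $e$.

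First I would make this indistinguishability claim precise: since $\mathcal{A}$ is deterministic and never inspects $w(e)$ on $G$, its run is a function only of $w$ restricted to $E \setminus \{e\}$; so for the family of instances $G_t$ obtained from $G$ by replacing $w(e)$ with an arbitrary positive value $t$ (keeping the edge set $E$ fixed, so the combinatorial structure is unchanged), $\mathcal{A}$ behaves identically and outputs a fixed matching $M^\star := M_{\mathcal{A}}(G)$ with $e \notin M^\star$. Next I would exhibit a choice of $t$ that breaks the approximation guarantee. Let $M^\star$ have weight $W := w(M^\star)$ computed with the original weights (the value of $w(e)$ is irrelevant here since $e \notin M^\star$), and note $W$ is finite and fixed. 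Now pick $t$ large enough — concretely any $t > \alpha \cdot W$ suffices; if $W = 0$, i.e. $M^\star$ is empty, then since $e$ alone is a valid matching of positive weight, any $t>0$ already gives a contradiction, so take $t = 1$ in that degenerate case. On the instance $G_t$, the single edge $e$ forms a valid matching of weight $t$, so $w(M_{opt}(G_t)) \geq t > \alpha \cdot W = \alpha \cdot w(M^\star) = \alpha \cdot w(M_{\mathcal{A}}(G_t))$, contradicting $\alpha$-approximability of $\mathcal{A}$.

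To finish, I would note the one subtlety: the matching $M^\star$ output by $\mathcal{A}$ must indeed be a valid matching in $G_t$ (this is automatic since the edge set is unchanged), and $M^\star \cup \{e\}$ need not be a matching in general, but we do not need it to be — comparing against $\{e\}$ as a feasible solution is enough to lower-bound $w(M_{opt}(G_t))$. Hence no such uninspected, omitted edge $e$ can exist, proving that every $\alpha$-approximate discovery algorithm must include in its output all edges whose weights it never queries. The main obstacle is purely a matter of stating the indistinguishability argument carefully — that $\mathcal{A}$'s transcript, and thus its output, is genuinely a function of only the queried weights, so that varying $w(e)$ cannot change the output — rather than any nontrivial calculation.
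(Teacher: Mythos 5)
Your proof is correct and follows essentially the same route as the paper's: since $w(e)$ is never queried, the adversary can set it arbitrarily large (larger than $\alpha$ times the weight the algorithm accumulates), and the algorithm's output cannot change, contradicting $\alpha$-approximability. Your version merely makes the indistinguishability of the perturbed instance and the degenerate case $w(M^\star)=0$ explicit, which the paper leaves implicit.
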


\begin{proof} 
Let $\mathcal{A}$ be an $\alpha$-approximate matching discovery algorithm, $G = (V, E, w)$ be an input graph, and $M$ the matching computed by $\mathcal{A}$ over $G$.
Suppose there exists $e \in E$, $e \not \in M$ and $e$'s weight is never queried by $\mathcal{A}$ while calculating $M$.
Since $w(e)$ is not queried by $\mathcal{A}$, it can be arbitrarily large, as for example $w(e) = \alpha' \sum_{e' \in E \setminus \{e\}} w(e')$ with $\alpha' > \alpha$.
Hence, any matching of $G$ including $e$ is at least $\alpha'$ better than any matching not including it, implying $\mathcal{A}$ is not $\alpha$-approximate in this case.
\end{proof}

One may notice that if a matching discovery algorithm $\mathcal{A}$ greedily adds an edge $e$ to the output matching (after inspecting its weight or not), then the weights of all the \textit{adjacent} edges to $e$ (i.e. those edges sharing an endpoint with $e$) must have been inspected before adding the edge $e$ to the matching. This is due to the impossibility for $\mathcal{A}$ to add them later to the matching due to its greedy decision concerning $e$ and the previous lemma (an edge whose weight is unknown must appear in the final matching).
Furthermore, we can also deduce from the previous lemma the following result.

\begin{lemma} \label{claim:oracle_needed_lemma2}
For any input graph $G = (V, E, w)$ and $\alpha \geq 1$, any $\alpha$-approximate matching discovery algorithm $\mathcal{A}$ examines at least $|E|-\lfloor\frac{|V|}{2}\rfloor$ edges to compute its solution.
\end{lemma}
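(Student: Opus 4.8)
The plan is to obtain this as a direct corollary of Lemma~\ref{claim:oracle_needed_lemma}. Fix an $\alpha$-approximate matching discovery algorithm $\mathcal{A}$ and an arbitrary input $G = (V, E, w)$; let $M = M_{\mathcal{A}}(G)$ be the matching returned on this run and let $Q \subseteq E$ be the set of edges whose weight $\mathcal{A}$ inspects while computing $M$. By Lemma~\ref{claim:oracle_needed_lemma}, every edge of $E$ whose weight is never queried must appear in the output matching, i.e. $E \setminus Q \subseteq M$.

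The next step is to bound $|M|$ from above. Since $M$ is a matching in $G$, its edges are pairwise vertex-disjoint, so the edges of $M$ cover $2|M|$ distinct vertices of $V$; hence $2|M| \le |V|$, and since $|M|$ is an integer, $|M| \le \lfloor |V|/2 \rfloor$. Combining this with the inclusion above yields
\[
|E| - |Q| = |E \setminus Q| \le |M| \le \left\lfloor \tfrac{|V|}{2} \right\rfloor ,
\]
so $|Q| \ge |E| - \lfloor |V|/2 \rfloor$.

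Finally, since the weighting $w$ was arbitrary, this inequality holds for the run of $\mathcal{A}$ on \emph{every} input graph, and in particular it bounds the worst-case number of inspected edges from below, which is exactly the query complexity of $\mathcal{A}$. This gives the claimed bound of $|E| - \lfloor |V|/2 \rfloor$. There is essentially no obstacle in this argument: the only point worth stating explicitly is that $E \setminus Q$, being a subset of the matching $M$, is itself a matching and therefore contains at most $\lfloor |V|/2 \rfloor$ edges; everything else is immediate bookkeeping on top of the preceding lemma.
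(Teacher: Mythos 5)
Your proposal is correct and follows exactly the paper's own route: apply Lemma~\ref{claim:oracle_needed_lemma} to conclude that every uninspected edge lies in the output matching, then bound the matching size by $\lfloor |V|/2 \rfloor$ to get the lower bound on inspected edges. No gaps; the argument matches the paper's proof step for step.
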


\begin{proof} 
First note that for any matching $M$ of the edges of $G$, any edge $e \in M$ of the matching necessarily \textit{blocks} two vertices of $V$ from being used in the other edges of $M \setminus \{e\}$, thus $|M| \leq \lfloor\frac{|V|}{2}\rfloor$.
Now, applying Lemma~\ref{claim:oracle_needed_lemma}, any edge that is not inspected by $\mathcal{A}$ must also be included in the output matching $M_\mathcal{A}(G)$, hence at most $\lfloor\frac{|V|}{2}\rfloor$ edges are not inspected.

\end{proof}

\begin{corollary}
\label{cor:query_lower}
    If $|E| = \Omega(|V|^2)$, any bounded approximation algorithm must query $\Omega(|V|^2)$ edges of the input graph $G = (V, E, w)$.
\end{corollary}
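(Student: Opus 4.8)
The plan is to obtain the statement as an immediate consequence of Lemma~\ref{claim:oracle_needed_lemma2}. By definition, a \emph{bounded approximation algorithm} $\mathcal{A}$ is one that is $\alpha$-approximate for some finite constant $\alpha \geq 1$. Lemma~\ref{claim:oracle_needed_lemma2} then applies verbatim to $\mathcal{A}$ on the input $G = (V, E, w)$: the algorithm must inspect the weights of at least $|E| - \lfloor |V|/2 \rfloor$ edges in order to produce its output matching.

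The rest is asymptotic bookkeeping. The hypothesis $|E| = \Omega(|V|^2)$ provides a constant $c > 0$ and a threshold $n_0$ with $|E| \geq c\,|V|^2$ whenever $|V| \geq n_0$. Plugging this into the bound of Lemma~\ref{claim:oracle_needed_lemma2} yields a query count of at least $c\,|V|^2 - \lfloor |V|/2 \rfloor \geq c\,|V|^2 - |V|/2$. Since the subtracted term is linear in $|V|$, it is dominated by the quadratic term: for $|V| \geq \max\{n_0,\, 1/c\}$ one has $c\,|V|^2 - |V|/2 \geq (c/2)\,|V|^2$, so the number of queried edges is $\Omega(|V|^2)$, as claimed.

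There is no genuine obstacle here; the corollary is a direct restatement of Lemma~\ref{claim:oracle_needed_lemma2} in asymptotic language, and the only point worth emphasising is that the additive slack of $\lfloor |V|/2 \rfloor$ uninspected edges permitted by Lemma~\ref{claim:oracle_needed_lemma} is a strictly lower-order quantity once $|E|$ is quadratic in $|V|$. This is precisely why the dense regime $|E| = \Omega(|V|^2)$ is the threshold at which the trivial ``inspect every weight'' strategy becomes essentially unavoidable, motivating the order-oracle assumptions introduced in the sequel.
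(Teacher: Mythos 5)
Your proposal is correct and matches the paper's intent exactly: the corollary is stated as an immediate consequence of Lemma~\ref{claim:oracle_needed_lemma2}, with the lower bound $|E|-\lfloor|V|/2\rfloor$ becoming $\Omega(|V|^2)$ once $|E|=\Omega(|V|^2)$, which is precisely your asymptotic bookkeeping. No gaps; the argument is the same as the paper's.
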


For bipartite graphs of the form $G = (P \cup C, E)$, the lower bound on the number of queried edges may be slightly higher than in Lemma~\ref{claim:oracle_needed_lemma2}, as any edge of the matching eliminates both a node in $P$ and one in $C$, entailing $|M| \leq \min\{|P|,|C|\}$ and thus $m-n$ edges must be queried by any bounded approximation matching discovery algorithm.  
Since our target is to query at most a linear number of weights in $n$, there exists no such efficient discovery algorithm for general (bipartite) graphs. We show hereafter an even tighter lower bound on $m$ for some graph families. 
We clarify that the following result appears already in~\cite{duvignau2023greediness} but the proof arguments of~\cite{duvignau2023greediness} are only explicit when $G$ contains $2$ edges.

\begin{proposition} \label{claim:oracle_needed}
There exist unbounded graph families that do not admit a bounded-approximation algorithm $\mathcal{A}$ for maximum weight matching such that $\mathcal{A}$ queries strictly less than $m = |E|$ weights, for the input graph $G = (V,E,w)$.
\end{proposition}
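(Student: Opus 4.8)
The plan is to exhibit a concrete unbounded family of bipartite graphs on which \emph{no} bounded-approximation discovery algorithm can avoid querying even a single edge. The natural candidate is the family of stars: for each integer $k \geq 2$, let $G_k = (P_k \cup C_k, E_k)$ with $P_k = \{p\}$, $C_k = \{c_1, \dots, c_k\}$ and $E_k = \{(p,c_i) : 1 \leq i \leq k\}$, so that $m = |E_k| = k$ is unbounded (while $n = 1$). No edge of $G_k$ is isolated once $k \geq 2$, so the preprocessing that strips isolated edges leaves $G_k$ untouched and the statement is non-trivial. The goal is then to prove that for every finite $\alpha \geq 1$, any $\alpha$-approximate matching discovery algorithm $\mathcal{A}$ run on $G_k$ must inspect all $k$ edge weights.

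First I would set up an adaptive adversary and argue by contradiction. Assume $\mathcal{A}$ is $\alpha$-approximate but, on some execution over $G_k$, queries at most $k-1$ weights; have the adversary answer every query with the value $1$, and let $M$ be the matching returned. Since all edges of $G_k$ meet at $p$, $M$ has size at most one, and since $\mathcal{A}$ is $\alpha$-approximate and the optimum is positive, $M$ is nonempty. Because at least one edge $e$ was never inspected, Lemma~\ref{claim:oracle_needed_lemma} forces $e \in M$, whence $M = \{e\}$; in particular this already shows at most one edge can be left unqueried. Now let the adversary commit the final weights: $w(e) = \varepsilon$ for some $\varepsilon < 1/\alpha$ and $w(e') = 1$ for each of the remaining $k-1 \geq 1$ edges $e'$, which is consistent with all the answers given. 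Then $w(M_{opt}(G_k)) = 1$ (attained by any edge other than $e$) whereas $w(M) = \varepsilon$, so the ratio is $1/\varepsilon > \alpha$, contradicting $\alpha$-approximation. Hence $\mathcal{A}$ queries all $m = k$ edges of $G_k$; letting $k \to \infty$ gives the family claimed in the statement.

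The part that needs the most care is the reduction of the output to a single forced edge: it relies both on the structural observation that a star has no two independent edges (so $|M| \leq 1$) and on the ``unqueried edges must be output'' property of Lemma~\ref{claim:oracle_needed_lemma}, and one must check that the adversary's final assignment is consistent with every answer already returned. I would close by remarking that the argument uses nothing about stars beyond the fact that their edges pairwise intersect, so any unbounded family of intersecting-edge graphs (triangles, a star with one pendant edge, etc.) works equally well, and that this strictly strengthens Lemma~\ref{claim:oracle_needed_lemma2}, whose generic bound would here only guarantee $m-n = k-1$ queried edges.
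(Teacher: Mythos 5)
Your proof is correct and follows essentially the same route as the paper: the star family, Lemma~\ref{claim:oracle_needed_lemma} forcing the single unqueried edge into the output (which, on a star, is then the whole matching), and an adversarial weight assignment making that forced edge arbitrarily bad relative to the optimum. The only difference is presentational — you run an explicit adaptive adversary answering $1$ to every query and then shrink the unqueried edge to $\varepsilon<1/\alpha$, whereas the paper fixes the weights offline via the star's symmetry and instead makes one queried edge heavy ($\alpha'>\alpha$); both yield the same contradiction.
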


\begin{proof} 
We give the proof for $G$ being a star of $m \geq 2$ edges, i.e., $G = (P \cup C,E)$ with $P = \{p\}$ and $C = \{c_1, \dots, c_{\nc}\}$ such that $E = \{(p,c_j) \;|\; 1 \leq j \leq q\}$.

Suppose there exists an $\alpha$-approximate algorithm $\mathcal{A}$ that always avoids the computation of at least one weight $w(p,c) > 0$ for a non-isolated edge $(p,c)$ in a given input graph $G_w$, i.e. the star graph $G$ equipped with the weight function $w$.
Let $w(p,c_1) = \alpha'$ with $\alpha' > \alpha$ and $w(p,c_j) = 1$ for $2 \leq j \leq q-1$ such that those edges also match in the same order the edges whose weight is queried by $\mathcal{A}$ (because of symmetries in the star-graph, this order is only dependent on $\mathcal{A}$). Hence,  the edge $(p,c_{\nc})$ is the edge that is never discovered by $\mathcal{A}$.
By Lemma~\ref{claim:oracle_needed_lemma}, we know that $\mathcal{A}$ selects the edge $(p,c_{\nc})$ in its output. 
Hence, if e.g. $w(p,c_{\nc}) = 1$, the matching produced by $\mathcal{A}$ is less than $\alpha$ times the optimal, that selects $(p,c_1)$ here for a total weight of $\alpha' > \alpha$.

\end{proof}

\paragraph{Avoiding weight calculations}

The above proposition claims that there exist arbitrarily large instances where a bounded approximation algorithm has no other choice than inspecting the weight of all edges in the input graph. However, there also exist instances where a bounded approximation can be obtained without checking the weights of all non-isolated edges. Consider a path $P_4$ made of $4$ connected edges $e_1, e_2, e_3, e_4$ and the following algorithm. If $w(e_2) > w(e_3)$ then return $\{e_4, \argmax_{e \in \{e_1, e_2\} } w(e) \}$ else return $\{e_1, \argmax_{e \in \{e_3, e_4\} } w(e) \}$. The algorithm always skips the calculation of one of the weights, however, it always produce a solution that is $2$-approximate (simply because in each case, it has already accumulated at least half of the optimal without accounting for the non-queried edge).
Using a disjoint union of $P_4$ as input, one can show that it is possible to avoid at least $m/4 = \Omega(|E|)$ weight calculations in some graph instances. 

\paragraph{Order oracles} 

Proposition~\ref{claim:oracle_needed} essentially tells us that without additional assumptions, one may need to compute all $m'$ weights (where $m'$ is the number of non-isolated edges in the input). In this scenario, one can simply run the optimal algorithm on $G'$ and add all isolated edges afterwards which obviously produces the optimal solution for $G$. 
To circumvent the impossibility and aim to compute less than $m'$ weights, we assume that there exists an oracle that provides us with the vertices of $P$ and possibly of $C$ in an order $\sigma_P$ (or $\sigma_C$) which guarantees additional properties about the weights. The matching algorithm $\mathcal{A}$'s aim is to heuristically use $\sigma_P$ and $\sigma_C$ to avoid to query the weight of some of the edges.
More generally, one may assume that the oracle is powerful enough to provide the edges that the matching algorithm should consider in an order $\sigma_E$ over $E$ so that edges with higher weights are generally considered earlier on.
In such a case, observe that for any given matching algorithm $\mathcal{A}$, there exists an optimal order $\sigma_{\mathcal{A}}$ over $E$ that optimizes the weight of the matching produced by $\mathcal{A}$ (note, for some algorithms, all such orders may still produce the same result). 
Since our goal is to design efficient matching algorithms that minimize the number of weight calculations, we cannot assume that edges are processed by $\mathcal{A}$ in the order $\sigma_{\mathcal{A}}$ but rather the goal is to design an algorithm $\mathcal{A'}(\sigma_P,\sigma_C)$ which produces a matching of bounded approximation ratio given the oracle's orders, while calculating a hopefully limited number of weights.
Assuming there exist heuristic orders on $P$ and $C$ with interesting properties on the weight function stems from the settings of our original motivating problems of peer matching among energy communities~\cite{duvignau2022efficient}.
In the next section, we design greedy algorithms exploiting $\sigma_P$ and $\sigma_C$ and show their approximation ratio.
Our aim is to assume that $\sigma_P$ and $\sigma_C$ entail weak properties on the weights but strong enough to be able to reach a sub-quadratic number of weight calculations in $\minpc$ while keeping a bounded approximation ratio for the calculated matching.
Observe that order oracles that provide us with a total ordering of the edges are very strong, cf. \S~\ref{sec:edge_orders}.

\section{Discovery Algorithms for the One-to-One Assignment Problem} \label{sec:discovery_algorithms}

\subsection{Order Oracles for the Vertex Sets}

\subsubsection{The \greedylocal{} Algorithm}

\begin{figure*}[t]
      \centering
\begin{minipage}{0.47\linewidth}
\begin{algorithm}[H]
\footnotesize
\SetAlgoLined
\SetKwInOut{Input}{Input}
\SetKwInOut{Output}{Output}
\SetKwInOut{Upon}{Event}
\vspace{0.05cm}
\Input{
A bipartite graph $G = (P \cup C, E)$ with sets $P = p_1, p_2, \dots, p_{\np}$ and $C = c_1,c_2, \ldots, c_{\nc}$
}
\Output{$M$, a matching of $E$\;}
\vspace{0.05cm}
    \tcp{Initialization}
    $M \leftarrow \emptyset$ \; \label{alg1:line1}
    
    \ForEach{$j \in C$}{ 
        $\text{available}_j \leftarrow $ True \; \label{alg1:line3}
    }

    \tcp{Greedy Matching Loop}
    \For{$1 \leq i \leq \np$}{ \label{alg:loopstart}
            $N \leftarrow \{ 1 \leq j \leq \nc  \;|\; \{p_i,c_j\} \in E \land  \text{available}_j \}$ \;
            
            \If{$N \not= \emptyset$}
            {
                \eIf{$|N| > 1$}{
                    \ForEach{$j \in N$}{
                        $b_j \leftarrow$ \textsf{weight}$(p_i, c_j)$\;
                    }
                    $j \leftarrow \argmax_{j \in N} \; b_{j}$; $\quad$ \tcp{In the case of a tie, take the smallest index $j$} \label{line:argmax}
                }
                {
                    $j \leftarrow N[1]$;  \tcp{Retrieve the first value}
                }
                
                $\text{available}_j \leftarrow$ False \;
                $M \leftarrow M \cup \{p_i,c_j\}$\;
            }
        }
\Return $M$\;
\caption{\greedylocal{} Matching}
\label{alg:matching1}
\end{algorithm}
\end{minipage}%
\hfill%
\begin{minipage}{0.52\linewidth}
\begin{algorithm}[H]
\footnotesize
\SetAlgoLined
\SetKwInOut{Input}{Input/Output}

    \For{$1 \leq i \leq \np$}{ \label{alg:loopstart}
            $N \leftarrow \{ 1 \leq j \leq \nc  \;|\; \{p_i,c_j\} \in E \;\land\;  \text{available}_j\}$\;
            \If{$N \not= \emptyset$}
            {
                
                $j \leftarrow N[1]$ \;
                $\text{available}_j \leftarrow$ False\;
                $M \leftarrow M \cup \{p_i,c_j\}$\;
            }
        }
\Return $M$\;
\caption{\naivealgo{} Matching}
\label{alg:matching2}
\end{algorithm}
\vspace{-0.17cm}
\begin{algorithm}[H]
\footnotesize
\SetAlgoLined
\SetKwInOut{Input}{Input/Output}

    \For{$1 \leq i \leq \np$}{ \label{alg:loopstart}
            $N \leftarrow \{ 1 \leq j \leq \nc  \;|\; \{p_i,c_j\} \in E \land  \text{available}_j \}$ \;
            \If{$N \not= \emptyset$}
            {
                \eIf{$|N| > 1$}{
                    \tcp{Keep the $\ell+1$ first values}
                    $N \leftarrow N[:\ell+1]$ \;
                    \ForEach{$j \in N$}{
                        $b_j \leftarrow$ \textsf{weight}$(p_i, c_j)$\;
                    }
                    
                    $j \leftarrow \argmax_{j \in N} \; b_{j}$; \tcp{As in Alg.~\ref{alg:matching1}, line~\ref{line:argmax}}
                }
                {
                    $j \leftarrow N[1]$ \;
                }
                
                $\text{available}_j \leftarrow$ False ; $M \leftarrow M \cup \{p_i,c_j\}$\;
            }
        }
\Return $M$\;
\caption{\boundedgreedy{} Matching}
\label{alg:matching3}
\end{algorithm}
\end{minipage}
\begin{center}
\footnotesize All $3$ algorithms have the same input/output (as Alg.~\ref{alg:matching1}) and Alg.~\ref{alg:matching2} and Alg.~\ref{alg:matching3} start by the same initialization lines (\ref{alg1:line1}-\ref{alg1:line3}) as in Alg.~\ref{alg:matching1}.
\end{center}
\end{figure*}

Before studying more efficient discovery algorithms and to introduce important ordering assumptions and proof arguments, we first study the following simple greedy procedure Alg.~\ref{alg:matching1}: the vertices of the set $P$ are processed one by one in the oracle's order $\sigma_P$ where $\sigma_P$ was designed to have earlier vertices more likely to be associated with higher gains for a given task than vertices appearing later in the order. 
Each time a node is processed, its full neighborhood is examined and the available edge with highest weight is selected to be added to the matching.
In the following, during the round where $p\in P$ is considered, we refer to an edge $(p,c)$ as being \textit{available} if the endpoint $c \in C$ of the edge has not been previously \textit{blocked} by adding another edge $(p',c)$ to the matching at an earlier stage of the algorithm (which is greedy and never reconsiders previous choices).

We show that this \textit{\greedylocal{}} matching algorithm achieves a bounded approximation ratio if $\sigma_P = p_1, \dots, p_{\np}$ orders the vertices in $P$ such that for any $1 \leq i < j \leq \np$, the weight of $(p_j,c)$ is upper-bounded by $\beta$ times the weight of $(p_i,c)$, for any $c$ such that both $(p_i,c) \in E$ and $(p_j,c) \in E$. 
In Proposition~\ref{theorem_beta}, we show that Alg.~\ref{alg:matching1} produces a $(1+\beta)$-approximate matching under the aforementioned ordering assumption (Assumption~\ref{assumption_beta}, referred to in the following as ``$\beta-$strong $P$-order'').
Note that if $\beta \geq 1$, the approximation bound is weaker than the classical greedy matching which is $2$-approximate.
Also, whenever weights of the input graph may be equal to each other and $\beta \not= 0$, the ``best value'' that $\beta$ may take is $1$ (i.e.~$\beta \geq 1$ because any subsequent edge sharing an endpoint in $C$ with an edge being processed may have an equal or strictly smaller weight).
Observe that without any ordering assumptions, Alg.~\ref{alg:matching1} does not produce a bounded approximation in general as its greedy decisions do not take the ``future'' into consideration, hence adding the edge $(p_i,c)$ to the matching might remove the possibility to add a later-to-be-processed edge $(p_j,c)$, with $j > i$, and whose weight might be arbitrarily large.

\begin{assumption} \label{assumption_beta}
($\beta-$strong $P$-order)
Assume that $\beta \geq 0$ and $P$ is processed in the order $\sigma_P = p_1, p_2, \dots, p_{\np}$, so that for any $p_i,p_j \in P$ with $1 \leq i < j \leq \np$ and $c \in C$ such that $(p_i,c) \in E$ and $(p_j,c) \in E$, we have $ w(p_j, c) \leq \beta \cdot  w(p_i, c)$. 
\end{assumption}

We can remark here that assuming a $P$-order is a weaker assumption than in classical greedy ordering, in the sense that, it does not require a total order over all the edges of $G$. Indeed, the property is only \emph{local} to each node $c \in C$, for which we can bound the error of adding an early $(p_i,c)$ edge in the matching, without requesting the weight of the next $(p_{j},c)$ edges for $j>i$. All the considered orders in this work are thus only \textit{partial} edge orders, see also Remark~\ref{remark:comparable_edges}.

\begin{proposition} \label{theorem_beta}
Under $\beta$-strong $P$-order, Alg.~\ref{alg:matching1} has approximation ratio at most $1 + \beta$.
\end{proposition}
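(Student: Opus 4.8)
The plan is to prove the bound $w(M_{opt}) \le (1+\beta)\, w(M)$, where $M$ denotes the matching returned by Alg.~\ref{alg:matching1} and $M_{opt}$ is a maximum-weight matching of $G$, via a charging argument in the spirit of the folklore analysis of greedy matching, but exploiting that Assumption~\ref{assumption_beta} is only local to each $c \in C$.

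First I would fix $M_{opt}$ and assign to every edge $e^\star = (p_i,c) \in M_{opt}$ a ``witness'' edge $\phi(e^\star) \in M$, distinguishing two cases according to the state of $c$ at the iteration of the main loop that processes $p_i$. If $c$ is still available at that iteration, then $c \in N$, so $N \neq \emptyset$ and the algorithm necessarily adds some edge $(p_i,c') \in M$; moreover $(p_i,c')$ is the available edge of maximum weight incident to $p_i$ (or $c'=c$ when $|N|=1$), hence $w(p_i,c') \ge w(p_i,c) = w(e^\star)$, and I set $\phi(e^\star) = (p_i,c')$, calling this a ``$P$-charge''. If instead $c$ is not available at that iteration, then $c$ was blocked by an edge $(p_{i'},c)$ added to $M$ at some earlier iteration, so with $i' < i$; since $(p_{i'},c), (p_i,c) \in E$ and $i' < i$, Assumption~\ref{assumption_beta} gives $w(e^\star) = w(p_i,c) \le \beta \cdot w(p_{i'},c)$, and I set $\phi(e^\star) = (p_{i'},c)$, calling this a ``$C$-charge''. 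Thus $\phi$ is defined on all of $M_{opt}$.

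Next I would bound, for each edge $(p,c) \in M$, the total weight of the edges of $M_{opt}$ mapped onto it. A $P$-charge to $(p,c)$ can only come from an edge $e^\star \in M_{opt}$ incident to $p$; since $M_{opt}$ is a matching there is at most one such edge, and for it $w(e^\star) \le w(p,c)$. Likewise a $C$-charge to $(p,c)$ can only come from an edge of $M_{opt}$ incident to $c$, of which there is at most one, and for it $w(e^\star) \le \beta \cdot w(p,c)$. Hence $(p,c)$ absorbs at most $w(p,c) + \beta\cdot w(p,c) = (1+\beta)\, w(p,c)$ in total, and summing over $M$ yields $w(M_{opt}) = \sum_{e^\star \in M_{opt}} w(e^\star) \le \sum_{(p,c)\in M} (1+\beta)\, w(p,c) = (1+\beta)\, w(M)$, which is the claim.

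The main obstacle is the careful bookkeeping of the charging map rather than any hard estimate: one must verify that every edge of $M_{opt}$ really gets a witness (this is exactly why it matters that $N \neq \emptyset$ forces $p_i$ to be matched in the ``available'' case), and that no edge of $M$ is over-charged, i.e. that it can receive at most one $P$-charge and at most one $C$-charge, which follows purely from $M_{opt}$ being a matching. The boundary case $\beta = 0$ is automatically covered: the assumption then forces every $c \in C$ to have degree at most one (otherwise a later edge at $c$ would need nonpositive weight), the instance decomposes into disjoint stars, and the bound reads $w(M_{opt}) \le w(M)$, i.e. optimality of Alg.~\ref{alg:matching1}.
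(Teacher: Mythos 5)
Your proof is correct and follows essentially the same charging argument as the paper: your $P$-charges and $C$-charges correspond to the paper's Cases 2b/1 and 2a for its map $f:M_{opt}\to M$, with the same per-case bounds ($1$ and $\beta$) and the same counting via $M_{opt}$ being a matching. The only cosmetic difference is that you merge the ``edge kept by the algorithm'' case into the $P$-charge case, which slightly streamlines the bookkeeping.
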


\begin{proof}
Let $M$ be the matching obtained by an optimal algorithm and $M'$ the one by Alg.~\ref{alg:matching1}.
The main idea behind the proof is based on the fact that if an edge $e$ is present in an optimal matching $M$ but not in the matching $M'$ computed by our algorithm, it implies that there is at least one adjacent edge $e'\in M'$ that \emph{blocks} $e$ from being selected into $M'$. 
We further demonstrate that there are at most two such blocking edges for any non selected edge of $M$. 

Let $f: M \rightarrow M'$ be a function that projects the edges selected by the matching $M$ onto the edges of $M'$ defined as follows:
\begin{itemize}
    \item[(1)] For $e \in M$, if $e \in M'$, then $f(e) = e$.
    \item[(2)] For $e = (p_j,c) \in M$ and $e \not\in M'$, consider the two following cases.
    \begin{itemize}
    \item[(a)] At the beginning of $p_j$'s turn, $e$ was not selected in $M'$ because it was already \textit{blocked}. That is, $e$ was not among the available edges considered by Alg.~\ref{alg:matching1} during $p_j$'s turn, and since $p_j$ has not been assigned to any node in $C$ yet, that means there exists a blocking edge $(p_i,c) \in M'$ with $i < j$ that has been added to $M'$ before $p_j$'s round. Define $f(e) = (p_i,c)$ then. 
    \item[(b)] The complementary case is that $e$ was not selected in $M'$ during $p_j$'s turn but it was still available to pick (that is, $e$ was not blocked). In this situation, Alg.~\ref{alg:matching1} picks the edge with highest weight locally and since $(p_j,c) \not\in M'$ there must be another edge $(p_j,c') \in M'$  with $c' \not= c$ with a higher weight that has been selected instead. Define $f(e) = (p_j,c')$ in this case.
\end{itemize}
\end{itemize}


By exhaustion of possible cases, every edge of $M$ has an image in $M'$. We now prove that every edge $(p_i,c) \in M'$ has at most two preimages under the function $f$. If $(p_i,c) \in M$, the edge has only itself as preimage as this implies that there exist no edges in $M'$ such that $(p',c) \in M'$ with $p' \not= p_i$ nor $(p_i,c') \in M'$ with $c' \not= c$ as $M'$ is a matching of the edges; in this case, $f$ is prevented from applying Cases 2a and 2b and only Case 1 remains.
Now, consider $(p_i,c) \not\in M$. We show that there is only a single edge $e \in M$ such that Case 2a applies so that $f(e) = (p_i,c)$, and the same for Case 2b. For Case 2a to apply, $e$ must be of the form $(p_j,c)$ with $j > i$ and since $M$ is a matching there cannot exist another edge in $M$ containing node $c$. Similarly, for Case 2b to apply, we need to have $(p_i,c') \in M$ and for the same reason there cannot be another edge in $M$ sharing the node $p_i$.

Note that in Case 1, we have trivially $w(e) \leq w(f(e))$; in Case 2a, we have $w(e) \leq \beta \cdot w(f(e))$ by direct application of Assumption~\ref{assumption_beta}; in Case 2b, we have $w(e) \leq w(f(e))$ as the algorithm chooses $f(e)$ as the local maximum of unblocked edges and both $e$ and $f(e)$ are then unblocked. Hence, we can now bound the total weight of the matching $M$ by a sum of weights from edges of $M'$ as follows:

\vspace{-0.5cm}
\begin{equation} \label{eq:main_bound}
  \sum_{e \in M} w(e) \leq \! 
  \underbrace{\sum_{\substack{e \in M, \\ f(e) = e \\~\vspace{0.22cm}}} w(f(e))}_\text{Case 1}  + 
  \underbrace{\sum_{\substack{e=(p_j,c) \in M,\\ f(e) = (p_i,c),\\ i<j \vspace{0.07cm}}} \beta \cdot w(f(e))}_\text{Case 2a} + 
  \underbrace{\sum_{\substack{e=(p_j,c) \in M,\\ f(e) = (p_j,c'),\\ c \not= c'}} w(f(e))}_\text{Case 2b}.  
\end{equation}

As each edge $e' \in M'$ appears either only in the first sum, or at most once in each of the last previous two sums (see discussion above) and $1 + \beta \geq 1$, we get:
$$\sum_{e \in M} w(e) \leq \sum_{e' \in M' \; | \; \exists e \in M, f(e) = e'} (1 + \beta) \cdot w(e').$$

As all weights are greater than zero, we get at last: $$w(M) \leq \sum_{e' \in M'} (1 + \beta) \cdot w(e') = (1 + \beta) \cdot w(M').$$

\end{proof}

As a remark, one subtlety in the above proof is that a blocked edge $e$ may be \emph{lighter} (\ie{} of lower weight) than the blocking edge $e'$ it is mapped with. 
If $w(e) \leq w(e')$, one may wonder why $e$ is part of the optimal solution instead of $e'$. The intuition is that adding $e'$ is not necessarily a good \emph{global choice}. 
Indeed, if $e$ is in $M$ but not in $M'$, one can prove that there is a second edge $e''\in M$ which was not selected by $M'$ and with a cumulative weight $w(e)+w(e'')$ greater than $w(e')$ such that a) either the blocking edge $e'$ is also directly blocking $e''$, or b) $e''$ is at distance at most $2$ from $e$. If no such edge $e''$ exists in $M$, then one can freely swap $e'$ and $e$ in $M$ and improve the optimal matching.

Following Proposition~\ref{theorem_beta}, if $\sigma_P$ implies that for each $c \in C$, the local neighborhood of $c$ is \textit{totally ordered} by considering the edges in the order provided by $\sigma_P$, i.e., $\sigma_P$ is so that $\beta \leq 1$, then Alg.~\ref{alg:matching1} provides a better approximation than the usual greedy algorithm.

We note that this first algorithm may already reduce significantly the number of computed weights, as blocked edges as well as vertices left with a single available edge do not trigger weight computation during its execution. However, in the worst case, the algorithm does end up computing almost all weights in $G$. For instance if $n = \np = \nc$ and $G$ is the complete bipartite graph, $n + (n-1) + \dots + 2 = \frac{n(n+1)}{2}-1 = \Omega(n^2)$ weights are eventually calculated. Even worse, if one strips from the complete bipartite graph all edges that will eventually get blocked by the greedy choices, then a single weight calculation is actually saved.

\begin{remark}
There exist instances in which Alg.~\ref{assumption_beta} computes $\Omega(n^2)$ weights in the worst case.
\end{remark}

The example input given in Figure~\ref{fig:counter_example_beta} illustrates that there exist instances where Alg.~\ref{alg:matching1} reaches its proven approximation bound.

\begin{proposition} \label{counter_example_beta}
Under $\beta$-strong $P$-order, there exist instances where Alg.~\ref{alg:matching1} has an approximation ratio of at least $1 + \beta$.
\end{proposition}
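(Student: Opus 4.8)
The plan is to exhibit a concrete family of bipartite graphs, parametrized by $\beta$, on which Alg.~\ref{alg:matching1} is forced (by its greedy rule and the prescribed $P$-order) to pick the ``wrong'' edges so that the ratio $w(M_{opt})/w(M')$ approaches $1+\beta$. The natural building block is a small gadget that reproduces, at scale, the tightness of the bound in the proof of Proposition~\ref{theorem_beta}: namely a configuration in which a single greedy edge $e'\in M'$ simultaneously plays the role of a Case~2a blocker \emph{and} a Case~2b loser, so that two optimal edges — one of weight $\beta\cdot w(e')$ (the blocked one) and one of weight $w(e')$ (the one displaced by the local-max choice) — are both charged to $e'$.

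Concretely, I would take $P=\{p_1,p_2\}$ and $C=\{c_1,c_2\}$ with edges $(p_1,c_1)$, $(p_1,c_2)$, $(p_2,c_1)$. Set $w(p_1,c_2)=1+\varepsilon$ and $w(p_1,c_1)=1$, so that when $p_1$ is processed (first in $\sigma_P$), Alg.~\ref{alg:matching1} sees $N=\{c_1,c_2\}$, computes both weights, and by the local-argmax rule selects $(p_1,c_2)$. This blocks nothing useful but removes $c_2$; then $p_2$ has only $(p_2,c_1)$ available. Choose $w(p_2,c_1)=\beta$ (which is $\le \beta\cdot w(p_1,c_1)$, consistent with Assumption~\ref{assumption_beta} since $i=1<2=j$ and the shared task is $c_1$; one also checks $w(p_1,c_2)$ has no later competitor so it imposes no constraint). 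Then $M'=\{(p_1,c_2),(p_2,c_1)\}$ with $w(M')=1+\varepsilon+\beta$, while $M_{opt}=\{(p_1,c_1)\}\cup$? — here I need a slightly larger gadget so the optimum can take \emph{both} $(p_1,c_1)$ and a heavy $(p_2,\cdot)$-type edge; adding a third consumer or a third producer, I arrange $M_{opt}$ to have weight $\beta + (1+\varepsilon) + (\text{the }1\text{ from }(p_1,c_1))$ versus $M'$'s $(1+\varepsilon)+\beta$, so the ratio tends to $\dfrac{1+\varepsilon+2\beta}{1+\varepsilon+\beta}$ — not quite $1+\beta$. So the right scaling is instead to make the Case~2b-displaced edge itself heavy: take $w(p_1,c_2)=1$, $w(p_1,c_1)=1-\varepsilon$ displaced (so greedy picks $(p_1,c_2)$), and let the optimum pair $p_1$ with $c_1$ and use a separate heavy edge on $c_2$ of weight $\beta$ (from a new producer $p_0$ processed \emph{after} $p_1$, legal since $\beta\le\beta\cdot 1$). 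Summing one such gadget gives $w(M')=1$ (the one greedy edge of weight $1$) against $w(M_{opt})=(1-\varepsilon)+\beta\to 1+\beta$, i.e. ratio $\to 1+\beta$. I will present the version matching Figure~\ref{fig:counter_example_beta} verbatim and, if desired, take a disjoint union of $t$ copies to make the family unbounded in size while keeping the ratio fixed.

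The steps, in order: (i) define the gadget graph, the weight function, and the $P$-order $\sigma_P$; (ii) verify Assumption~\ref{assumption_beta} holds with the claimed $\beta$ by checking the only pairs $(p_i,p_j,c)$ with $i<j$ and $c$ shared; (iii) trace Alg.~\ref{alg:matching1} on $\sigma_P$ to pin down $M'$ exactly, emphasizing that at $p_1$'s turn the local-argmax rule prefers the light-to-be edge and thereby blocks the task the optimum wanted; (iv) exhibit the optimal matching $M_{opt}$ explicitly and compute $w(M_{opt})$; (v) compute the ratio, let $\varepsilon\to 0$ (and/or take $t$ disjoint copies), and conclude the ratio is at least $1+\beta$ — or exactly $1+\beta$ in the limit, so the bound of Proposition~\ref{theorem_beta} is tight.

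The main obstacle is the bookkeeping in step (iii) combined with step (ii): I must choose the gadget so that the single greedy edge truly absorbs \emph{both} charges from the proof of Proposition~\ref{theorem_beta} (the $\beta$-scaled Case~2a term and the full-weight Case~2b term) at their worst case \emph{simultaneously}, while still respecting Assumption~\ref{assumption_beta} — the $P$-order constraint limits how heavy later edges sharing a $C$-vertex may be, so the heavy optimal edges must sit on $C$-vertices that the greedy run has already vacated, and the producer owning such a heavy edge must come late enough in $\sigma_P$ that its weight $\le\beta$ times nothing binding. Once the gadget is correct this is a finite check; the only subtlety is handling the ``best value $\beta=1$'' degeneracy (equal weights) noted before the proposition, which I will accommodate by the $\varepsilon$-perturbation so the argmax tie-breaking rule (smallest index) does not accidentally rescue the algorithm.
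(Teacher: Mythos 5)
Your final gadget is exactly the paper's Figure~\ref{fig:counter_example_beta} construction (which you say you will present verbatim), verified against Assumption~\ref{assumption_beta} and traced through Alg.~\ref{alg:matching1} in the same way, so this is essentially the paper's own proof. One small remark: in the verbatim figure version the smallest-index tie-break already forces the bad pick $(p_1,c_1)$ and attains the ratio $1+\beta$ exactly, so the $\varepsilon$-perturbation (which only yields ratios approaching $1+\beta$) is not needed.
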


\begin{proof}
Let us consider the example input given by Figure~\ref{fig:counter_example_beta} with $P = \{p_1, p_2\}$ and $C = \{c_1, c_2\}$. Assumption~\ref{assumption_beta} holds on this input as we have $w(p_2,c_1) \leq \beta \cdot w(p_1,c_1)$ and this is the only pair of edges where it can apply.
Alg.~\ref{alg:matching1} selects as matching the pair $(p_1,c_1)$ as it is the local maximum of $p_1$ with weight $1$ (tying with $(p_1,c_2)$ and tie resolution favors $c_1$), to compare with the optimal matching which selects the two other edges with total weight $1+\beta$.

\end{proof}

We introduce the following ``$\gamma-$strong $C$-order'' as the symmetric assumption analogous to Assumption~\ref{assumption_beta} but reversing the sets $P$ and $C$.

\begin{assumption} \label{assumption_gamma}
($\gamma-$strong $C$-order)
Assume $\gamma \geq 0$ and that the set $C$ is provided in the order $\sigma_C =$ $c_1, c_2, \dots, c_{\nc}$, so that for any $c_i,c_j \in C$ with $1 \leq i < j \leq \nc$ and $p \in P$ such that $(p,c_i) \in E$ and $(p,c_j) \in E$, we have $ w(p, c_j) \leq \gamma \cdot  w(p, c_i)$. 
\end{assumption}

\begin{remark} \label{remark:gamma}
If one runs Alg.~\ref{alg:matching1} with input $G= (C\cup P, E)$, i.e.,  inverting the set $P$ and the set $C$ in its input, Assumption~\ref{assumption_gamma} entails that the output is a $(1+\gamma)$-approximation over $G = (P \cup C, E)$ by following Proposition~\ref{theorem_beta} with $\beta = \gamma$. Using the same inputs, a lower bound for the approximation ratio of $1+\gamma$ is also obtained by applying Proposition~\ref{counter_example_beta}.
\end{remark}

Observe that Alg.~\ref{alg:matching1} is not symmetric in $P$ and $C$ and the output that is produced in Remark~\ref{remark:gamma} is naturally different than the one using the original inputs.
Also, when both strong ordering assumptions hold, we can bound Case 2b in Equation~\ref{eq:main_bound} by $\gamma \cdot w(f(e))$; for $\gamma \geq 1$, this worsens the bound but for $\gamma < 1$, we obtain a ratio of $\max \{1, \beta+\gamma \}$ and the max is due to the case $\beta+\gamma$ being smaller than $1$ (i.e., the bound coming from Case 1 is worse then).

\begin{remark} \label{remark:min_matching} 
Under both assumptions $\beta-$strong $P$-order and  $\gamma-$strong $C$-order, for any $\beta > 0$ and $\gamma > 0$, Alg.~\ref{assumption_beta} reaches a $\min\{ 1+\beta, \max\{1, \beta+\gamma\} \}$ approximation.
\end{remark}

Let us note that, according to the example used in Proposition~\ref{counter_example_beta}, the above ratio is reached by Alg.~\ref{alg:matching1} if for example $\gamma > 1$.
By the precedent remark, whenever $\beta+\gamma < 1$, Alg.~\ref{alg:matching1} produces an optimal matching.
One may also deduce from the previous remark that running twice Alg.~\ref{assumption_beta}, first with $G_1 = (P\cup C, E)$ and then with $G_2 = (C\cup P, E)$, and keeping the matching whose weight is maximum produces a $\min\{1+\beta,1+\gamma,\max \{1, \beta+\gamma \}\}$-approximation.

\subsubsection{The \naivealgo{} Algorithm}

\medskip
\begin{figure*}[t]
      \centering
\begin{minipage}{0.32\linewidth}
    \centering
    \begin{tikzpicture}[xscale=2,yscale=0.4]
        \node (p1) at (0,0) {$p_1$};
        \node (c1) at (2,1) {$c_1$};
        \node (c2) at (2,-1) {$c_2$};
        \node (p2) at (0,-2) {$p_2$};
        \draw[-] (p1) -- node[above] {$1$} (c1);
        \draw[-] (p1) -- node[below, pos=0.75]  {$1$} (c2);
        \draw[-] (p2) -- node[below, pos=0.25]  {$\beta$} (c1);
    \end{tikzpicture}
    \caption{Example for Alg.~\ref{alg:matching1}.}
    \label{fig:counter_example_beta}
\end{minipage}%
\hfill%
\begin{minipage}{0.32\linewidth}
    \centering
    \begin{tikzpicture}[xscale=2,yscale=0.4]
        \node (p1) at (0,0) {$p_1$};
        \node (c1) at (2,0) {$c_1$};
        \node (c2) at (2,-2) {$c_2$};
        \node (p2) at (0,-2) {$p_2$};
        \draw[-] (p1) -- node[above,sloped] {$1$} (c1);
        \path[-] (p1) edge[bend left] node[below right]  {$\gamma$} (c2);
        \path[-] (p2) edge[bend left] node[below]  {$\beta$} (c1);
        \draw[-,transparent] (p2) -- node[below,sloped]  {$\varepsilon$} (c2);
    \end{tikzpicture}
    \caption{Example for Alg.~\ref{alg:matching2}.}
    \label{fig:counter_example_gamma}
\end{minipage}
\hfill%
\begin{minipage}{0.32\linewidth}
    \centering
    \begin{tikzpicture}[xscale=2,yscale=0.4]
        \node (p1) at (0,0) {$p_1$};
        \node (c0) at (2,-0.55) {$\vdots$};
        \node (c1) at (2,1) {$c_1$};
        \node (c2) at (2,-2) {$c_{\ell+2}$};
        \node (p2) at (0,-2) {$p_2$};
        \draw[-] (p1) -- node[above, near start] {$1$} (c1);
        \draw[-] (p1) -- node[above, very near end]  {$0.5$} (c0);
        \draw[-] (p1) -- node[above, very near end]  {$\gamma_\ell$} (c2);
        \draw[-] (p2) -- node[below, near start]  {$\beta$} (c1);
    \end{tikzpicture}
    \caption{Example for Alg.~\ref{alg:matching3}.}
    \label{fig:counter_example_betal}
\end{minipage}
\end{figure*}

Previously, the introduced strong ordering assumptions allow to make greedy choices during the processing of nodes by the matching algorithm, however, they do not always guarantee that one can omit the computation of the weight of a single edge of the input graph whenever the assumptions are used separately. 
For instance for Assumption~\ref{assumption_beta}, consider an arbitrarily large graph where each $p_i$, for $1 \leq i \leq \np$, is only connected to two nodes $c_{2i}$ and $c_{2i+1}$ and nothing else, hence omitting the computation of a single weight of the graph may lead to an unbounded approximation as the ordering assumption does not provide bounds on the omitted weight. 
Other problematic instances include star-graphs around a single node $p_1$ (as in Proposition~\ref{claim:oracle_needed}) where Assumption~\ref{assumption_beta} does not provide any constraints on the weights.
Observe that the same argument applies in a symmetric manner with Assumption~\ref{assumption_gamma}.

\begin{remark} \label{algo:infinity}
    By the above arguments and Proposition~\ref{claim:oracle_needed}, even under $\beta$-strong $P$-order (resp.~$\gamma$-strong $C$-order), there exist instances where for any algorithm $\mathcal{A}$ such that $\mathcal{A}$ omits at least the computation of one weight of the input, $\mathcal{A}$ does not produce a bounded approximation.
\end{remark}

However, if both previously introduced assumptions hold in the oracle's orders $\sigma_P$ and $\sigma_C$ simultaneously, then one can actually design an algorithm (Alg.~\ref{alg:matching2}) computing no weights at all but achieving a bounded approximation of the optimal matching. The algorithm simply picks at each step the edge made of the first available and selectable (i.e. having still unmatched neighbors) node $p$ in $\sigma_P$ order paired with the first available node in $p$'s neighborhood, according to $\sigma_C$ order.  Following a similar proof as in Proposition~\ref{theorem_beta}, one derives (Proposition~\ref{claim:beta_gamma}) that if both strong ordering assumptions hold, then Alg.~\ref{alg:matching2} produces a $\max \{ 1, \beta + \gamma \}$-approximate matching without calculating any weights of the input. Using the example of Figure~\ref{fig:counter_example_gamma}, we also show that any matching algorithm that calculates no weights (and in particular Alg.~\ref{alg:matching2}) cannot beat this approximation bound. 

\begin{proposition} \label{claim:beta_gamma}
Under both $\beta$-strong $P$-order and $\gamma$-strong $C$-order, Alg.~\ref{alg:matching2} outputs a $\max \{ 1, \beta + \gamma \}$-approximate matching without calculating any weights.
\end{proposition}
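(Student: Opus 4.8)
The plan is to adapt the argument behind Proposition~\ref{theorem_beta}, since \naivealgo{} (Alg.~\ref{alg:matching2}) is exactly \greedylocal{} (Alg.~\ref{alg:matching1}) with the weight-based tie-break among the available neighbours of the currently processed producer replaced by a weight-oblivious one, namely ``take the available neighbour that comes first in $\sigma_C$''. That Alg.~\ref{alg:matching2} queries no weights is immediate from its pseudocode, which never calls \textsf{weight}; so the whole content is the approximation ratio. I would fix an optimal matching $M$, let $M'$ be the output of Alg.~\ref{alg:matching2}, and reuse the projection $f : M \to M'$ from the proof of Proposition~\ref{theorem_beta}: $f(e)=e$ for $e \in M \cap M'$ (Case~1); and for $e = (p_j,c) \in M \setminus M'$, either $c$ is already blocked when $p_j$ is processed, in which case some $(p_i,c) \in M'$ with $i<j$ was added earlier and we set $f(e)=(p_i,c)$ (Case~2a), or $c$ is still available, in which case Alg.~\ref{alg:matching2} must have selected another edge $(p_j,c') \in M'$ with $c' \ne c$ and we set $f(e)=(p_j,c')$ (Case~2b). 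Exhaustion of cases shows $f$ is total on $M$ (when $c$ is an available neighbour of $p_j$ the set $N$ is nonempty, so $p_j$ is matched at its turn), and the counting argument that every $e' \in M'$ has at most two preimages --- at most one through Case~2a, at most one through Case~2b, or just itself through Case~1 --- goes through verbatim using only that $M$ is a matching.

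The step I expect to be the actual crux is bounding a Case~2b edge against its image, which in Proposition~\ref{theorem_beta} was trivial (the algorithm picked the locally heaviest available edge) but is not here, the algorithm being weight-oblivious. The key observation is that in Case~2b the selected edge $(p_j,c')$ has $c' = N[1]$, so $c'$ precedes $c$ in $\sigma_C$ (both $c$ and $c'$ being available neighbours of $p_j$ at that step); hence $\gamma$-strong $C$-order (Assumption~\ref{assumption_gamma}) yields $w(e) = w(p_j,c) \le \gamma \cdot w(p_j,c') = \gamma \cdot w(f(e))$. For Case~2a, since the producer blocking $c$ was processed before $p_j$ we have $i < j$, so $\beta$-strong $P$-order (Assumption~\ref{assumption_beta}) gives $w(e) \le \beta \cdot w(f(e))$; and $w(e) = w(f(e))$ in Case~1.

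Finally I would assemble these three bounds the same way as Equation~\ref{eq:main_bound}: summing $w(e)$ over $e \in M$ and regrouping by image, each $e' \in M'$ is charged its weight with coefficient either $1$ (if it is the image of a Case~1 edge, hence of nothing else) or at most $\beta + \gamma$ (if it is hit once via Case~2a and once via Case~2b). Since each coefficient is at most $\max\{1,\beta+\gamma\}$ and all weights are positive, this gives $w(M) \le \max\{1,\beta+\gamma\}\cdot w(M')$, the claimed ratio. The only subtlety worth flagging is that the $\max$ with $1$ is genuine: when $\beta+\gamma<1$ the binding term is the Case~1 contribution, for which no weight contraction is available, so one should not expect to improve on a factor $1$ there; the matching lower bound (instance of Figure~\ref{fig:counter_example_gamma}) is argued separately.
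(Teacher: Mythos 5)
Your proposal is correct and follows essentially the same route as the paper's own proof: reuse the projection $f$ and the two-preimage counting from Proposition~\ref{theorem_beta}, keep Cases~1 and~2a unchanged, and observe that in Case~2b the selected consumer precedes the optimal one in $\sigma_C$, so Assumption~\ref{assumption_gamma} gives the factor $\gamma$, yielding the $\max\{1,\beta+\gamma\}$ bound. Nothing to add.
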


\begin{proof}
The proof follows the same structure as the one of Proposition~\ref{theorem_beta}.
The difference is only that the \naivealgo{} algorithm assigns the first unblocked edge (in $C$'s provided order $\sigma_C$) to $p_j$ whereas the \greedylocal{} algorithm chooses the local maximum of the unblocked edges.
Hence, we can define similarly $f$ and we have again that any edge of $M'$ can only be the image by $f$ of at most two different preimages.
By using the same arguments, the same inequalities on weights hold for Cases 1 and 2a.
Observe now that in Case 2b with $e = (p_j,c_x) \in M$ and $f(e) = (p_j,c_y) \in M'$, we have $x > y$ as $c_y$ is chosen by $M'$ as the first available edge, hence, we have that $w(e) \leq \gamma \cdot w(f(e))$ following Assumption~\ref{assumption_gamma}. 

Summing the edges of $M$ with the three possible subcases, we get:
$$w(M) = \sum_{e \in M} w(e) \leq \! \underbrace{\sum_{\substack{e \in M, \\ f(e) = e \\~\vspace{0.22cm}}} w(f(e))}_\text{Case 1} + \underbrace{\sum_{\substack{e=(p_j,c) \in M,\\ f(e) = (p_i,c),\\ i<j ~\vspace{0.07cm}}} \beta \cdot w(f(e))}_\text{Case 2a} + 
\underbrace{\sum_{\substack{e=(p_j,c) \in M,\\ f(e) = (p_j,c'),\\ c \not= c'}} \gamma  \cdot w(f(e))}_\text{Case 2b}.$$
With analogous concluding arguments to the ones in the proof of Proposition~\ref{theorem_beta}, we get that each edge of $M'$ can either be also present in $M$ and has then a unique image by $f$, or appear at most once in each Cases 2a and 2b, entailing:
$$w(M) \leq \sum_{\substack{e' \in M' \\ \exists e \in M, f(e) = e'}} \max \{ 1, \beta + \gamma \} \cdot w(e') \leq \max \{ 1, \beta + \gamma \} \cdot w(M').$$
\end{proof}

As with Remark~\ref{remark:min_matching}, it is interesting to note that the previous proof also shows the optimality of the algorithm for some strong heuristic orders on the input nodes. 

\begin{remark} \label{remark:naive_optimal}
     Without computing any weights, the \naivealgo{} matching algorithm is optimal under $\beta$-strong $P$-order and $\gamma$-strong $C$-order whenever $\beta + \gamma \leq 1$.
\end{remark}

We can also show the above remark by a constructive, direct and more intuitive proof. First consider the following lemma:

\begin{lemma} \label{lemma:opt_partition}
Suppose a graph $G = (V,E,w)$ with $E = E_1 \cup E_2$ and $E_1 \cap E_2 = \emptyset$. Denote $M^1_{opt}$ the optimal matching over $G_1 = (V, E_1)$ and $M^2_{opt}$ the optimal matching over $G_2 = (V, E_2)$, and $M_{opt}$ the one over the full graph $G = (V,E)$.
Then $w(M_{opt}) \leq w(M^1_{opt}) + w(M^2_{opt})$.
\end{lemma}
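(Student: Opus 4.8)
The plan is to prove $w(M_{opt}) \leq w(M^1_{opt}) + w(M^2_{opt})$ by splitting the optimal matching $M_{opt}$ of the full graph according to which part of the edge partition each of its edges belongs to. Concretely, I would set $M^{(1)} = M_{opt} \cap E_1$ and $M^{(2)} = M_{opt} \cap E_2$. Since $E_1$ and $E_2$ are disjoint and together cover $E$, every edge of $M_{opt}$ lies in exactly one of these two sets, so $\{M^{(1)}, M^{(2)}\}$ is a partition of $M_{opt}$ and $w(M_{opt}) = w(M^{(1)}) + w(M^{(2)})$.

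The key observation is then that $M^{(1)}$ is itself a valid matching in the graph $G_1 = (V, E_1)$: it is a subset of $M_{opt}$, hence its edges are pairwise non-adjacent (a sub-collection of a matching is a matching), and all its edges lie in $E_1$ by construction. Therefore, by optimality of $M^1_{opt}$ over $G_1$, we have $w(M^{(1)}) \leq w(M^1_{opt})$. The identical argument applied to $M^{(2)}$ and $G_2$ gives $w(M^{(2)}) \leq w(M^2_{opt})$. Combining these two inequalities with the equality above yields
\[
w(M_{opt}) = w(M^{(1)}) + w(M^{(2)}) \leq w(M^1_{opt}) + w(M^2_{opt}),
\]
which is exactly the claimed bound.

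This argument requires essentially nothing beyond the definitions, so there is no real obstacle; the only point worth stating carefully is that restricting a matching to a subset of edges still produces a matching (no two edges share an endpoint), which is immediate. One might also remark — though it is not needed for the statement — that the inequality can be strict, since $M^1_{opt}$ and $M^2_{opt}$ need not be compatible with one another (their union can fail to be a matching), so one generally cannot reconstruct an optimal matching of $G$ by optimizing each part independently; the lemma only gives the stated one-directional bound, which is precisely what is used afterwards to analyze the approximation guarantee via the edge-partition induced by the algorithm's choices.
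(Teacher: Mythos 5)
Your proof is correct and follows exactly the paper's argument: split $M_{opt}$ into $M_{opt} \cap E_1$ and $M_{opt} \cap E_2$, observe each piece is a valid matching in the corresponding subgraph, and bound each by the respective optimum. Nothing to add.
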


\begin{proof}
Split the edges of $M_{opt}$ into two subsets $M_1$ and $M_2$ according to the edge partition of $G$, i.e. $M_1 = M_{opt} \cap E_1$ and $M_2 = M_{opt} \cap E_2$.
We have $w(M_{opt}) = w(M_1) + w(M_2)$ and since $M_i$ is a valid matching over $G_i$, we have $w(M_i) \leq w(M^i_{opt})$ thus $w(M_{opt}) \leq w(M^1_{opt}) + w(M^2_{opt})$.

\end{proof}

Now observe that, whenever $\beta+\gamma < 1$, the edge $e$ that is greedily selected by Alg.~\ref{alg:matching2} is optimal in its ``neighborhood'' $N_e$ (that is all possible paths of $3$ edges with $e$ in central position). Hence, applying Lemma~\ref{lemma:opt_partition} with $E_1=N_e$ and $E_2=E \setminus N_e$, one can show by induction that the \naivealgo{} matching algorithm is optimal in this case.

Also, one may note that adding both strong order assumptions with $\beta < 1$ and $\gamma < 1$ gives a strict total order on each of the neighborhoods, for all nodes in $P$ and in $C$. However, it is noteworthy to mention that even in this situation with strong starting assumptions, the edge ordering is still \textit{partial} and ``weaker'' than a total edge ordering (which is required by the classic $2$-approximate greedy algorithm that scans all the edges in decreasing weight order), as stated in the following remark.

\begin{remark} \label{remark:comparable_edges}
    Even under $\beta$-strong $P$-order and $\gamma$-strong $C$-order and if both $\beta < 1$ and $\gamma < 1$, there exist pairs of edges for some input graphs that are incomparable before requesting the weight of the respective edges.
\end{remark}

In particular, one can consider any pair of edges not sharing any endpoint and such that each edge of the pair is the first of its neighborhood for both its endpoints, then for both considered edges, their respective weight is entirely unbounded by the ordering assumptions. Thus, under both strong order assumptions (but that do not enforce a total edge ordering), the aforedefined naive ``no weight calculations'' algorithm outputs a matching with a better approximation guarantee than the usual greedy algorithm. 

At last, observe that when $\gamma<1$, all edges $(p,c_i)$ are ordered by decreasing weights, implying that the first available edge in the provided $C$-order is also the local maximum according to $p$. Hence Alg.~\ref{alg:matching2} is equivalent to Alg.~\ref{alg:matching1} in this situation. Following the result of Proposition~\ref{theorem_beta}, we may observe the following.

\begin{remark} \label{remark:degenerate_case} 
If both $\beta$-strong $P$-order and $\gamma$-strong $C$-order assumptions hold and $0 \leq \gamma <1$, Alg.~\ref{alg:matching2} produces the same matching as Alg.~\ref{alg:matching1} without calculating any weights.
\end{remark}

At last, we note that there cannot exist a better algorithm than Alg.~\ref{alg:matching2} in terms of approximation ratio when no weights are accessed.

\begin{proposition} \label{prop:counter_example_alg2}
Under both $\beta$-strong $P$-order and $\gamma$-strong $C$-order, any matching discovery algorithm that calculates 0 weights cannot be better than ($\beta+\gamma$)-approximate.
\end{proposition}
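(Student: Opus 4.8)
The plan is to produce, for an \emph{arbitrary} matching discovery algorithm $\mathcal{A}$ that never inspects a weight, one fixed input meeting both Assumptions~\ref{assumption_beta} and~\ref{assumption_gamma} on which $\mathcal{A}$'s matching is off by a factor at least $\beta+\gamma$. The leverage is that, since $\mathcal{A}$ queries no weights, its run — and hence its output matching — on a given graph $G$ equipped with fixed orders $\sigma_P,\sigma_C$ is completely determined by this combinatorial data and is independent of the weight function $w$. So the adversary may first observe which matching $\mathcal{A}$ commits to, and only afterwards pick the weights.

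Concretely, I would take the graph underlying Figure~\ref{fig:counter_example_gamma}: $P=\{p_1,p_2\}$, $C=\{c_1,c_2\}$, and $E=\{(p_1,c_1),(p_1,c_2),(p_2,c_1)\}$ (the path $c_2-p_1-c_1-p_2$; equivalently one may keep the fourth edge $(p_2,c_2)$ with an arbitrarily small weight $\varepsilon$), together with $\sigma_P=p_1,p_2$ and $\sigma_C=c_1,c_2$. Its only maximal matchings are $A=\{(p_1,c_1)\}$ and $B=\{(p_1,c_2),(p_2,c_1)\}$, and every matching of $G$ is a subset of $A$ or of $B$. Writing $a=w(p_1,c_1)$, $g=w(p_1,c_2)$, $b=w(p_2,c_1)$, I would first record that Assumption~\ref{assumption_beta} applied to the common neighbour $c_1$ of $p_1,p_2$ forces $b\le\beta a$ and Assumption~\ref{assumption_gamma} applied to the common neighbour $p_1$ of $c_1,c_2$ forces $g\le\gamma a$, so that $w(B)=g+b\le(\beta+\gamma)\,w(A)$ — while conversely $w(B)$ can be made as small as we like relative to $w(A)$.

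The argument then splits on $\mathcal{A}$'s committed output $M$. If $M\subseteq A$ (so $w(M)\le a$), set the weights to saturate the two assumptions: $a=1$, $g=\gamma$, $b=\beta$; this is a legal instance, the optimum is $w(M_{opt})=\max\{1,\beta+\gamma\}$ realised by $B$, and since $w(M)\le 1$ the approximation ratio is at least $\beta+\gamma$. Otherwise $M\subseteq B$ (so $w(M)\le g+b$); then set $a=1$ and $g=b=\delta$ for a small $\delta\in(0,\min\{\beta,\gamma,1/(2(\beta+\gamma))\}]$, which is again legal, and here $w(M_{opt})\ge w(A)=1$ while $w(M)\le 2\delta\le 1/(\beta+\gamma)$, so the ratio is at least $\beta+\gamma$ (in fact unbounded as $\delta\to0$). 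Either way $\mathcal{A}$ fails to be better than $(\beta+\gamma)$-approximate on a valid input. The case $M=\emptyset$ gives a trivially unbounded ratio, and the degenerate values $\beta=0$ or $\gamma=0$ need a separate (even simpler) instance or follow from Remark~\ref{remark:naive_optimal}, so it is natural to assume $\beta,\gamma>0$.

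I do not expect a real obstacle. The care points are (i) making the informal ``zero queries $\Rightarrow$ output is weight-independent'' precise enough to license choosing the weights after $\mathcal{A}$'s move — noting that $\mathcal{A}$ is deterministic on $(G,\sigma_P,\sigma_C)$, so its output matching is well defined before any weight is revealed; and (ii) the routine but necessary check that \emph{both} adversarial weight functions genuinely satisfy Assumptions~\ref{assumption_beta} and~\ref{assumption_gamma} — which is exactly why the first weighting uses equality in the bounds and the second keeps $\delta$ below both $\beta$ and $\gamma$.
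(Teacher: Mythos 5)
Your proposal is correct and follows essentially the same route as the paper: it uses the same path instance from Figure~\ref{fig:counter_example_gamma} and the same adversary argument that, since zero weights are queried, the output is fixed by $(G,\sigma_P,\sigma_C)$, then case-splits on whether the algorithm outputs (a subset of) $\{(p_1,c_1)\}$ or of the two-edge matching, choosing weights afterwards to defeat it in either case. Your handling of subsets, the positivity of weights, and the degenerate $\beta=0$ or $\gamma=0$ cases is slightly more explicit than the paper's, but the core argument is the same.
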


\begin{proof}
Let us consider the 4-nodes instance given by Figure~\ref{fig:counter_example_gamma}. 
Given the provided ordering of vertices in $P$ and $C$, we have that both Assumptions~\ref{assumption_beta} and \ref{assumption_gamma} hold on the instance.
Obviously, any algorithm cannot provide better than a $1$-approximation so let us assume $\beta + \gamma \geq 1$.
Note first that the Naive-Local matching on this instance produces $\{(p_1,c_1)\}$ with weight $1$ whereas the optimal picks the two other edges with weight $\beta + \gamma$.
Now, consider a matching algorithm $\mathcal{A}$ that picks $(p_1,c_2)$ and $(p_2,c_1)$. In that case, change the instance so that $w(p_1,c_1) = \alpha$ with $\alpha$ arbitrarily large and all other weights set to $1$ to simplify (note that both our underlying assumptions still hold in this situation as well). $\mathcal{A}$ is then arbitrarily far from the optimal matching that selects $(p_1,c_1)$.

\end{proof}

\subsubsection{The \boundedgreedy{} Algorithm}

\medskip
Our first results show that the first set of assumptions that was considered may be unsatisfactory for two reasons: either one of the assumptions holds and all weights may end up being computed or both assumptions hold at the same time and absolutely no weight calculations are required to reach a bounded approximation ratio. This may indicate that the assumptions could be too strong in some sense. We design here weaker assumptions that only require the condition on one set to hold (e.g., Assumption~\ref{assumption_beta}) and a weaker and \textit{more local} form of the other assumption: the bound holds between node $p \in P$ and $c,c' \in C$ if there exist at least $\ell$ other neighbors of $p$ between $c$ and $c'$ when taken in $\sigma_C$ order. That is, we do not control the weight of successive edges in a given node's neighborhood but if there are $\ell$ other edges $(p,c_j)$ between two edges $(p,c_0)$ and $(p,c_{\ell+1})$, then the latter one must have a bounded weight in comparison to $(p,c_0)$. The following assumption allows us to design a matching algorithm (Alg.~\ref{alg:matching3}) requiring only at most $\ell + 1$ weight computations for each node in $P$. 

\begin{assumption} \label{assumption_gamma_l}
($\gamma_\ell$-$\ell$-weak $C$-order)
Assume $\ell \geq 0$, $\gamma_\ell \geq 0$ and $\sigma_C = c_1, c_2, \dots, c_{\nc}$, so that for any $c_i,c_j \in C$ with $1 \leq i < j \leq \nc$
and $p \in P$ such that $(p,c_i),(p,c_j) \in E$ and $|\{(p,c_x) \in E \; | \; i < x < j \}| \geq \ell$, we have $w(p, c_j) \leq \gamma_\ell \cdot  w(p, c_i)$.
\end{assumption}

In the above assumption, smaller values for $\ell$  make the assumption stronger, with $\ell = 0$ being equivalent to $\gamma$-strong $C$-order (i.e. Assumption~\ref{assumption_gamma} with $\gamma = \gamma_0$) and $\ell = \Delta(G_C)-1 = \max_{c \in C} \delta(c)-1$ with $\delta(c)$ the degree of node $c$ (number of edges in $E$ with $c$ as endpoint) being always true for any input graph $G = (P \cup C, E)$.
For fixed processing orders on the nodes, the value of $\gamma_\ell$ decreases as $\ell$ increases and reaches its ``(potentially non-zero) minimum'' at $\Delta(G_C)-2$ (after which $\gamma_\ell = 0$ as the bound requirement does not apply to any pair of edges).
Introducing a weak order allows to add weaker constraints on the edge weights than the ones implied by strong orders. However, obviously weak orders for $\ell \geq 1$ do not help when no weights are ever computed as they do not provide bounds for some edges sharing endpoints (hence any choice between the two may entail an arbitrarily large error). 
For instance, using the example of Figure~\ref{fig:counter_example_gamma} under $\gamma_1$-$1$-weak $C$-order, $w(p_1,c_2)$ can take arbitrarily large values and  Alg.~\ref{alg:matching2} selects $(p_1,c_1)$ on this instance.

\begin{remark} \label{remark:alg2_infinity}
Under both $\beta$-strong $P$-order and $\gamma_\ell$-$\ell$-weak $C$-order (resp. $\gamma$-strong $C$-order and $\beta_\ell$-$\ell$-weak $P$-order), there are instances where Alg.~\ref{alg:matching2} has infinite approximation ratio.
\end{remark}


Let us show how we design an efficient discovery algorithm (Alg.~\ref{alg:matching3}) by exploiting the assumption of a strong order $\sigma_P$ over one partition and a weak order $\sigma_C$ on the other one. The algorithm we introduce is similar in flavor to the first defined algorithm, but this time, instead of taking the edge with maximum weight over the full neighborhood of $p_i$, only the $\ell+1$ first available edges according to $\sigma_C$ are considered. 

\begin{proposition} \label{prop_gamma_l} 
Under both $\beta$-strong $P$-order and $\gamma_\ell$-$\ell$-weak $C$-order, Alg.~\ref{alg:matching3} has approximation ratio at most $\max\{1+\beta, \beta+\gamma_\ell \}$.
\end{proposition}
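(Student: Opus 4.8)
The plan is to adapt the projection argument from Proposition~\ref{theorem_beta} and Proposition~\ref{claim:beta_gamma}, but being careful about how Alg.~\ref{alg:matching3} restricts attention to the first $\ell+1$ available neighbours. Let $M$ be an optimal matching and $M'$ the matching returned by Alg.~\ref{alg:matching3}. As before, I define a map $f : M \to M'$ and show each edge of $M'$ has at most two preimages, one from each of two cases. Case~1 is the identity on edges lying in both matchings. For $e = (p_j,c) \in M \setminus M'$, I split into two subcases exactly as in the earlier proofs: (2a) at the start of $p_j$'s turn the endpoint $c$ was already blocked by some $(p_i,c) \in M'$ with $i < j$, in which case $f(e) = (p_i,c)$; and (2b) the endpoint $c$ was still available when $p_j$ was processed but the algorithm picked a different edge $(p_j,c') \in M'$, in which case $f(e) = (p_j,c')$. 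The same matching-based counting shows no edge of $M'$ receives more than one preimage from~(2a) (a $C$-endpoint is used once in $M$) and more than one from~(2b) (a $P$-endpoint is used once in $M$), so the two-preimage bound holds verbatim.

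The weight bounds for Cases~1 and~2a are unchanged: Case~1 gives $w(e) \le w(f(e))$ trivially, and Case~2a gives $w(e) \le \beta \cdot w(f(e))$ by $\beta$-strong $P$-order (Assumption~\ref{assumption_beta}), since $e=(p_j,c)$, $f(e)=(p_i,c)$ share the endpoint $c$ with $i<j$. The crux is Case~2b, where the earlier argument ``the algorithm picks the local maximum, hence $w(e)\le w(f(e))$'' no longer applies directly, because Alg.~\ref{alg:matching3} only maximises over the first $\ell+1$ available neighbours of $p_j$ in $\sigma_C$ order. I would argue as follows. Let $e = (p_j,c_x) \in M$ and $f(e) = (p_j,c_y) \in M'$. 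If $c_x$ was among the $\ell+1$ first available neighbours of $p_j$ considered by the algorithm, then $f(e)$ being the argmax over that set gives $w(e) \le w(f(e))$, exactly as in Case~2b of Proposition~\ref{theorem_beta}. Otherwise $c_x$ was available at $p_j$'s turn but came strictly after all $\ell+1$ edges the algorithm inspected; in particular $c_y$ is one of those $\ell+1$ edges, and there are at least $\ell$ available neighbours of $p_j$ strictly between $c_y$ and $c_x$ in $\sigma_C$ order (indeed, the $\ell$ other inspected edges, which lie before $c_x$ and, being inspected, are distinct from $c_y$ — here one needs to note all inspected edges precede $c_x$ in the order and are available). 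Hence by $\gamma_\ell$-$\ell$-weak $C$-order (Assumption~\ref{assumption_gamma_l}) with $p=p_j$, $c_i=c_y$, $c_j=c_x$, we get $w(e) = w(p_j,c_x) \le \gamma_\ell \cdot w(p_j,c_y) = \gamma_\ell \cdot w(f(e))$.

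Combining the three cases, every Case~2b edge satisfies $w(e) \le \max\{1,\gamma_\ell\} \cdot w(f(e))$, so summing over $M$ and using that each $e' \in M'$ appears once in Case~1, or at most once each in Cases~2a and~2b, yields
\begin{equation*}
w(M) \;=\; \sum_{e \in M} w(e) \;\le\; \sum_{\substack{e' \in M' \\ \exists e \in M,\, f(e)=e'}} \max\{1,\; \beta + \max\{1,\gamma_\ell\}\} \cdot w(e') \;\le\; \max\{1+\beta,\; \beta+\gamma_\ell\} \cdot w(M'),
\end{equation*}
where the outer maximum collapses because $\beta + \max\{1,\gamma_\ell\} = \max\{\beta+1, \beta+\gamma_\ell\} \ge 1$, which dominates the lone Case~1 coefficient $1$. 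This gives the claimed ratio $\max\{1+\beta,\beta+\gamma_\ell\}$.

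The main obstacle I anticipate is making the Case~2b sub-argument fully rigorous: I need to verify carefully that when $c_x$ is \emph{not} among the inspected $\ell+1$ neighbours, the $\ell+1$ inspected edges are genuinely available, genuinely precede $c_x$ in $\sigma_C$, and include $c_y$, so that the counting ``$\ge \ell$ neighbours strictly between $c_y$ and $c_x$'' needed to invoke Assumption~\ref{assumption_gamma_l} is exact (the edge $c_y$ itself is one of the $\ell+1$, leaving $\ell$ others, all strictly between $c_y$ and $c_x$ only if $c_y$ happens to be the first among them — if not, some inspected edges precede $c_y$, but then there are still at least $\ell$ total inspected edges, and the ones after $c_y$ plus possibly re-indexing still give the bound; I would phrase it as ``at least $\ell$ edges of $E$ incident to $p_j$ lie strictly between $c_y$ and $c_x$'' and double-check this holds because all $\ell+1$ inspected neighbours precede $c_x$ and one of them is $c_y$). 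A clean way to sidestep the subtlety is to observe that $c_y$, being the argmax, may be assumed without loss to be the last-inspected index is false in general — so instead I will simply note that the set of inspected neighbours has size $\ell+1$, all precede $c_x$, and contains $c_y$; removing $c_y$ leaves $\ell$ neighbours of $p_j$ each distinct from both $c_y$ and $c_x$ and each preceding $c_x$, but not necessarily between $c_y$ and $c_x$. To repair this I would instead take $c_i$ in Assumption~\ref{assumption_gamma_l} to be the \emph{last} inspected neighbour before $c_x$ (call it $c_z$), so that the remaining $\ell$ inspected neighbours all lie strictly between $c_z$ and $c_x$, giving $w(p_j,c_x) \le \gamma_\ell \cdot w(p_j,c_z) \le \gamma_\ell \cdot w(p_j,c_y)$ since $c_y = \operatorname{argmax}$ over all inspected neighbours including $c_z$. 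This two-step bound is the delicate point, and I would spell it out explicitly rather than leaving it to the reader.
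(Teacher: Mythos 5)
Your overall strategy is exactly the paper's: the same projection $f$ with Cases 1, 2a and 2b, the same preimage counting, and the same final combination $\max\{1,\beta+\max\{1,\gamma_\ell\}\}=\max\{1+\beta,\beta+\gamma_\ell\}$. The one place where your write-up fails is the final ``repair'' of Case 2b. Taking $c_z$ to be the \emph{last} inspected neighbour before $c_x$ does not work: the inspected set consists of the first $\ell+1$ available neighbours of $p_j$, all of which precede $c_x$ in $\sigma_C$, so choosing the last of them puts the remaining $\ell$ inspected neighbours \emph{before} $c_z$, not between $c_z$ and $c_x$. In fact there may be no edges of $E$ incident to $p_j$ strictly between $c_z$ and $c_x$ at all (e.g.\ when $c_x$ is the $(\ell+2)$-nd available neighbour and no blocked edges separate it from the $(\ell+1)$-st), so Assumption~\ref{assumption_gamma_l} cannot be invoked for the pair $(c_z,c_x)$ and the claimed inequality $w(p_j,c_x)\le\gamma_\ell\cdot w(p_j,c_z)$ is unjustified as stated.

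The fix is the mirror image of what you wrote, and is what the paper effectively does. Take $c_z$ to be the \emph{first} (earliest in $\sigma_C$) inspected neighbour: then the other $\ell$ inspected edges are edges of $E$ incident to $p_j$ lying strictly between $c_z$ and $c_x$ (they follow $c_z$ and, like all inspected edges, precede $c_x$), so Assumption~\ref{assumption_gamma_l} yields $w(p_j,c_x)\le\gamma_\ell\cdot w(p_j,c_z)$, and $w(p_j,c_z)\le w(p_j,c_y)$ because $c_y$ is the argmax over the inspected set. The paper phrases this slightly differently: it defines $T(c_x)$ as the $\ell$ edges of $E$ incident to $p_j$ immediately preceding $(p_j,c_x)$ in $\sigma_C$, observes by pigeonhole that at least one of the $\ell+1$ inspected (hence available) edges lies outside $T(c_x)$, and uses that edge as the reference, with $T(c_x)$ itself witnessing the required $\ell$ intermediate edges. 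Either variant closes the gap; everything else in your argument, including the implicit treatment of the subcase with at most $\ell+1$ available neighbours, matches the paper's proof.
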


\begin{proof}
The proof follows the same structure as the one for Proposition~\ref{theorem_beta}.
Define $M$ as an optimal matching, $M'$ as the matching produced by Alg.~\ref{alg:matching3} on $G$, and define similarly as previously $f$ as a mapping of $M$'s edges into $M'$ with identical Cases 1 and 2a.
For Case 2b, that is when we consider an edge $e = (p,c) \in M$ such that $e \not\in M'$ while considering that $(p,c)$ is unblocked during $p$'s assignment round, we define $f(e)$ as the edge with the maximum weight among the $\ell+1$ first unblocked edges (in the same way as Alg.~\ref{alg:matching3} picks the edge during $p$'s round).
Since for each $p$, we assign as before an edge of its neighborhood by $f$, our previous arguments hold regarding the number of preimages by $f$.
Now, consider the bound on the weight of edges in $M$. We know that $w(e) \leq w(f(e))$ in Case 1 (trivial) and $w(e) \leq \beta \cdot w(f(e))$ in Case 2a following Assumption~\ref{assumption_beta}. 

In Case 2b, let us consider two possible subcases. 

(1) If there are at most $\ell+1$ unblocked edges during $p$'s round, then since $e$ is unblocked, it is among those edges. Hence, by the property that $w(f(e))$ is the maximum of the weights of the unblocked edges, we get $w(e) \leq w(f(e))$.

(2) Suppose there are strictly more than $\ell+1$ unblocked edges. 
Since if $e$ were among the first $\ell+1$ ones we would also have $w(e) \leq w(f(e))$, let's assume $e = (p,c_j)$ is not among these edges. By Assumption~\ref{assumption_gamma_l}, recall that one cannot bound the weights of the edges between $p$ and its neighbors $c_i$ such that $(p,c_i)$ is among the $\ell$ distinct edges incident to $p$ directly preceding $(p,c_j)$ in $\sigma_C$ order; note $T(c_j)$ the set of these edges. 
If Alg.~\ref{alg:matching3} selects an edge $f(e) = (p,c_x)$ outside $T(c_j)$, we can apply the aforementioned assumption and get $w(e) \leq \gamma_\ell \cdot  w(f(e))$; recall here that $f(e)$ is among the first $\ell+1$ available edges of $p$'s neighborhood hence in particular, it cannot be placed \textit{after} $c_j$ in $\sigma_C$ order.
Thus, let's suppose herafter that Alg.~\ref{alg:matching3} selects an edge $(p,c_x) \in T(c_j)$.
Observe that among the $\ell$ edges of $T(c_j)$, some of them might be blocked and others unblocked. In any case, among the first $\ell+1$ edges that are considered by the algorithm, there exists at least one unblocked edge $(p,c') \not\in T(c_j)$ because $|T(c_j)| = \ell$ and we assumed at least $\ell + 2$ unblocked edges in $p$'s neighborhood.
Finally, we have $w(p,c') \leq w(p,c_x)$ because the algorithm picked the edge with the best weight, and thus $w(p,c_j) \leq \gamma_\ell \cdot w(p,c')$ by application of Assumption~\ref{assumption_gamma_l}, which gives us  $w(e) \leq \gamma_\ell \cdot w(f(e))$ in this case as well.

Putting together the two subcases for Case 2b, we have $w(e) \leq \max\{1,\gamma_\ell\} \cdot w(f(e))$.
By reusing analogous arguments as in the proofs of Proposition~\ref{theorem_beta} and Proposition~\ref{claim:beta_gamma}, we get $w(M) \leq \max \{1, \beta + \max\{1,\gamma_\ell\} \} \cdot w(M')$. Since $\beta \geq 0$ and $\max\{1,\gamma_\ell\} \geq 1$, we get $w(M) \leq \max\{1+\beta, \beta+\gamma_\ell \} \cdot w(M')$.

\end{proof}

\begin{proposition} \label{prop:alg2_weights} 
Alg.~\ref{alg:matching3} calculates at most $(\ell + 1) \cdot n$ weights.
\end{proposition}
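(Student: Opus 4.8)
The plan is to bound the number of weight queries by examining \textsf{weight} calls in Alg.~\ref{alg:matching3}. First I would observe that the algorithm consists of a single \texttt{for} loop over the $\np$ vertices $p_1,\dots,p_{\np}$ of $P$, so it suffices to bound the number of weights queried during a single iteration (i.e.\ during the processing round of one node $p_i$) by $\ell+1$, and then sum over all iterations. Since the summary measure is worst-case query complexity and weights already inspected are not re-counted (by the footnote convention), an upper bound of $\ell+1$ per round immediately yields the claimed $(\ell+1)\cdot n$ — noting that $\np \geq n = \min\{\np,\nc\}$, so strictly speaking we get $(\ell+1)\cdot \np$ queries, but any round with an empty available set contributes nothing, and at most $n$ edges are ever added to the matching; more carefully, each round that queries any weight must also add an edge to $M$, and $|M|\leq n$, so at most $n$ rounds trigger queries.

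Next I would go into the body of the loop. Fix a round with node $p_i$ and let $N$ be the set of available neighbors computed at the start of that round. If $N = \emptyset$ or $|N| = 1$, no \textsf{weight} call is executed (the \texttt{else} branch just reads $N[1]$), so zero queries. If $|N| > 1$, the algorithm first truncates $N$ to its first $\ell+1$ elements via $N \leftarrow N[:\ell+1]$, and then the \texttt{foreach} loop queries $\textsf{weight}(p_i,c_j)$ for each $j$ in this truncated $N$. Hence at most $|N[:\ell+1]| \leq \ell+1$ weights are queried in this round. This is the key step, and it is essentially a direct reading of the pseudocode — the only subtlety worth spelling out is that the truncation happens \emph{before} any weight is inspected, which is exactly why the per-round bound is $\ell+1$ and not the full degree of $p_i$.

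Finally I would assemble the count: summing the per-round bound over $i = 1,\dots,\np$, and using that rounds with $|N|\leq 1$ contribute $0$ while every other round contributes at most $\ell+1$, and that each contributing round adds a distinct edge to $M$ so there are at most $|M| \leq n$ of them, the total number of queried weights is at most $(\ell+1)\cdot n$. I do not anticipate a genuine obstacle here; the argument is a routine accounting over the algorithm's control flow, and the only point requiring a sentence of care is justifying that the number of rounds performing queries is bounded by $n$ rather than $\np$ (via the matching-size argument), so that the bound reads $(\ell+1)\cdot n$ as stated rather than $(\ell+1)\cdot\np$.
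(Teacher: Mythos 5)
Your proposal is correct and follows essentially the same argument as the paper: at most $\ell+1$ weights are queried per round (since the truncation $N[:\ell+1]$ precedes the queries), any round that queries a weight adds an edge to $M$, and $|M|\leq n$ bounds the number of such rounds. No gaps.
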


\begin{proof}
Note first that at each of the $s$ iterations of the algorithm, at most $\ell + 1$ weights are calculated. 
Also, if at least one weight is calculated at a given iteration, then an edge is added to the constructed matching. 
Since at most $n$ edges may ever be added to the matching, there are only $n$ iterations where at least one weight is calculated.

\end{proof}

\begin{proposition} \label{prop:alg3_strong}
Under both $\beta$-strong $P$-order and $\gamma$-strong $C$-order, Alg.~\ref{alg:matching3} is $\max\{1, \beta+\gamma \}-$approximate.
\end{proposition}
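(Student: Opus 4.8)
The plan is to derive this from the two results already proved for Alg.~\ref{alg:matching3} and Alg.~\ref{alg:matching2}, splitting on whether $\gamma \ge 1$ or $\gamma < 1$. The first thing I would record is that the $\gamma$-strong $C$-order is a strengthening of the $\gamma_\ell$-$\ell$-weak $C$-order with parameter $\gamma_\ell := \gamma$, and this for \emph{every} $\ell \ge 0$: the strong order forces $w(p,c_j) \le \gamma\, w(p,c_i)$ for \emph{all} pairs $i<j$ with $(p,c_i),(p,c_j)\in E$, which in particular covers the pairs on which Assumption~\ref{assumption_gamma_l} imposes a constraint. Hence, when $\gamma \ge 1$, I would simply invoke Proposition~\ref{prop_gamma_l} with $\gamma_\ell = \gamma$, obtaining approximation ratio at most $\max\{1+\beta,\,\beta+\gamma\} = \beta + \max\{1,\gamma\} = \beta+\gamma$, and note $\beta+\gamma = \max\{1,\beta+\gamma\}$ because $\gamma \ge 1$.

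For the remaining regime $0 \le \gamma < 1$, the plan is to show that Alg.~\ref{alg:matching3} then behaves exactly like Alg.~\ref{alg:matching2} (the analogue, for the $\ell$-Greedy-Local algorithm, of Remark~\ref{remark:degenerate_case}). Fix the step at which $p = p_i$ is processed and let $N$ be the list of its still-available neighbours in $\sigma_C$ order. For any two entries $c_a, c_b$ of $N$ with $c_a$ before $c_b$ in $\sigma_C$, both edges lie in $E$, so $\gamma$-strong $C$-order gives $w(p,c_b) \le \gamma\, w(p,c_a) < w(p,c_a)$; thus the weights along $N$ are strictly decreasing and $\argmax_{j \in N[:\ell+1]} b_j = N[1]$, which is precisely the edge chosen by Alg.~\ref{alg:matching2}. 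As the two algorithms also agree when $|N| \le 1$, an induction over $\sigma_P$ yields that they construct the same matching. I would then conclude by Proposition~\ref{claim:beta_gamma}: under the two strong orders Alg.~\ref{alg:matching2} is $\max\{1,\beta+\gamma\}$-approximate, hence so is Alg.~\ref{alg:matching3}.

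The only delicate point is this $\gamma < 1$ regime: Proposition~\ref{prop_gamma_l} alone only delivers the weaker ratio $1+\beta$ there, so one genuinely has to use that the heuristic order becomes a strict total order on each neighbourhood. If one prefers a single unified argument instead of the case split, I would re-run the proof of Proposition~\ref{prop_gamma_l}, changing only Case~2b: there one shows $w(e) \le \gamma\, w(f(e))$ rather than $w(e) \le \max\{1,\gamma_\ell\}\, w(f(e))$, using that if $\gamma \ge 1$ the already-derived inequality $w(e) \le w(f(e))$ implies it, and if $\gamma < 1$ then $f(e)$ is the \emph{first} available neighbour of $p$ so $e$ sits strictly after it in $\sigma_C$ and the strong order applies directly. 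The concluding counting of preimages of $f$ (each edge of $M'$ is its own unique Case-1 preimage, or the image of at most one Case-2a and at most one Case-2b edge) is identical to that in Proposition~\ref{claim:beta_gamma} and gives $w(M) \le \max\{1,\beta+\gamma\}\cdot w(M')$.
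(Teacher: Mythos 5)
Your proposal is correct and follows essentially the same route as the paper: for $\gamma \ge 1$ it invokes Proposition~\ref{prop_gamma_l} (via the observation that a $\gamma$-strong $C$-order is a $\gamma_\ell$-$\ell$-weak $C$-order with $\gamma_\ell=\gamma$), and for $\gamma<1$ it shows Alg.~\ref{alg:matching3} degenerates to the \naivealgo{} algorithm and applies Proposition~\ref{claim:beta_gamma}, exactly as the paper does via Remark~\ref{remark:degenerate_case}. Your spelled-out justification that the strong order implies the weak one for every $\ell$ is in fact a slightly cleaner statement than the paper's terse ``case $\ell=0$'' phrasing, but the argument is the same.
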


\begin{proof}
If $\gamma \geq 1$, the proposition follows directly from Proposition~\ref{prop_gamma_l} for the case $\ell = 0$.
For $\gamma < 1$, following the same argument as in Remark~\ref{remark:degenerate_case}, Alg.~\ref{alg:matching3} degenerates and produces the same solution as the \naivealgo{} algorithm.

\end{proof}

The example of Figure~\ref{fig:counter_example_betal} can be used to show that under Assumptions~\ref{assumption_beta} and~\ref{assumption_gamma_l}, Alg.~\ref{alg:matching3} reaches its proven approximation bound.

\begin{proposition} \label{prop:alg3_counter_example}
Under both $\beta$-strong $P$-order and $\gamma_\ell$-$\ell$-weak $C$-order, Alg.~\ref{alg:matching3} has approximation ratio at least $\max\{1+\beta, \beta+\gamma_\ell \}$.
\end{proposition}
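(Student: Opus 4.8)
The plan is to produce, for the relevant ranges of $\beta$ and $\gamma_\ell$, an explicit instance on which Alg.~\ref{alg:matching3} exactly attains the bound of Proposition~\ref{prop_gamma_l}; the family drawn in Figure~\ref{fig:counter_example_betal} is the natural candidate. It has $P = \{p_1,p_2\}$ and $C = \{c_1,\dots,c_{\ell+2}\}$ with $p_1$ adjacent to all $c_j$, $p_2$ adjacent to $c_1$ only, weights $w(p_1,c_1)=1$, $w(p_1,c_j)=\tfrac12$ for $2\le j\le \ell+1$, $w(p_1,c_{\ell+2})=\gamma_\ell$, $w(p_2,c_1)=\beta$, and the orders $\sigma_P=p_1,p_2$, $\sigma_C=c_1,\dots,c_{\ell+2}$.

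First I would verify the two ordering assumptions. Since $p_1$ sees all consumers, the number of $p_1$-edges strictly between $c_i$ and $c_j$ is $j-i-1$, so Assumption~\ref{assumption_gamma_l} bites only for $j-i-1\ge\ell$, i.e.\ for the single pair $(c_1,c_{\ell+2})$, where it reads $\gamma_\ell=w(p_1,c_{\ell+2})\le\gamma_\ell\cdot w(p_1,c_1)$ --- true. And $p_1,p_2$ share only $c_1$, so Assumption~\ref{assumption_beta} reduces to $\beta=w(p_2,c_1)\le\beta\cdot w(p_1,c_1)$ --- true.

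Then I would trace the run. Processing $p_1$ first, $|N|>1$, so $N$ is truncated to the $\ell+1$ first available consumers $c_1,\dots,c_{\ell+1}$; in particular the heavy edge $(p_1,c_{\ell+2})$ is dropped before any weight is queried. The queried weights are $1,\tfrac12,\dots,\tfrac12$, so the algorithm adds the maximum-weight edge $(p_1,c_1)$; $p_2$ then has its unique neighbor $c_1$ blocked and adds nothing, so $w(M')=1$. An optimal matching pairs $c_1$ with $p_2$ and $p_1$ with its best remaining neighbor. If $\gamma_\ell\ge1$ this neighbor is $c_{\ell+2}$, hence $w(M_{opt})=\beta+\gamma_\ell$ and the ratio is $\beta+\gamma_\ell=\max\{1+\beta,\beta+\gamma_\ell\}$. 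For $\gamma_\ell<1$ with $\ell\ge1$ (the genuinely weak regime), I would instead raise $w(p_1,c_2)$ to $1$: the pair $(c_1,c_2)$ has $0<\ell$ interleaving $p_1$-edges, so Assumption~\ref{assumption_gamma_l} stays vacuous there and both assumptions still hold, the algorithm still outputs $(p_1,c_1)$ (tie broken toward $c_1$) of weight $1$, while now $w(M_{opt})=\beta+1$, giving ratio $1+\beta=\max\{1+\beta,\beta+\gamma_\ell\}$.

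The only delicate points are bookkeeping ones: (i) optimality of the two exhibited matchings is immediate because $|P|=2$ and $p_2$ has a single neighbor, so every matching has at most two edges and either assigns $c_1$ to $p_2$ or to $p_1$; and (ii) when $\ell=0$ the weak order coincides with the strong order and Proposition~\ref{prop:alg3_strong} already caps the ratio at $\max\{1,\beta+\gamma_0\}$, so the $1+\beta$ branch is relevant only for $\ell\ge1$ (or trivially when $\gamma_\ell\ge1$) --- exactly where the modified instance above applies. No step is technically hard; the work is in keeping the parameter cases straight.
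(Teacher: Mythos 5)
Your proposal is correct and follows essentially the same route as the paper: the $\beta+\gamma_\ell$ branch uses exactly the instance of Figure~\ref{fig:counter_example_betal}, and your modified instance with $w(p_1,c_2)=1$ for the $1+\beta$ branch is just the paper's Figure~\ref{fig:counter_example_beta} construction (tied weight-$1$ edges, tie broken toward $c_1$, blocking the weight-$\beta$ edge) padded into the same star. Your explicit caveat about $\ell=0$, $\gamma_0<1$ matches the restriction the paper leaves implicit (its own examples also require $\gamma_0\ge 1$ there), so nothing is missing.
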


\begin{proof}
Consider first the example of Figure~\ref{fig:counter_example_beta}. 
Following the same arguments as in the proof of Proposition~\ref{counter_example_beta}, we get that Alg.~\ref{alg:matching3} (which is equivalent to Alg.~\ref{alg:matching1} on that example) produces a $(1+\beta)$-approximate matching.
Suppose $\beta + \gamma_\ell > 1 + \beta$, that is $\gamma_\ell > 1$, and let us use the example of Figure~\ref{fig:counter_example_betal} where $p_1$ has $\ell+2$ neighbors with $w(p_1,c_j) = 0.5$ for $2 \leq j \leq \ell+1$.
In this example, Assumption~\ref{assumption_beta} only applies to $(p_1,c_1)$ versus $(p_2,c_1)$ and Assumption~\ref{assumption_gamma_l} to $(p_1,c_1)$ versus $(p_1,c_{\ell+2})$.
In the example, the algorithm picks $(p_1,c_1)$ for a weight of $1$ whereas the optimal matching picks $(p_1, c_{\ell+2})$ and $(p_2,c_1)$ for a weight of $\beta + \gamma_\ell > 1$.

\end{proof}

We use Assumption~\ref{assumption_beta_l} to obtain symmetric results (Proposition~\ref{prop_beta_l}).

\begin{assumption} \label{assumption_beta_l}
($\beta_\ell$-$\ell$-weak $P$-order) Assume $\ell \geq 0$, $\beta_\ell \geq 0$ and $\sigma_P = p_1, p_2, \dots, p_{\np}$, so that for any $p_i,p_j \in P$ with $1 \leq i < j \leq \np$
and $c \in C$ such that $(p_i,c)\in E$ and $(p_j,c) \in E$ and such that $|\{(p_x,c) \in E \; | \; i < x < j \}| \geq \ell$, we have $w(p_j, c) \leq \beta_\ell \cdot  w(p_i, c)$.
\end{assumption}

\begin{proposition} \label{prop_beta_l}
Under both $\gamma$-strong $C$-order and $\beta_\ell$-$\ell$-weak $P$-order, Alg.~\ref{alg:matching3} is $\max \{1 + \gamma, \beta_\ell + \gamma \}$-approximate on input $G = (C \cup P, E)$.
\end{proposition}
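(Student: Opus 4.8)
The plan is to reduce this statement directly to Proposition~\ref{prop_gamma_l} by exploiting the symmetry of Alg.~\ref{alg:matching3} under swapping the two sides of the bipartition, in the same spirit as Remark~\ref{remark:gamma} does for Alg.~\ref{alg:matching1}. When Alg.~\ref{alg:matching3} is run on the swapped input $G = (C \cup P, E)$, it processes the vertices of $C$ one by one in the order $\sigma_C$ and, for each $c \in C$, restricts attention to the first $\ell+1$ still-available edges of $c$'s neighborhood taken in the order $\sigma_P$, queries their weights, and keeps the heaviest. Thus, relative to the situation analyzed in Proposition~\ref{prop_gamma_l}, the ``outer'' set looped over (there $P$) is now $C$, the ``inner'' set whose neighborhood is truncated (there $C$) is now $P$, and the approximation bound there depends only on a strong order on the outer set and an $\ell$-weak order on the inner set.

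First I would record the correspondence of hypotheses. The $\gamma$-strong $C$-order (Assumption~\ref{assumption_gamma}) is exactly the ``$\beta$-strong $P$-order'' hypothesis of Proposition~\ref{prop_gamma_l}, read with the outer set being $C$ and with the constant $\beta$ replaced by $\gamma$; likewise the $\beta_\ell$-$\ell$-weak $P$-order (Assumption~\ref{assumption_beta_l}) is precisely the ``$\gamma_\ell$-$\ell$-weak $C$-order'' hypothesis, read with the inner set being $P$ and the constant $\gamma_\ell$ replaced by $\beta_\ell$. Under these identifications, the conclusion of Proposition~\ref{prop_gamma_l} becomes $w(M_{opt}) \leq \max\{1+\gamma,\; \gamma+\beta_\ell\}\cdot w(M')$ for $M'$ the matching output by Alg.~\ref{alg:matching3} on $G=(C\cup P,E)$, and since $\gamma+\beta_\ell = \beta_\ell+\gamma$ this is exactly the claimed ratio $\max\{1+\gamma,\;\beta_\ell+\gamma\}$.

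The only point that genuinely needs care — and it is the main, though mild, obstacle — is to confirm that Alg.~\ref{alg:matching3} really is symmetric in the sense just used: it is written with $P$ as the loop set and $C$ as the neighborhood set that gets truncated via $N[:\ell+1]$ and queried, so one must check that feeding it $(C\cup P,E)$ amounts to relabeling every occurrence of an agent $p$ by a task $c$ and vice versa, including the tie-breaking rule (smallest index) and the truncation step, without altering the run. Once that is granted the result is immediate; alternatively, one may simply repeat the proof of Proposition~\ref{prop_gamma_l} verbatim with $P$ and $C$ interchanged — defining the projection $f:M\to M'$ with the identity case, the ``blocked'' Case~2a bounded through the $\gamma$-strong $C$-order, and the ``unblocked'' Case~2b bounded through the $\beta_\ell$-$\ell$-weak $P$-order exactly as in the two subcases of that proof, then observing each edge of $M'$ has at most two preimages and summing over the cases — which requires no new combinatorial argument.
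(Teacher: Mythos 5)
Your proposal is correct and matches the paper's approach: the paper gives no separate proof for this proposition, obtaining it exactly as you do, as the symmetric instance of Proposition~\ref{prop_gamma_l} applied to Alg.~\ref{alg:matching3} run on the swapped input $G=(C\cup P,E)$, with $\gamma$ playing the role of $\beta$ and $\beta_\ell$ that of $\gamma_\ell$. Your extra care about the relabeling (tie-breaking and the $N[:\ell+1]$ truncation) and the fallback of rerunning the argument of Proposition~\ref{prop_gamma_l} verbatim with $P$ and $C$ interchanged is exactly the intended justification.
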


Inverting $P$ and $C$ in Proposition~\ref{prop:alg3_counter_example}, one can show that the bound in the previous proposition is reached by Alg.~\ref{alg:matching3} on some instances.

\paragraph{Limitation of greedy-choice algorithms} 

We explain briefly here why lifting Assumption~\ref{assumption_beta} controlling the order in which vertices of $P$ are processed and replacing it by a bounded variant tolerating edges that are out-of-order such as Assumption~\ref{assumption_beta_l} leads to impossibility to approximate the optimal matching by a \textit{greedy-choice} algorithm (picking each round the available edge with maximum observed weight). 
As a counter-example, one can consider a path as an instance and can derive that any algorithm inspecting only a bounded number of edges before adding irreversibly the observed edge with greatest weight to the matching (hence, allowing to pick some edges whose neighborhood is not completely explored), may fail to provide a bounded-approximation.
This is due to the fact that the algorithm has no control on the weight of the edges connected to some of the inspected edges on the input path. 
We also note that on a path, both weak assumptions with $\ell \geq 1$ do not apply to any pair of edges and thus all weights are unrestrained in this case.
We note that the formulation of the claim as it appeared in~\cite{duvignau2023greediness} does not make explicit the notion of \textit{greediness} that is being used, which we clarify in the below proposition. 

\begin{proposition} \label{claim:bounded_search} 
Fix $\ell \geq 1$. Suppose $\beta_\ell$-$\ell$-weak $P$-order and $\gamma_\ell$-$\ell$-weak $C$-order hold. Consider now a greedy-choice matching algorithm $\mathcal{A}$ (i.e., that greedily adds the examined edge with highest weight) that always decides to add an edge after querying at most $k_\ell$ edges for some $k_\ell \geq 1$. Then $\mathcal{A}$ does not provide a bounded approximation ratio.
\end{proposition}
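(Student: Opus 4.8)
The plan is to exhibit, for every candidate greedy-choice algorithm $\mathcal{A}$ with its two parameters $k_\ell$ (the number of edges it inspects before committing) and $\ell$ (the locality parameter of the weak orders), a fixed family of path instances on which $\mathcal{A}$ is forced to commit to a ``bad'' edge while an arbitrarily heavier edge remains unmatchable afterward. Since $\ell \geq 1$, both weak-order assumptions are vacuous on a path (as already observed just before the statement: no pair of edges on a path has $\ell$ intermediate same-endpoint edges, because no vertex has degree exceeding $2$), so the adversary is entirely free to choose all weights after seeing which edges $\mathcal{A}$ queries. This is the lever: the weak orders impose no constraint whatsoever on path instances.

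First I would set up the instance. Take a path with sufficiently many edges — say $2k_\ell + 3$ edges $e_1, e_2, \dots$ laid out along a bipartite path $p_1 - c_1 - p_2 - c_2 - \cdots$ — and run $\mathcal{A}$ symbolically. By definition $\mathcal{A}$ queries at most $k_\ell$ edges, then irreversibly adds the heaviest one it has seen, call it $e^\star$. The adversary answers every query with weight $1$, so $e^\star$ has weight $1$ and $w(e^\star) = 1$. Now I would pick one of the (at least two) unqueried edges $e'$ that is \emph{adjacent} to $e^\star$ on the path — such an edge exists because $e^\star$ has up to two path-neighbors and the path is long enough that not all neighbors of $e^\star$ can have been among the $\le k_\ell$ queried edges once we also ensure the queried set is a proper subset; more carefully, I would choose the path length and placement so that at least one path-neighbor of $e^\star$ is unqueried. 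Set $w(e') = A$ for $A$ arbitrarily large, and set every remaining weight to $1$. Once $\mathcal{A}$ commits to $e^\star$, the adjacent edge $e'$ can never be added, so $w(M_\mathcal{A}) \le 1 + (\text{bounded number of remaining unit edges})$, a quantity independent of $A$. Meanwhile the optimal matching can include $e'$ together with a near-perfect matching of the rest of the path, so $w(M_{opt}) \ge A$. Letting $A \to \infty$ shows no finite approximation ratio holds.

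The one technical point that needs care — and what I expect to be the main obstacle — is guaranteeing that a path-neighbor of the committed edge $e^\star$ is genuinely unqueried, uniformly over all possible query strategies of $\mathcal{A}$. The subtlety is that $\mathcal{A}$ may adaptively query exactly the two neighbors of whatever it intends to commit to. I would handle this by noting that $\mathcal{A}$ inspects \emph{at most} $k_\ell$ edges total across all its decisions up to the first commitment, so on a path long enough (length $> k_\ell + $ a small constant) there is always an edge that is unqueried at the moment of commitment and that can be \emph{made} adjacent to $e^\star$ — or, more robustly, I would argue at the level of vertices: the endpoints of $e^\star$ have degree $2$ on the path, and since $\mathcal{A}$ has queried only a bounded set, we can always route the ``heavy tail'' through an unqueried edge incident to an endpoint of $e^\star$, possibly by choosing the instance adaptively to $\mathcal{A}$'s behavior on a generic long path. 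A clean way to close this is: run $\mathcal{A}$ on the infinite (or very long) unit-weight path, look at the first edge $e^\star$ it commits to and the finite set $Q$ of edges queried beforehand; since $|Q| \le k_\ell < \infty$ and $e^\star$ has an incident edge on each side, pick an incident edge $e' \notin Q$ (possible unless both incident edges lie in $Q$, a case we rule out by truncating/choosing the path so that $\mathcal{A}$'s forced commitment has a free neighbor — e.g.\ by placing $e^\star$'s realization at a path-end when necessary). The remaining estimates — $w(M_\mathcal{A})$ bounded, $w(M_{opt}) \ge A$ — are then immediate counting on a path.

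Finally I would remark that this simultaneously recovers the table entry $\infty$ \ind{k} for ``weak orders'' under all of Alg.~\ref{alg:matching1}, Alg.~\ref{alg:matching2}, and Alg.~\ref{alg:matching3}, since each of those is a greedy-choice algorithm with its own bounded inspection budget ($k_\ell = \deg$, $k_\ell = 1$, and $k_\ell = \ell+1$ respectively), and in particular Alg.~\ref{alg:matching3} — despite its linear query complexity guaranteed by Proposition~\ref{prop:alg2_weights} — inherits the impossibility, which is exactly what motivates introducing the non-greedy Alg.~\ref{alg:double_greedy} in the sequel.
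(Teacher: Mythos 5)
There is a genuine gap, and it sits exactly at the point you flagged as ``the one technical point that needs care.'' Your adversary answers every query with weight $1$, so at commitment time the algorithm faces a tie among all examined edges and may break it however it likes. A greedy-choice algorithm can exploit this: for instance, it can repeatedly query an end edge of the (remaining) path together with its unique neighbour and commit to the end edge (or, with $k_\ell\geq 3$, query three consecutive edges and commit to the middle one). Against such an algorithm, every edge that gets blocked has already been queried with weight $1$ before the commitment, so there is never a moment at which an \emph{unqueried} edge becomes permanently unmatchable; any edge you later make huge is either queried while still available (and the greedy rule then adds it, capturing the weight) or is already worth $1$. In fact, against the all-ones responses this algorithm outputs an essentially optimal matching, so your adversary strategy does not defeat all greedy-choice algorithms, which is what the proposition requires. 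Your proposed patches do not close this: ``truncating or choosing the path adaptively to $\mathcal{A}$'s behaviour'' is circular, because the graph is part of the input fixed before execution, and changing it changes the algorithm's query sequence; and placing the committed edge ``at a path-end'' makes things worse, since an end edge has a single neighbour which the algorithm can easily have queried.

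The paper's proof supplies precisely the missing device: the weights of queried edges are assigned online and \emph{strictly increasing} (the $i$-th queried edge gets weight $i$ if it lies inside the current longest subpath of undiscovered edges, and a negligible weight otherwise), on a path of $2^{k_\ell}$ edges. This removes all ties and forces, by a halving/induction argument, the currently maximal examined weight to remain adjacent to an undiscovered subpath of length at least $2^{k_\ell-i}$ after $i$ queries; hence at commitment time the chosen edge still has an unqueried neighbour, whose weight the adversary then sets arbitrarily large. Your overall framing (work on a path so that the $\ell$-weak orders with $\ell\geq 1$ are vacuous, then hide a huge weight next to the committed edge) matches the paper, but without an adaptive, tie-free weight schedule of this kind the central claim --- that the committed edge always has an unqueried neighbour --- is false for some algorithms, so the argument as written does not go through.
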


\begin{proof}

The idea behind the proof is that in general, it is possible to force having the edge with highest weight neighboring a non-queried edge by the time the algorithm has to greedily add an edge.
To rule out the constraints on weights stemming from the weak orders when $\ell \geq 1$, we assume a graph of degree at most $2$ where the weak ordering assumptions do not apply. 
Without loss of generality, let us consider as a counter example a path made of at least $2^{k_\ell}$ edges.
The weights of the edges forming the path needs to be assigned ``online'' depending in which order $\mathcal{A}$ examines the edges' weight: upon examining the $i$-th edge with $1 \leq i \leq k_\ell$, if $\mathcal{A}$ inspects an edge 
within 
the current longest sub-path made only of undiscovered edges (denoted $P_i$) then we set $w(e) = i$, otherwise $w(e) = 0.1$.
Under these circumstances, let us show 
that the edge $e$ with highest weight after $i \leq k_\ell$ steps is always neighboring an undiscovered edge.
We can show by induction that the highest weight after $i$ edges have been inspected is always adjacent to a sub-path of undiscovered edges of length at least $2^{k_\ell - i}$. This is because, at each step, either the highest weight does not change (an edge outside $P_i$ was discovered) or it changes and it splits $P_i$ in two parts $S_1$ and $S_2$ with $\max \{|S_1|, |S_2|\} \geq P_i/2$. After $k_\ell$ inspections, the highest weight is thus adjacent to a sub-path of undiscovered edges of size at least $2^0 = 1$.
To conclude, we set an arbitrarily large weight to the adjacent undiscovered edge.


\end{proof}

By the previous claim, it is fruitless in general to try to design a bounded-approximation greedy-choice algorithm that discovers, at each of its iteration, a bounded number of edges. 
However, as we show in the next section, it is possible to design a bounded-approximation algorithm assuming weak orders on both input sets and that uses only on average a bounded number of discovery queries per edge in the output matching.



\subsubsection{Double-Greedy Algorithm} \label{subsec:double_greedy_algo}

\begin{figure*}[t]
    \centering
    \begin{tikzpicture}[yscale=0.45]
        \node (p1) at (0,0) {$p_1$};
        \node (c1) at (3,1) {$c_1$};
        \node (c2) at (3,-1) {$c_2$};
        \node (p2) at (0,-2) {$p_2$};
        \node (c3) at (3,-3) {$c_3$};
        \node (p3) at (0,-4) {$p_3$};
        \node (c4) at (3,-5) {$c_4$};
        \draw[-, ] (p1) -- node[midway, above] {$7$} (c1);
        \draw[-, ] (p1) -- node[near end, above]  {$8$} (c2);
        \draw[-,] (p1) -- node[midway, above]  {$9$} (c3);

         \draw[-, ] (p2) -- node[above, very near start, sloped] {$1$} (c1);
        \draw[-,] (p2) -- node[above, very near start, sloped] {$8$} (c3);
         \draw[-, ] (p2) -- node[below, very near start, sloped] {$3$} (c4);

         \draw[-, ] (p3) -- node[above, very near end, sloped] {$4$} (c2);
         \draw[-,] (p3) -- node[above, midway, sloped] {$7$} (c4);

         \node[above,font=\bfseries] at (current bounding box.north) {(a) Ordered Input};
    \end{tikzpicture}
    \hspace{0.5cm}%
    \begin{tikzpicture}[yscale=0.5,xscale=1.5]
        \node (p1) at (0,0) {$p_1$};
        \node (c2) at (1,0) {$c_2$};
        \node (p3) at (2,0) {$p_3$};
        \node (c4) at (3,0) {$c_4$};
        \node (p2) at (4,0) {$p_2$};
        \node (c3) at (5,0) {$c_3$};
        
        \node[gray] (c1) at (1,-1) {$c_1$};

        \draw[->, thick] (p1) -- node[midway, above]  {$8$} (c2);
        \draw[<-, thick] (p3) -- node[midway, above] {$4$} (c2);
        \draw[->, thick] (p3) -- node[midway, above] {$7$} (c4);
        \draw[<-, thick] (p2) -- node[midway, above] {$3$} (c4);
        \draw[->, thick] (p2) -- node[midway, above] {$8$} (c3);

        \path[->, dashed, gray] (c3) edge[in=60,out=120] node[midway, above]  {$9$} (p1);
        
        \draw[-, dotted, gray] (p1) -- node[midway, below] {$7$} (c1);
        \draw[-, dotted, gray] (p2) -- node[midway, below] {$1$} (c1);
        
         \node[above,font=\bfseries] at (current bounding box.north) {(b) Greedy Path};

         \node (sp1) at (0,-2.5) {$p_1$};
        \node (sc2) at (1,-2.5) {$c_2$};
        \node (sp3) at (2,-2.5) {$p_3$};
        \node (sc4) at (3,-2.5) {$c_4$};
        \node (sp2) at (4,-2.5) {$p_2$};
        \node (sc3) at (5,-2.5) {$c_3$};

        \draw[-, ultra thick] (sp1) -- node[midway, above]  {$\checkmark$} (sc2);
        \draw[-, dashed, lightgray] (sp3) -- node[midway, above] {$\times$} (sc2);
        \draw[-, ultra thick] (sp3) -- node[midway, above] {$\checkmark$} (sc4);
        \draw[-, dashed, lightgray] (sp2) -- node[midway, above] {$\times$} (sc4);
        \draw[-, ultra thick] (sp2) -- node[midway, above] {$\checkmark$} (sc3);

         \node[below,font=\bfseries] at (current bounding box.south) {(c) Selected Edges};
    \end{tikzpicture}

    \caption{Example of execution for Alg.~\ref{alg:double_greedy} using orders $\sigma_P = p_1, p_2, p_3$, $\sigma_C = c_1, c_2, c_3, c_4$ and $\ell = 1$: (a) Ordered input; (b) Constructed greedy path starting from $p_1$ with forward edges as plain lines, non-selected edges (because not local maximum) as dotted and backward edges as dashed; (c) Selected edges as optimal matching over the greedy path. Alg.~\ref{alg:double_greedy} does not run another greedy path procedure as all nodes in $P$ are then made unavailable and outputs $M=\{(p_1,c_2),(p_3,c_4),(p_2,c_3)\}$ with total weight $w(M) = 23$.}
    \label{fig:double_greedy}
\end{figure*}
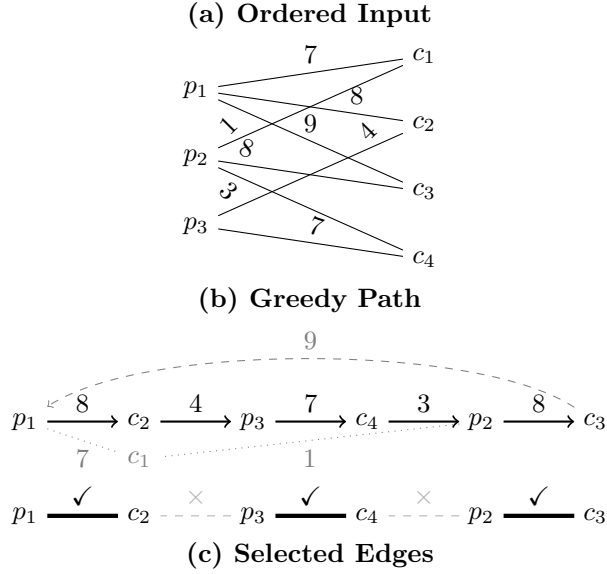

Proposition~\ref{claim:bounded_search} basically indicates that the strategies used so far in order to develop greedy matchings based on heuristic orders fail for the situation where two weak orders are used.
In this situation, a greedy-choice algorithm is not possible in order to reach a bounded approximation and one has to consider an algorithm that explores more of the input before taking even a single decision. 
This is the case for our last greedy algorithm (Alg.~\ref{alg:double_greedy}) that only requires weak node orders to get a bounded approximation ratio.
The algorithm is the following. 
At a given iteration, considering the next node $p_i$ in $\sigma_P$ order that is still available then, it starts first by building a (oriented) \textit{greedy path} $U$ starting initially from node $u = p_i$.
The greedy path is obtained by adding at each step the edge $(u,v)$ with highest weight within $u$'s ``bounded local neighborhood'' $N_\ell(u)$ (i.e., the $\ell+1$ first available edges in the heuristic order related to the node $u$, that is $\sigma_C$ if $u \in P$ and $\sigma_P$ if $u \in C$), and continuing the next attempt to extend the path from node $v$ and so on.
At each step, edges sharing an endpoint with a node that is already part of the constructed path (referred to as ``backward edges'') are discarded from being included in $N_\ell(u)$.
The greedy path ends when the bounded local neighborhood (without backward edges) of the last processed node $u$ is empty.
Then Alg.~\ref{alg:double_greedy} picks an optimal solution \texttt{OptimalPath($U$)} for the path $U$ and adds it to the constructed matching.
The algorithm continues until the exhaustion of available nodes in $P$ from which a greedy path can be initiated.
An example of an execution of the algorithm is illustrated in Figure~\ref{fig:double_greedy}. 
Before proving the correctness of Alg.~\ref{alg:double_greedy}, we demonstrate a more elementary result comparing the weight of the optimal matching $U_{opt}$ over any path $U$ with the weight of the edges in $U \setminus U_{opt}$ that are not selected in the optimal.


\begin{lemma} \label{lemma:path}
    Let $U = e_1, \dots, e_m$ be a sequence of edges defining a path (i.e., for all $2 \leq i \leq m$ with $e_{i} \cap e_{i-1} \not= \emptyset$ and $|e_i \cap \bigcup_{1 \leq j \leq i-1} e_j| = 1$), and $U_{opt}$ the optimal matching over $U$. We have then $w(U_{opt}) \geq \sum_{e\in U \setminus U_{opt}} w(e)$.
\end{lemma}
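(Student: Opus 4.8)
The claim is equivalent to $w(U_{opt}) \geq \tfrac12 w(U)$, where $w(U) = \sum_{i=1}^m w(e_i)$: indeed, since $U_{opt} \subseteq U$ we have $w(U) = w(U_{opt}) + \sum_{e \in U \setminus U_{opt}} w(e)$, so $w(U_{opt}) \geq \tfrac12 w(U)$ rearranges exactly to $w(U_{opt}) \geq \sum_{e \in U \setminus U_{opt}} w(e)$. The plan is therefore to exhibit two disjoint matchings inside $U$ whose union is all of $U$.

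\textbf{Key steps.} First I would verify the structural fact that any two edges $e_i,e_j$ of the path with $|i-j|\geq 2$ are vertex-disjoint: the defining condition $|e_i \cap \bigcup_{1 \leq k \leq i-1} e_k| = 1$ together with $e_i \cap e_{i-1} \neq \emptyset$ forces the unique shared vertex to lie in $e_{i-1}$, so $e_i$ meets none of $e_1,\dots,e_{i-2}$. Second, using this, the set $O = \{e_i : i \text{ odd}\}$ of odd-indexed edges and the set $V = \{e_i : i \text{ even}\}$ of even-indexed edges are each a matching (within each set all indices differ by at least $2$, hence the edges are pairwise vertex-disjoint), and both are contained in $U$. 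Third, $O$ and $V$ partition the edge set of $U$, so $w(O) + w(V) = w(U)$ and therefore $\max\{w(O), w(V)\} \geq \tfrac12 w(U)$. Finally, since $U_{opt}$ is a maximum-weight matching over $U$ while $O$ and $V$ are both matchings over $U$, we get
\[
 w(U_{opt}) \;\geq\; \max\{w(O), w(V)\} \;\geq\; \tfrac12\, w(U) \;=\; \tfrac12\Big(w(U_{opt}) + \textstyle\sum_{e \in U \setminus U_{opt}} w(e)\Big),
\]
which rearranges to the desired inequality.

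\textbf{Main obstacle.} There is essentially no computational difficulty here; the only point requiring care is confirming that non-consecutive path edges are genuinely disjoint, so that the even/odd two-colouring of the edge indices really produces two valid matchings — this is immediate from the path condition as stated. (An alternative proof by induction on $m$, peeling off one or two terminal edges and arguing that the restriction of $U_{opt}$ stays optimal on the shorter path, also works, but the two-colouring argument above is shorter and avoids case analysis.)
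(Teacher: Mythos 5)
Your proof is correct, but it takes a genuinely different (and shorter) route than the paper. You prove the equivalent statement $w(U_{opt}) \geq \tfrac12 w(U)$ by two-colouring the path edges by parity of index: the path condition forces $e_i$ to be disjoint from every $e_j$ with $j \leq i-2$, so the odd- and even-indexed edges form two matchings partitioning $U$, the heavier of which already has weight at least $\tfrac12 w(U)$, and optimality of $U_{opt}$ over $U$ finishes the argument. The paper instead works with $U' = U \setminus U_{opt}$ directly: it observes that $U_{opt}$ never omits three consecutive edges, identifies the ``antihole'' patterns $\times\checkmark\checkmark\times$ that prevent $U'$ from being a proper matching, and iteratively purges them by an exchange argument (swapping the tail of $U'$ for the corresponding tail of $U_{opt}$, which cannot decrease weight by optimality), ending with a proper matching $U''$ satisfying $w(U') \leq w(U'') \leq w(U_{opt})$. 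Your argument avoids all of this case analysis and yields the clean intermediate bound $w(U_{opt}) \geq \tfrac12 w(U)$; the paper's repair procedure is heavier but, as a by-product, makes explicit the structural facts about optimal matchings on paths (e.g.\ every non-selected edge of $U$ is adjacent to a selected one) that are reused later in the proof of Proposition~\ref{proposition:betal_gammal}. One cosmetic remark: rename your even-index class, since $V$ already denotes the vertex set in the paper.
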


\begin{proof} 
    Let $U' = U \setminus U_{opt}$. 
    Hereafter for clarity we explicitly state a matching as ``proper'' to refer to a \textit{valid} matching of the edges (i.e., without adjacent edges) and improper for any other set of edges.
    If $U'$ is a proper matching then the result is trivial.
    Let us analyze the different possibilities for $U_{opt}$.
    To ease the notation, we will use the following convention: a subset $M$ of a subpath $e_j, \dots, e_{j+k}$ of $U$ (for $1 \leq j \leq m-k+1$) formed by $k$ consecutive edges is denoted by a word $u_1 \dots u_k$ over the alphabet $\{\times, \checkmark\}$, such that for $1 \leq i \leq k$, $u_i = \checkmark$ if $e_{j+i} \in M$ and $u_i = \times$ otherwise. 
    In the following, we use the word $u=u_1\ldots u_m$ to denote the edges selected by $U_{opt}$ and the word $\overline{u}=\overline{u}_1\ldots \overline{u}_m$ to denote the edges in $U'$.
    First, observe that $U_{opt}$ does not omit three (or more) consecutive edges $e_j$, $e_{j+1}$ and $e_{j+2}$ (a pattern denoted as $\times \times \times$ in $u$) as adding $e_{j+1}$ makes a better matching than $U_{opt}$ in this case (recall weights are strictly positive here).
    $U_{opt}$ may omit two consecutive edges. By the previous argument, both the adjacent edges to the omitted ones must be in $U_{opt}$, i.e., $u$ can contain the pattern $(...) \checkmark \times \times \checkmark (...)$ that we will refer as a \textit{hole} in $U$ and the reversed pattern as an \textit{antihole} in $U'$ (i.e., a subsequence $\times \checkmark \checkmark \times$ in $\overline{u}$ containing two consecutive selected edges). Note the path itself cannot end in a double omission as the last edge could then be freely added to $U_{opt}$.
    Observe that antiholes are the only pattern preventing $U'$ from being a proper matching.
    Now, let us iteratively define a proper matching $U''$ based on $U'$ and with greater weight than $w(U') = \sum_{e \in U'} w(e)$, extending the $w$ notation to improper matchings.
    For this, set $U'_0 = U'$. 
    Let $i \geq 0$. 
    We will remove the last antihole $H_i = e_j, e_{j+1}, e_{j+2}, e_{j+3}$ in $U'_i$ to produce a new set $U'_{i+1}$ such that: (1) $U'_{i+1}$ has one antihole less than $U'_i$, (2) the part of $U'_{i}$ ``before $H_i$'' is identical in $U'_{i+1}$ and (3) $w(U'_{i+1}) \geq w(U'_i)$. 
    Since by hypothesis the edges of $H_i$ have not been ``modified'' in $U'_{i}$, the induction argument is proven by considering how the antihole $H_i$ appears in $U_{opt}$ and what follows it in the optimal matching.

    Formally, let $j$ be the starting position of the antihole $H_i$ and $\overline{u}^{(i)}$ the word representing $U'_i$.
    We construct the next word $\overline{u}^{(i+1)}$ as follows:
    \begin{equation*} 
        \overline{u}^{(i+1)} = \overline{u}^{(i)}_{1}\cdots  \underbrace{\overline{u}^{(i)}_{j} \cdot \overline{u}^{(i)}_{j+1}\cdot u_{j+2} \cdot u_{j+3}}_\text{$\checkmark \times \times \checkmark$ in $U_{opt}$ and $\times \checkmark \checkmark \times$ in $U'_i$}  \cdots u_m.
    \end{equation*} 

    By construction of $\overline{u}^{(i+1)}$, both (1) and (2) hold.
    
    Denote $X_i = \{ e_r \in U_{opt} \;|\; r \geq j+2\}$ and $X'_i = \{ e_r \in U'_i \;|\; r \geq j+2\}$.
    Let us verify now that $w(X_i) \ge w(X'_i)$.
    Suppose the cumulative weight of the edges of $X'_i$ is greater than $w(X_i)$, then since by hypothesis $\overline{u}^{(i)}_{j+2}\cdots \overline{u}^{(i)}_m$ does not contain any antiholes and we have $u_{j+1} = \times$, the set $U_{opt} \setminus X_i \cup X'_i$ is a proper matching with greater weight than $U_{opt}$, hence a contradiction. 
    Thus, $w(U'_{i+1}) = w(U'_i) - w(X'_i) + w(X_i) \geq w(U'_i)$.
    
    Therefore, all three induction hypotheses hold: (1) $\overline{u}^{(i+1)}$ contains one antihole less than $\overline{u}^{(i)}$, (2) $\overline{u}^{(i+1)}_r = \overline{u}^{(i)}_r$ for $1 \leq r \leq j$, and (3) the weight condition $w(U'_{i+1})\ge w(U'_i)$.
    
    Denote $U''$ the set obtained after purging all antiholes from $U'$ by the above procedure, i.e. $U'' = U'_{k}$ where $k$ is the initial number of antiholes in $U'$. By definition, $U''$ is a proper matching of $U$ and we have $w(U'') \geq w(U')$ by induction. This concludes the proof because $U''$ as a proper matching also entails $w(U'') \leq w(U_{opt})$.
    
\end{proof}

\begin{figure*}[t]
      \centering
\begin{minipage}{0.535\linewidth}
\begin{algorithm}[H]
\footnotesize
\SetKwInOut{Input}{Input}
\SetKwInOut{Output}{Output}
\SetKwInOut{Upon}{Event}
   \Input{
A bipartite graph $G = (P \cup C, E)$ with sets $P = p_1, p_2, \dots, p_{\np}$ and $C = c_1,c_2, \ldots, c_{\nc}$
}
\Output{$M$, a matching of $E$\;}
\vspace{0.1cm}
  \SetKwFunction{algo}{next\_edge\_greedy\_path}

    \tcp{Initialization of the procedure}
    $M, i \leftarrow \emptyset, 1$ \; \label{alg1:line1}

    \ForEach{$x \in P \cup C$}{ 
        $\text{available}_x \leftarrow $ True \; \label{alg1:line3}
    }

    \tcp{Loop till all nodes in $P$ are matched}
    \While{$i \leq \np$}
    {
        \eIf{$\text{available}_i$}
        {
            \tcp{initialize the greedy path}
            $U \leftarrow []$ \;
            \tcp{endpoint of the path $U$}
            $u \leftarrow p_i$\;
            \Repeat{
                $v = 0$
            }
            { 
                \tcp{try to extend the path}
                $v \leftarrow$\algo{$u$,$U$}; 
                \If{ $v \not= 0$}{  
                    \tcp{add edge $(u,v)$ to $U$}
                    $U.$\textsf{append}$((u,v))$\;
                    \tcp{continue then from $v$}
                    $u \leftarrow v$\; 
                }
            }   

            \eIf{$|U| = 0$}{
                 $i \leftarrow i + 1$\;
            }{
                $M_p  \leftarrow$ \texttt{OptimalPath}($U$)\; \label{alg4:optimal_path}
                \ForEach{$(p_x,c_y) \in M_p$}{ \label{alg4:selected_edges}
                    $\text{available}_x \leftarrow$ False\;
                    $\text{available}_y \leftarrow$ False\;
                    $M \leftarrow M \cup \{(p_x, c_y)\}$\;
                }
            }

        }{
            $i \leftarrow i + 1$\;
        }
    }
\Return $M$\;
\caption{$\ell$-Double-Greedy Matching} \label{alg:double_greedy} 
\end{algorithm}
\end{minipage}%
\hfill%
\begin{minipage}{0.4625\linewidth}
\begin{algorithm}[H]
\footnotesize
  \setcounter{AlgoLine}{21} 
  \SetKwFunction{algo}{next\_edge\_greedy\_path}
  \SetKwFunction{proc}{proc}
  \SetKwProg{myalg}{Function}{}{}
  \tcp{Uses inputs \& variable $\text{available}_x$ from Alg.~\ref{alg:double_greedy}}
  \myalg{\algo{u,U}}{ \tcp{Find next edge from node $u$ on the greedy path $U$}
   \eIf{$u \in P$}
    {
     $k \leftarrow \nc$\; 
    }{
     $k \leftarrow \np$\; 
    }

    \tcp{Consider all possible ``forward'' edges}
    $N \leftarrow \{ 1 \leq x \leq k  \;|\; \{u,x\} \in E \land  \text{available}_x \land \not\exists y, \{x,y\} \in U \}$ \;  \label{alg4:greedy_condition}

    \eIf{$N \not= \emptyset$}
    {
        \eIf{$|N| > 1$}{
            \tcp{Keep the $\ell+1$ first values}
            $N \leftarrow N[:\ell+1]$ \;
            \ForEach{$j \in N$}{
                \tcp{Get $w(u,x)$ when $u\in P$ otherwise $w(x,u)$}
                $b_x \leftarrow$ \textsf{weight}$(u, x)$\;
            }
            \tcp{As in Alg.~\ref{alg:matching1}, line~\ref{line:argmax}}
            $j \leftarrow \argmax_{x \in N} \; b_{x}$\;
        }
        {
            $j \leftarrow N[1]$; \tcp{1st value} 
        }
    }
    {
     \tcp{$0$ stands for ``end of path''}
     $j \leftarrow 0$\;
    }
    \tcp{Returns next endpoint}
  \KwRet $j$ \;}{}
\end{algorithm}
\end{minipage}
\end{figure*}

\begin{proposition} \label{proposition:betal_gammal} %
Under both $\beta_\ell$-$\ell$-weak $P$-order and $\gamma_\ell$-$\ell$-weak $C$-order assumptions, Alg.~\ref{alg:double_greedy} is $(2 \cdot \max\{1, \beta_\ell, \gamma_\ell\})$-approximate and computes at most $3 \cdot (\ell + 1) \cdot n$ weights.
\end{proposition}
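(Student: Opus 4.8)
The plan is to establish the two bounds separately.

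\textbf{Query complexity.} A weight is computed only inside the greedy‑path extension subroutine, and only when the candidate set $N$ satisfies $|N|>1$; in that case it queries at most $\ell+1$ weights (after the truncation $N\leftarrow N[:\ell+1]$) and necessarily returns a nonzero vertex, hence appends exactly one edge to the current greedy path. So the number of queried weights is at most $(\ell+1)$ times the total number of edges ever appended to the greedy paths $U_1,\dots,U_t$, i.e. at most $(\ell+1)\sum_r|U_r|$. For each round $r$, the matching $M'_r=\texttt{OptimalPath}(U_r)$ is a maximum‑weight matching of the path $U_r$; since all weights are strictly positive it is \emph{maximal}, and a maximal matching of a path on $L$ edges has at least $L/3$ edges (each omitted edge is adjacent to a selected one, and a selected edge has at most two path‑neighbours). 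Hence $|U_r|\le 3|M'_r|$. Since a matched vertex is immediately made unavailable, the $M'_r$ are vertex‑disjoint, so $\sum_r|M'_r|=|M'|\le n$, and therefore $\sum_r|U_r|\le 3\sum_r|M'_r|\le 3n$, giving the announced bound $3(\ell+1)n$.

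\textbf{Approximation ratio: setup.} Put $\kappa=\max\{1,\beta_\ell,\gamma_\ell\}$, let $M$ be an optimal matching, $M'=\bigcup_r M'_r$ the output, and $\mathcal U=\bigcup_r U_r$. I would first observe that the greedy paths are pairwise edge‑disjoint: a selected path edge has both endpoints made unavailable, whereas an omitted path edge of $U_r$ has a selected path‑neighbour (a maximum‑weight matching of a path omits no three consecutive edges) and hence a matched endpoint; in either case the edge cannot be re‑appended on a later path. Thus $w(\mathcal U)=\sum_r w(U_r)$, and Lemma~\ref{lemma:path} applied to each $U_r$ gives $w(M'_r)\ge\sum_{e\in U_r\setminus M'_r}w(e)$, i.e. $w(U_r)\le 2w(M'_r)$; summing, $w(\mathcal U)\le 2w(M')$. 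It then suffices to build an \emph{injective} map $g\colon M\to\mathcal U$ with $w(e)\le\kappa\,w(g(e))$ for all $e\in M$, since then $w(M)\le\kappa\sum_{e\in M}w(g(e))\le\kappa\,w(\mathcal U)\le 2\kappa\,w(M')=2\max\{1,\beta_\ell,\gamma_\ell\}\cdot w(M')$.

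\textbf{The charging map.} For $e\in\mathcal U$ set $g(e)=e$. Otherwise $e=(p_a,c_b)\notin M'$; here I would first check that $M'$ is a \emph{maximal} matching of $G$ — whenever the main loop advances past an index $i$, either $p_i$ is already matched or it has no available neighbour, and availability is monotone, so at termination every vertex of $P$ is matched or has all its neighbours matched. Hence at least one endpoint of $e$ lies on some greedy path; let $r^\ast$ be the first round in which $p_a$ or $c_b$ occurs on $U_{r^\ast}$, and let $v$ be whichever of the two occurs earlier on that path. At the call to the extension subroutine from $v$ while building $U_{r^\ast}$, the other endpoint $w$ of $e$ is still available (it could only be matched at an \texttt{OptimalPath} step of a round $\ge r^\ast$) and is not yet on the current path, so $e$ is a \emph{forward} edge out of $v$; consequently $v$ is not the last vertex of $U_{r^\ast}$ and it appends some forward edge $g(e)=(v,v')\in U_{r^\ast}$. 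If $w$ is among $v$'s first $\ell+1$ available forward neighbours then $w(e)\le w(g(e))$, as $g(e)$ is the locally heaviest among them; otherwise there are at least $\ell$ edges of $E$ incident to $v$ strictly between $v$'s first available forward neighbour $v''$ and $w$ in the relevant order, so Assumption~\ref{assumption_gamma_l} (if $v\in P$) or Assumption~\ref{assumption_beta_l} (if $v\in C$) yields $w(e)\le\max\{\beta_\ell,\gamma_\ell\}\,w(v,v'')\le\kappa\,w(g(e))$. Injectivity follows because $g(e)$ is always incident to an endpoint $v$ of $e$ and is the unique edge appended from $v$ on a fixed path, while no two edges of the matching $M$ share the vertex $v$.

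\textbf{Main obstacle.} The delicate step is the construction of $g$ and its injectivity: one must treat \emph{backward} edges — those dropped from $N$ because they touch a vertex already on the current path — carefully when $p_a$ and $c_b$ end up on the same greedy path, and one must verify that in the ``$w$ not among the first $\ell+1$'' case the gap of at least $\ell$ incident edges required by Assumption~\ref{assumption_gamma_l}/\ref{assumption_beta_l} is genuinely present. Obtaining the factor $2$ rather than $4$ hinges on charging $M$ \emph{injectively} into $\mathcal U$ and only afterwards paying the factor $2$ through Lemma~\ref{lemma:path}; a direct charge of $M$ onto $M'$ would double‑count blocking edges.
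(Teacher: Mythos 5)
Your proposal is correct and follows essentially the same route as the paper's proof: the same injective charging of optimal edges onto greedy-path edges (charging each $e\in M$ to the edge appended from its endpoint that first appears on a path, bounded via the first $\ell+1$ available forward neighbours and the weak-order assumptions), the same use of Lemma~\ref{lemma:path} to pay the factor $2$ from path weight to selected weight, and the same counting argument ($|U_r|\le 3|M'_r|$, $\sum_r|M'_r|\le n$) for the $3(\ell+1)n$ query bound. The only differences are organizational — you charge globally into $\bigcup_r U_r$ and justify well-definedness via maximality of the output instead of the paper's per-iteration ``elimination'' bookkeeping and explicit termination argument (termination being immediate in your setup, since every while-iteration either advances $i$ or adds an edge to $M$) — and the ``obstacles'' you flag are in fact already resolved by your choice of $v$ as the earlier endpoint on the path.
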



\begin{proof}
The proof follows a different proof schema as the previous ones and we shall this time directly bound each edge of the optimal solution.
First, let us note $M_{opt}$ for the optimal matching of the edges of $E$ and $M$ for the matching obtained by the algorithm.
During the algorithm execution, we refer to any edge $e \in E$ as being ``eliminated'' once the edge cannot be selected in subsequent steps of the algorithm; i.e., the edge $(p_x,c_y)$ is eliminated whenever we have either $\texttt{available}_x = \texttt{False}$ or $\texttt{available}_y = \texttt{False}$. 
Since endpoints are never reconsidered in the algorithm, an eliminated edge stays eliminated till the end of the algorithm.
In the first part of the proof, we show that any edge $e \in M_{opt}$ is eventually eliminated (as any other edge of $E$).
In the second part, we prove the bound on the weight of the produced matching following at its core the same argument behind Algorithm~\ref{alg:matching3}'s bounded approximation (cf. Proposition~\ref{prop_gamma_l}). 
The last part deals with the number of computed weights.

\paragraph{Termination of the algorithm}
Let us show that any edge $e \in E$ is eventually eliminated by the algorithm. 
For the sake of contradiction, suppose there exists an edge $e = (p_x,c_y)$ that is not eliminated by the end of the algorithm and let us consider the first loop iteration when $i = x$. Since the edge is not eliminated, we have $\texttt{available}_i = \texttt{True}$, so the greedy path subroutine is executed with $p_i$ as a starting endpoint.
First, note that the condition set at line~\ref{alg4:greedy_condition} in the subroutine forces the creation of a \textit{path}: cycles are forbidden as any edge sharing an endpoint with a previously considered node of $U$ (hereafter designated as ``backward'' edges) is discarded from being chosen by the procedure.
Now observe that, since the algorithm adds at line~\ref{alg4:optimal_path} all edges belonging to the optimal solution for the path, 
there cannot be an edge of $U$ that is not eliminated (i.e., in the optimal solution all non-selected edges have at least one of its adjacent edges being selected, cf. the proof of Lemma~\ref{lemma:path}).
Thus, $(p_x,c_y)$ must be different from the first edge (in $\sigma_C$ order) of $U$ not to be eliminated at this step, however, note that this step always eliminates at least one edge in $p_x$'s neighborhood.
Hence, the only way for the starting edge not to be eventually eliminated is for the algorithm never to set $\texttt{available}_x = \texttt{False}$ (and thus increase the value for $i$) during the greedy path's subroutine and hence to loop forever on it. However, since one edge in $p_x$'s neighborhood is eliminated every loop iteration, eventually the edge $(p_x,c_y)$ will be part of the greedy path (or no edges are available but this contradicts that $e$ is still available) and will eventually be eliminated, leading to a contradiction.
Observe that this also shows that the algorithm always terminates as for every $1 \leq i \leq \np$, each time $p_i$ is processed by the greedy path subroutine, the size of its ``available neighborhood'' $|N|$ strictly decreases at each while-loop iteration, eventually reaching $|U| = 0$ when the algorithm jumps to the next iteration.

\paragraph{Bounded approximation}

Since any edge $e \in M_{opt}$ is eventually eliminated, let us associate to each edge $e \in M_{opt}$ the while-loop iteration $\texttt{iter}(e)$ where it becomes eliminated. 
Let $M^i_{opt}$ be the set of edges of the optimal matching that are eliminated during the $i$-th loop iteration, i.e. $M^i_{opt} = \{e \in M_{opt} \; | \; \texttt{iter}(e) = i\}$. Also, denote $M_i$ the set of edges that are added to $M$ during the $i$-th iteration, and $U_i$ the value of $U$ at the end of the greedy iterative loop (line~\ref{alg4:selected_edges}) during the same iteration.
By the above claim, the algorithm always terminates, in let us say $r$ loop iterations, so we have  $M_{opt} = \bigcup_{i=1}^{r} M^i_{opt}$ and $M = \bigcup_{i=1}^{r} M_i$.
Now, let us consider the possible reasons for the edges of $M_{opt}$ to be eliminated during a given iteration $I$ (with $i$ being the value of the algorithm's variable $i$ during that iteration):
\begin{enumerate}
    \item[(1)] If the edge $e \in M_{opt}$ is selected by the algorithm at line~\ref{alg4:selected_edges}, it becomes eliminated. Note that in that case, $e \in M$ as well.
    \item[(2)] Suppose the edge $e \in M_{opt}$ is not selected at line~\ref{alg4:selected_edges}. Let us differentiate two subcases:
    \begin{enumerate}
        \item[a.] The edge $e$ was added to the iteratively constructed greedy path $U$. 
        Then, by the above termination arguments, it is also eliminated along all the other non-selected edges forming $U$.
        \item[b.] The edge $e$ was not added to $U$. Here, $e$ is eliminated because it shares an endpoint with one of the selected edges that belong to the path $U$.
    \end{enumerate}
\end{enumerate}

Let us bound the weight of $e$ for each of the possible situations. To ease with the notations, denote $U_I = e_1, \dots, e_k$ the edges forming $U_I$ in the same order as they are added during iteration $I$ (note $U_I$ is possibly empty when all $p_i$'s neighbors are not available at iteration $I$'s start). 
We define the function $f: M^I_{opt} \rightarrow U_I$ projecting the edges of the optimal matching onto the ones of the greedy path.

In Case (1), we set $f(e) = e$ and obviously have $w(e) \leq w(f(e))$. 

In Case (2).a, $e \in U_I$ but $e$ is not selected by Alg.~\ref{alg:double_greedy}; in that case, we also set $f(e) = e$. 

At last let us consider the remaining Case (2).b. Denote $e'$ the first edge of $U_I$ that eliminated $e$ during $I$ (thus, $e'$ shares an endpoint with $e$). 
By definition, $e$ is a forward edge at the step when $e'$ is added to the greedy path as otherwise $e'$ would not be the first edge eliminating $e$.
Let us write $e' = (u,v)$ so that $u$ was the endpoint that was the first parameter of \texttt{next\_edge\_greedy\_path} (also noted $u$ in the algorithm) and $v$ the node returned by the procedure. 
We reason now on the following three cases:
\begin{enumerate}
    \item $e' \cap e = \{u\}$. Set $f(e) = e'$.  
    According to the algorithm, $(u,v)$ has the maximum weight among $N_\ell(u)$, the first $\ell + 1$ available edges of $u$, following the order $\sigma_P$ (resp. $\sigma_C$) of nodes in $P$ (resp. $C$) if $u \in C$ (resp. $u \in P$). 
    Suppose $e$ is among $N_\ell(u)$, then $w(e) \leq w(e')$ as $e'$ has the local maximum weight.
    Suppose $e$ was not among $N_\ell(u)$, then we have:
\begin{itemize}
    \item if $u \in P$, then $v \in C$ and $w(e) \leq \gamma_\ell \cdot w(e'')$ where $e''$ is an edge in $N_\ell(u)$ that is at least $\ell$ edges away from $e'$ in $\sigma_C$ order (the fact that $e''$ exists relies on the same arguments used in the proof of Proposition~\ref{prop_gamma_l}, refer to the proof for further details). Since $w(e'') \leq w(e')$, 
    we get $w(e) \leq \gamma_\ell \cdot w(e')$.
    \item Analogously, we get that if $u \in C$, then $v \in P$ and $w(e) \leq \beta_\ell \cdot w(e')$ by a symmetric argument to the above subcase.
\end{itemize}
    Hence, we have $w(e) \leq \max\{1,\gamma_\ell,\beta_\ell\} \cdot w(f(e))$ regardless if $e$ was in $N_\ell(u)$ or not.

    \item $e' \cap e = \{v\}$ and so that $e' = e_j$ with $1 \leq j \leq k-1$. In other words, $e'$ is not the last edge of $U$.
    Observe that because $e_j \in M$, we have $e_{j+1} \not\in M$.
    We can obtain similar bounds as in the first case but using this time the edge $f(e) = e_{j+1}$, that is the edge that was returned after calling \texttt{next\_edge\_greedy\_path} with first parameter $v$, and since $j < k$ such an edge appears in $U_I$.
    Observe that $e \cap f(e) = \{v\}$ and $e$ cannot be a backward edge as $e'$ is the first edge eliminating $e$, and thus the same arguments as in Case 1 above hold but this time with $e_{j+1}$.
    Hence, we also have $w(e) \leq \max\{1,\gamma_\ell,\beta_\ell\} \cdot w(f(e))$. 
    
    \item $e' \cap e = \{v\}$ and $e' = e_k$, i.e., $e'$ is the last edge of the greedy path $U_I$ and is selected by the algorithm. 
    This case is not possible for two reasons: $e$ is available at that iteration by hypothesis, and $e$ is not a backward edge for $e'$. Hence it is a forward edge in $v$'s neighborhood, but then the greedy path subroutine would have continued to build $U_I$ selecting $e$ on the path if such an edge existed in $v$'s neighborhood.
\end{enumerate}

Hence, for every eliminated edge $e$ of the optimal matching, one can bound its weight in regards to an edge $e_j$ of the greedy path and that is ``responsible'' for the elimination of $e$.
Importantly, $f$ is an \textit{injection} by analyzing the different cases: 
\begin{itemize}
    \item Cases (1) and (2).a. Since $f(e) = e$, there cannot be another optimal edge $e' \in M^I_{opt}$ with $e' \not= e$ also projecting on $e$ as for any edge $x \in M^I_{opt}$, $x$ and $f(x)$ always share an endpoint (in all 3 cases) by construction, forbidding such $e'$ in $M_{opt}$.  
    \item Case (2).b. 
    Suppose there exist two distinct edges $e_1 \in M^I_{opt}$ and $e_2 \in M^I_{opt}$ so that $e' = f(e_1) = f(e_2)$ (obviously $e'$ is distinct from both $e_1$ and $e_2$). 
    Since for every edge $x \in M^I_{opt}$ we have $f(x) \cap x \not= \emptyset$, the three edges $e_1$, $e'$ and $e_2$ must form a path of $3$ edges with $e'$ in the center. 
    Since $f(x)$ is always an edge of $U_I$ in all 3 cases, $e' \in U_I$ and thus $e' = (u,v)$ with $(u,v)$ being a forward edge and $u$ the first node processed by the greedy path subroutine; w.l.o.g. assume $u \in e_1$. 
    By definition of $f$, $e_2$ is projected by $f$ on the next edge of the path so a different edge from $e_2$ if $e_2$ is a forward edge, and if $e_2$ is a backward edge then it is also sent on a previous edge of the path different from $f(e_1)$. 
    Hence, this contradicts $f(e_1) = f(e_2)$.
\end{itemize}

At last, observe that since $M_p$ is chosen optimally among the edges of $U_I$, necessarily we have $w(M_I) \geq \sum_{e \in U_I \setminus M_I} w(e)$ by applying Lemma~\ref{lemma:path} on the path $U_I$.

We are ready to combine the different elements of the proof to obtain a general bound: 
\begin{align*}
    w(M^I_{opt}) &\leq \sum_{e \in M^I_{opt}} w(e) \leq \sum_{e \in M^I_{opt}} \max\{1,\gamma_\ell,\beta_\ell\} \cdot w(f(e)) \\
    &\leq  \max\{1,\gamma_\ell,\beta_\ell\} \left( \sum_{e' \in M_I} w(e') + \sum_{e' \in U_I \setminus M_I} w(e')\right) \\
    &\leq\max\{1,\gamma_\ell,\beta_\ell\} \cdot 2 \cdot w(M_I).  \\
\end{align*}

\vspace{-0.5cm}
Summing over all iterations of the algorithm, we obtain: $$w(M_{opt}) \leq \sum_{i=1}^r w(M^i_{opt})  \leq \sum_{i=1}^r 2 \cdot \max\{1, \beta_\ell, \gamma_\ell\} \cdot w(M_i) \leq  2 \cdot \max\{1, \beta_\ell, \gamma_\ell\} \cdot  w(M).$$

\paragraph{Number of weight calculations}

Let us analyze how many weights are calculated by all iterations. Obviously, if $p_i$ is not available, no further weights are calculated and the algorithm moves to the next iteration. Otherwise, a greedy path is constructed. To do so, for a greedy path of $k$ edges, $(\ell+1) \cdot k$ weights are at most calculated: $\ell +1 $ for each starting endpoint $u$ of each edge $(u,v)$ of the path and $0$ for the last call to \texttt{next\_edge\_greedy\_path} as the path ends when the set $N$ of candidates for the next edge is empty (weights are only calculated when $|N|\geq 2$). 

Over all calculated weights, at least $\lfloor \frac{k}{2} \rfloor$ edges are added to the matching as an optimal matching $M_p$ is always maximal over $U$ (all edges of $U$ have endpoints in $M_p$). 
Doing so makes as many nodes in $P$ and nodes in $C$ non-available and such nodes will not trigger any subsequent weight computation for any of its adjacent edges.
Hence, noting $k_i = |U_i|$ the length (in edges) of the greedy path at the $i$-th iteration (potentially zero), we get that in total the number of weight calculations is at most $(\ell+1) \sum_{i=1}^r k_i$. 

In the worst case (in terms of number of weight calculations), $k_i = 3$ and at each iteration a single edge is added to the matching. Thus, we have $\sum_{i=1}^r k_i \leq 3 \cdot |M|$ and since $|M| \leq n$, we get that at most $3 \cdot (\ell+1) \cdot n = \mathcal{O}(\ell \cdot n)$ weights are ever calculated.

\end{proof}

We would like to highlight that, in Alg.~\ref{alg:double_greedy}, we have chosen to use an identical value for the constant $\ell$ used to explore the local neighborhood in an analogous way whether one deals with a node from $P$ or a node from $C$ (and assuming both weak orders also hold for the same value of $\ell$). This choice simplifies both the algorithm's design and its proof, while reaching the target of a linear number of weight calculations (whenever $\ell$ is constant). The algorithm could be extended to use two different values $\ell_1$ and $\ell_2$ for each set $P$ and $C$, with all arguments being valid when setting $\ell = \max\{\ell_1, \ell_2\}$.

We finally show the lower bound for the presented algorithm.

\begin{figure}
    \centering
    \begin{tikzpicture}[xscale=2,yscale=0.4]
        \node (p1) at (0,0) {$p_1$};
        \node (c0) at (2,-0.55) {$\vdots$};
        \node (c1) at (2,1) {$c_1$};
        \node (c2) at (2,-2) {$c_{\ell+2}$};

        \node (p2) at (4,1) {$p_2$};
        \node (p0) at (4,-0.55) {$\vdots$};
        \node (p3) at (4,-2) {$p_{\ell+3}$};
        
        \draw[-] (p1) -- node[above, near end] {$1 + \varepsilon$} (c1);
        \draw[-] (p1) -- node[above, near end]  {$0.5$} (c0);
        \draw[-] (p1) -- node[below, midway]  {$(1 + \varepsilon) \cdot \gamma_\ell$} (c2);
        
        \draw[-] (c1) -- node[above, near end]  {$1$} (p2);
        \draw[-] (c1) -- node[above, near end]  {$0.5$} (p0);
        \draw[-] (c1) -- node[below, midway]  {$\beta_\ell$} (p3);
    \end{tikzpicture}
    \caption{Example for the lower bound on the approximation ratio of Alg.~\ref{alg:double_greedy}.
    }
    \label{fig:counter_example_betal_gammal}
\end{figure}
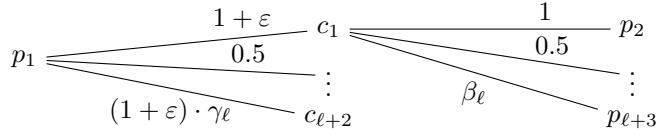

\begin{proposition} \label{example_betal_gammal} 
There exist 
$\beta_\ell > 0$, $\gamma_\ell > 0$ and graph instances and heuristic orders where, under both $\beta_\ell$-$\ell$-weak $P$-order and $\gamma_\ell$-$\ell$-weak $C$-order assumptions, Alg.~\ref{alg:double_greedy} is at best $(2 \cdot \max\{1, \beta_\ell, \gamma_\ell\})/(1+\varepsilon)$-approximate for any $\varepsilon > 0$.
\end{proposition}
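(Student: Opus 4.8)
The plan is to reuse the instance of Figure~\ref{fig:counter_example_betal_gammal}, equipped with the orders $\sigma_P = p_1, p_2, \dots, p_{\ell+3}$ and $\sigma_C = c_1, c_2, \dots, c_{\ell+2}$, and to fix $\beta_\ell = \gamma_\ell = \mu$ for an arbitrary constant $\mu \ge 1$, so that $\max\{1,\beta_\ell,\gamma_\ell\}=\mu$. The first step is to certify that both weak-order assumptions hold. The only vertex of $P$ adjacent to more than one vertex of $C$ is $p_1$, whose neighbourhood in $\sigma_C$ order is $c_1,\dots,c_{\ell+2}$, and the only pair $(c_i,c_j)$, $i<j$, with at least $\ell$ neighbours of $p_1$ strictly in between is $(c_1,c_{\ell+2})$, for which $w(p_1,c_{\ell+2})=(1+\varepsilon)\gamma_\ell \le \gamma_\ell\cdot w(p_1,c_1)=\gamma_\ell(1+\varepsilon)$ holds (with equality); hence Assumption~\ref{assumption_gamma_l} holds. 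Symmetrically, the only vertex of $C$ adjacent to more than one vertex of $P$ is $c_1$, whose neighbourhood in $\sigma_P$ order is $p_1,\dots,p_{\ell+3}$, and the only pairs with at least $\ell$ neighbours of $c_1$ strictly in between are $(p_1,p_{\ell+2})$, $(p_1,p_{\ell+3})$ and $(p_2,p_{\ell+3})$, whose associated inequalities $0.5\le\beta_\ell(1+\varepsilon)$, $\beta_\ell\le\beta_\ell(1+\varepsilon)$ and $\beta_\ell\le\beta_\ell$ all hold for $\beta_\ell=\mu\ge 1$; hence Assumption~\ref{assumption_beta_l} holds as well.

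The second step is to trace Alg.~\ref{alg:double_greedy} on this instance. It begins at $i=1$ with $p_1$ available and starts a greedy path at $u=p_1$; since $p_1$ has $\ell+2$ available neighbours, \texttt{next\_edge\_greedy\_path} keeps only the first $\ell+1$ of them in $\sigma_C$ order, namely $c_1,\dots,c_{\ell+1}$, so the heavy edge $(p_1,c_{\ell+2})$ of weight $(1+\varepsilon)\gamma_\ell$ is never examined and among the queried weights $1+\varepsilon,0.5,\dots,0.5$ the local maximum is $(p_1,c_1)$. The path then extends from $c_1$, whose available neighbours (excluding $p_1$, already on the path) are $p_2,\dots,p_{\ell+3}$; again only the first $\ell+1$ of them, $p_2,\dots,p_{\ell+2}$, are considered, so the heavy edge $(c_1,p_{\ell+3})$ of weight $\beta_\ell$ is missed and the local maximum is $(c_1,p_2)$ of weight $1$. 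From $p_2$ the path cannot be extended (its only neighbour $c_1$ is already on the path), so $U=[(p_1,c_1),(c_1,p_2)]$, and \texttt{OptimalPath}$(U)$ on this two-edge path returns the heavier edge $(p_1,c_1)$, making $p_1$ and $c_1$ unavailable. Every later iteration processes some $p_i$ with $2\le i\le\ell+3$ whose only neighbour is the unavailable $c_1$, so no further edge is ever added and the algorithm outputs $M=\{(p_1,c_1)\}$ with $w(M)=1+\varepsilon$.

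The third step is to identify the optimum: $M_{opt}=\{(p_1,c_{\ell+2}),(c_1,p_{\ell+3})\}$ is a vertex-disjoint matching of weight $(1+\varepsilon)\gamma_\ell+\beta_\ell=\mu(2+\varepsilon)$, and since every edge of the instance is incident to $p_1$ or to $c_1$, any matching uses at most these two ``slots'' and (using $\gamma_\ell,\beta_\ell\ge 1$) the chosen edges are the heaviest available for $p_1$ and for $c_1$ respectively, so no matching is heavier. Therefore the ratio realised by Alg.~\ref{alg:double_greedy} on this family of instances is $w(M_{opt})/w(M)=\mu(2+\varepsilon)/(1+\varepsilon)=2\mu/(1+\varepsilon)+\mu\varepsilon/(1+\varepsilon)\ge 2\mu/(1+\varepsilon)=2\max\{1,\beta_\ell,\gamma_\ell\}/(1+\varepsilon)$, as claimed. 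The only delicate part of the argument is the combinatorial bookkeeping: counting exactly how many neighbours of a given vertex lie strictly between two others in the heuristic orders (to certify the weak-order assumptions), and crucially verifying that the two heavy edges sit at position $\ell+2$ in the relevant neighbourhoods so that the $(\ell+1)$-bounded greedy path provably overlooks both; once those positional facts are pinned down, the rest is direct arithmetic.
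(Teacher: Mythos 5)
Your proposal is correct and follows essentially the same route as the paper: the identical instance of Figure~\ref{fig:counter_example_betal_gammal} with the same heuristic orders, the same trace of the greedy path $p_1,c_1,p_2$ whose $(\ell+1)$-bounded neighbourhoods miss the two heavy edges, and the same optimal matching $\{(p_1,c_{\ell+2}),(c_1,p_{\ell+3})\}$. The only difference is the parameter choice ($\beta_\ell=\gamma_\ell=\mu\ge 1$ instead of the paper's $\beta_\ell=(1-\varepsilon)\gamma_\ell$ with $\gamma_\ell>1+\varepsilon$), which still gives an optimum of weight at least $2\mu$ against the algorithm's $1+\varepsilon$ and hence the claimed ratio, so the variation is immaterial.
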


\begin{proof}
    We consider the example presented by Figure~\ref{fig:counter_example_betal_gammal}, with the following heuristic orders: $\sigma_P = p_1, \dots, p_{\ell+3}$ and $\sigma_C = c_1, \dots, c_{\ell+2}$. In this example, all edges of the form $(p_1, c_j)$ for $2 \leq j \leq \ell+1$ and $(c_1, p_j)$ for $3 \leq j \leq \ell+2$ have all weight 0.5.
    
    First, let us calculate the weight of the output of Alg.~\ref{alg:double_greedy}. The algorithm starts by considering $p_1$ and builds the greedy path $U = p_1, c_1, p_2$ as those edges $\{(p_1,c_1), (c_1, p_2)\}$ are the maxima in the $(\ell+1)$ local neighborhood of each considered node. 
    Then Alg.~\ref{alg:double_greedy} selects the best matching over $U$ as $(p_1,c_1)$ with weight $1+\varepsilon$ which eliminates all edges of the graph.
    Whenever $\beta_\ell + \gamma_\ell \geq 1+\varepsilon$, the optimal matching is, however, the one formed by selecting $(p_1,c_{\ell+2})$ and $(p_{\ell+3},c_1)$ of weight $\beta_\ell + (1+\varepsilon) \cdot \gamma_\ell$.
    By considering the case $\beta_\ell = (1-\varepsilon) \cdot  \gamma_\ell$ with $\gamma_\ell > 1 + \varepsilon$, the optimal matching has weight $2 \cdot \gamma_\ell = 2 \cdot \max \{\beta_\ell, \gamma_\ell, 1\}$  to compare with $1+\varepsilon$ for Alg.~\ref{alg:double_greedy}'s matching, hence retrieving the desired lower bound for the approximation ratio of Alg.~\ref{alg:double_greedy}.
    
\end{proof}

Let us note that $\varepsilon$ was only introduced in the above proposition so not to have to make the algorithm for choosing the optimal path explicit, and to avoid introducing several examples depending on the chosen matching. If ties were explicitly broken in the calculation of the optimal matching over path (i.e., one always includes the first edge from the path's ``start''), then we can replace in Figure~\ref{fig:counter_example_betal_gammal}'s example $w(p_1,c_1)$ by $1$, $w(p_1,c_{\ell+2})$ by $\gamma_\ell$ and lift $1+\varepsilon$ from Proposition~\ref{example_betal_gammal}.

\subsection{Order Oracles on the Edge Set} \label{sec:edge_orders}

\begin{figure*}[t]
      \centering
\begin{minipage}{0.4\linewidth}
\begin{algorithm}[H]
\footnotesize
\SetKwInOut{Input}{Input}
\SetKwInOut{Output}{Output}
\SetKwInOut{Upon}{Event}
\vspace{0.045cm}
\Input{
A graph $G = (V, E)$ with ordered edge set $E = e_1, \dots, e_m$
}
\Output{$M$, a matching of $E$\;}
    \tcp{Initialization steps}
    $M \leftarrow \emptyset$ \;
    
    \ForEach{$x \in V$}{ 
        $\texttt{available}_x \leftarrow $ True \; 
    }

    \tcp{Process edges one by one}
    \For{$1 \leq i \leq m$}{ 
            $(u, v) \leftarrow e_i$\; 
            \If{$\text{available}_u$ \hbox{and} $\text{available}_v$}
            {
                $M \leftarrow M \cup \{u,v\}$\;
                $\texttt{available}_u$ $\leftarrow$ \texttt{False}\;
                 $\texttt{available}_v$ $\leftarrow$ \texttt{False}\;
            }
        }
\Return $M$\;
\caption{Naive-Edge Matching} \label{alg5:naive_edge} 
\end{algorithm}
\end{minipage}%
\hfill%
\begin{minipage}{0.595\linewidth}
\begin{algorithm}[H]
\footnotesize
    $i \leftarrow 1$\;
    \While{$i \leq m$}{ 
        $(u,v) \leftarrow e_i$\;
        \eIf{$\text{available}_u \land \text{available}_v$}
        {
            $N \leftarrow \{ e_j = (u,v) \in E, i \leq j \leq i+\ell+1 \land  \text{available}_u \land \text{available}_v \}$ \;
            $j \leftarrow 1$ \;
            \If{$|N| > 1$}{
                   $N \leftarrow N[:\ell+1]$ \;
                    \ForEach{$e_j \in N$}{
                       $b_j \leftarrow$ \textsf{weight}$(e_j)$\;
                    }
                    $j \leftarrow \argmax_{j \in N} \; b_{j}$\; 
            }
            $M \leftarrow M \cup \{e_j\}$\;
            $(u,v) \leftarrow e_j$\;
            $\texttt{available}_u, \texttt{available}_v$ $\leftarrow$ \texttt{False}, \texttt{False}\;
        }
        {
            $i \leftarrow i+1$\;
        }
    }
\Return $M$\;
\caption{Local-Edge Matching} \label{alg6:edge_local} 
\end{algorithm}
\end{minipage}
Alg.~\ref{alg6:edge_local} has the same inputs/outputs and initialization steps as in Alg.~\ref{alg5:naive_edge}.
\end{figure*}

We briefly show here that reasoning in terms of edge order is not as interesting as order oracles over nodes. Also, observe that all algorithms presented here do not need to work on bipartite graphs as input (hence solving the more general graph matching problem rather than the assignment problem).

\subsubsection{Strong Edge Order}

\medskip
Consider the following strong ordering assumption on the edges.

\begin{assumption} \label{assumption_zeta}
($\zeta-$strong $E$-order)
Assume $\zeta \ge 0$ and $E$ is ordered by $\sigma_E = e_1, \dots, e_m$, so that for any $e_i,e_j \in E$ with $1 \leq i < j \leq m$, we have $ w(e_j) \leq \zeta \cdot  w(e_i)$.
\end{assumption}

\begin{proposition} \label{proposition_zeta}
   Under $\zeta-$strong $E$-order, Alg.~\ref{alg5:naive_edge} is $(2\cdot \max\{1,\zeta\})$-approximate without calculating any weights.
\end{proposition}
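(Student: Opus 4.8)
The plan is to mirror the mapping argument used in the proof of Proposition~\ref{theorem_beta} (and reused in Proposition~\ref{claim:beta_gamma}), now applied directly to the linear edge order $\sigma_E$ rather than to a vertex order. Let $M_{opt}$ be an optimal matching of $G$ and $M$ the matching returned by Alg.~\ref{alg5:naive_edge}. First I would observe that Alg.~\ref{alg5:naive_edge} never invokes the \textsf{weight} routine, so the ``without calculating any weights'' part is immediate; it remains only to establish the approximation ratio.

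Next I would define a map $f : M_{opt} \to M$ as follows. If $e \in M_{opt} \cap M$, set $f(e) = e$. Otherwise $e = e_i \in M_{opt} \setminus M$; since the algorithm did not add $e_i$ to $M$ when processing index $i$, at least one endpoint of $e_i$ was already unavailable at that point, which can only happen because some adjacent edge $e_j \in M$ with $j < i$ had been added earlier in the loop. Set $f(e) = e_j$, breaking ties (if both endpoints of $e_i$ are blocked) by taking the blocker of smaller index. The key step is then the weight inequality: in the first case $w(e) = w(f(e))$, and in the second case, because $j < i$, the $\zeta$-strong $E$-order (Assumption~\ref{assumption_zeta}) gives $w(e) = w(e_i) \le \zeta \cdot w(e_j) = \zeta \cdot w(f(e))$. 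Hence $w(e) \le \max\{1,\zeta\}\, w(f(e))$ for every $e \in M_{opt}$.

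Then I would bound the number of preimages of each edge of $M$ under $f$. If $e' \in M \cap M_{opt}$, its only preimage is $e'$ itself, since any other preimage would be an edge of $M_{opt}$ sharing an endpoint with $e'$, impossible in a matching. If $e' \in M \setminus M_{opt}$, any preimage is an edge of $M_{opt}$ incident to one of the two endpoints of $e'$, and $M_{opt}$ contains at most one edge at each endpoint, so $e'$ has at most two preimages. Combining,
\[
w(M_{opt}) = \sum_{e \in M_{opt}} w(e) \le \max\{1,\zeta\} \sum_{e \in M_{opt}} w(f(e)) \le \max\{1,\zeta\} \sum_{e' \in M} 2\, w(e') = 2\max\{1,\zeta\}\, w(M),
\]
which is exactly the claimed $(2\max\{1,\zeta\})$-approximation.

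The argument is essentially the folklore greedy-matching analysis, so no step is genuinely hard; the only point deserving care is checking that $f$ is well defined --- that every optimal edge not selected by the algorithm really is blocked by a strictly earlier edge of $M$, and that the ``both endpoints blocked'' subcase is resolved by a fixed tie-break so that $f$ is a genuine function. (Note also that for $\zeta \le 1$ the order $\sigma_E$ is a non-increasing weight order and Alg.~\ref{alg5:naive_edge} coincides with the classical greedy matching, consistent with the bound reducing to $2$ in that regime.)
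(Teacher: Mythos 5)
Your proof is correct and is essentially the paper's argument: the paper charges each selected edge $e \in M$ with the (at most two) optimal edges it eliminates, each of weight at most $\max\{1,\zeta\}\cdot w(e)$ by the $\zeta$-strong $E$-order, which is exactly your map $f$ written in the reverse direction. Your more explicit write-up (well-definedness of $f$, tie-breaking, preimage count) just spells out the details the paper compresses into two sentences, so there is nothing to add.
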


\begin{proof}
    The proof is simple and follows the main structure as the proof of Proposition~\ref{proposition:betal_gammal}. 
    Consider that every selected edge $e \in M$ eliminates up to $2$ edges of the optimal $M_{opt}$, each of weight bounded by $\zeta \cdot w(e)$.
    Hence overall, $w(M_{opt}) \leq 2 \cdot \max\{1,\zeta\} \cdot w(M)$.
    
\end{proof}

It is interesting to note that one can retrieve the approximation bound of $2$ for the classic greedy algorithm from the previous proposition, as it uses an order over the edges (ranked from highest to lowest weight) that guarantees $\zeta \leq 1$.
As previously mentioned, the edge order imposes here to be capable to compare any pair of edges, which is significantly more constraining than any of the other orders studied before (in particular, see Remark~\ref{remark:comparable_edges}).
Also, one may observe that contrary to Alg.~\ref{alg:matching1}, Alg.~\ref{alg:matching2} and Alg.~\ref{alg:matching3} which output an optimal solution whenever strong enough node orders are provided (i.e., when $\beta + \gamma \leq 1$), the introduced algorithm that exploits the strong edge order is shown not to be better than $2$-approximate even when $\zeta < 1$.
In particular, considering for instance the graph of Figure~\ref{fig:counter_example_gamma}, the condition  $\beta + \gamma \leq 1$ cannot be encoded as a strong edge order.

\subsubsection{Weak Edge Order}

\medskip
One can also consider a weak version for the edge order as follows:

\begin{assumption} \label{assumption_zeta_l}
($\zeta_\ell$-$\ell$-weak $E$-order)
Assume $\ell \geq 0$, $\zeta_\ell \geq 0$ and $E$ is ordered by $\sigma_E = e_1, \dots, e_m$, so that for any $e_i,e_j \in E$ with $1 \leq i+\ell < j \leq m$, we have $ w(e_j) \leq \zeta_\ell \cdot  w(e_i)$.
\end{assumption}

\begin{proposition}
   Under $\zeta_\ell$-$\ell$-weak $E$-order, Alg.~\ref{alg6:edge_local} is $(2 \cdot \max\{1,\zeta_\ell\})$-approximate. 
\end{proposition}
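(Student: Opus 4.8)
The plan is to follow closely the argument of Proposition~\ref{proposition_zeta}, upgrading it with the bounded local-search reasoning from the proof of Proposition~\ref{prop_gamma_l}. Write $M$ for the matching returned by Alg.~\ref{alg6:edge_local} and $M_{opt}$ for an optimal matching. As in those proofs, say that an edge becomes \emph{eliminated} at the moment one of its endpoints is marked unavailable; once eliminated it stays so, and the only events marking endpoints unavailable are additions of an edge to $M$. First I would record two preliminary facts. (i) The algorithm terminates: every while-iteration either advances the pointer $i$ (which is non-decreasing and bounded by $m+1$) or adds an edge to $M$ (each addition makes two previously available vertices unavailable, so there are at most $\lfloor |V|/2\rfloor$ such steps). (ii) Every edge of $E$, in particular every edge of $M_{opt}$, is eventually eliminated: the pointer advances past index $k$ only once $e_k$ is unavailable, so when the loop finally reaches $i=m+1$ every $e_k$ is unavailable.

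Next I define $f\colon M_{opt}\to M$ by $f(e)=e$ if $e\in M$, and otherwise $f(e)$ is the first edge added to $M$ that shares an endpoint with $e$ (well defined by fact (ii)). The map $f$ is at most $2$-to-$1$: if $e\in M\cap M_{opt}$ then both endpoints of $e$ are used by $M_{opt}$, so $e$ is its own unique preimage; if $e\in M\setminus M_{opt}$, two distinct preimages of $e$ must meet $e$ at its two distinct endpoints, since two preimages meeting $e$ at the same endpoint would be adjacent in $M_{opt}$. Hence, using $w\ge 0$, we get $\sum_{e\in M_{opt}} w(f(e)) \le 2\,w(M)$.

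The core step is the pointwise bound $w(e)\le \max\{1,\zeta_\ell\}\cdot w(f(e))$ for every $e\in M_{opt}$; it is trivial if $e\in M$. Otherwise let $e'=f(e)$ and consider the while-iteration, with pointer value $i$, in which $e'$ is added to $M$; let $j$ be the index of $e$ in $\sigma_E$. Since $e'$ is the \emph{first} edge eliminating $e$, the edge $e$ is still available at the start of that iteration, and a short invariant shows every available edge has index $\ge i$ (the pointer passes index $k$ only after $e_k$ is unavailable, so $e_1,\dots,e_{i-1}$ are all unavailable whenever the pointer equals $i$); thus $j\ge i$. Now $e'$ is chosen as the maximum-weight edge of $N$, the first (at most) $\ell+1$ available edges among $e_i,\dots,e_{i+\ell+1}$; since at most one available edge of that window is dropped by the truncation $N\leftarrow N[:\ell+1]$ (the last one), $e_i\in N$ and every available edge with index in $\{i,\dots,i+\ell\}$ lies in $N$. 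Two cases: if $i\le j\le i+\ell$, then $e\in N$ and $w(e)\le w(e')$ because $e'$ is the local maximum; if $j\ge i+\ell+1$, then $j$ exceeds the index of $e_i$ by more than $\ell$, so Assumption~\ref{assumption_zeta_l} gives $w(e)\le \zeta_\ell\cdot w(e_i)\le \zeta_\ell\cdot w(e')$, the last inequality because $e_i\in N$. Either way $w(e)\le \max\{1,\zeta_\ell\}\cdot w(f(e))$. Combining with the counting bound,
\[
 w(M_{opt}) = \sum_{e\in M_{opt}} w(e) \;\le\; \max\{1,\zeta_\ell\}\sum_{e\in M_{opt}} w(f(e)) \;\le\; 2\max\{1,\zeta_\ell\}\, w(M),
\]
which is the claim. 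The main obstacle I anticipate is the bookkeeping around $N$: pinning down exactly which edges survive the truncation $N\leftarrow N[:\ell+1]$, confirming $e_i$ always survives it, and establishing the pointer invariant $j\ge i$ that legitimises the case split on the index of $e$; the weight estimate itself is then a routine application of Assumption~\ref{assumption_zeta_l} mirroring Proposition~\ref{prop_gamma_l}.
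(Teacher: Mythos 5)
Your proof is correct and follows essentially the same route as the paper: the paper charges each edge added by Alg.~\ref{alg6:edge_local} with the (at most two) optimal edges it eliminates and bounds each via the same close/far case split anchored at $e_i$ (using $w(e_i)\le w(e)$ and Assumption~\ref{assumption_zeta_l}), which is exactly your pointwise bound read in the opposite direction. Your explicit $2$-to-$1$ projection $f$, the pointer invariant, and the truncation bookkeeping merely spell out details the paper leaves implicit.
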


\begin{proof}
    Similarly as previously, one can note that selecting a certain edge $e \in M$ at a given iteration $I$ eliminates up to $2$ edges $e', e''$ of the optimal $M_{opt}$.
    Denote $e_i$ the first edge (always available) considered during $I$ and call two edges $e_i, e_j \in E$ close if $|j-i| \leq \ell$.
    The edges $e', e''$ are still available at the iteration that $e$ is added, hence the weight of $e'$ is either (1) smaller than the one of $e$ if $e'$ and $e_i$ are close in $\sigma_E$ order, (2) smaller than $\zeta_\ell$ times the weight of $e$ if $e_i$ and $e'$ are far in $\sigma_E$ order.
    The second case is due to the fact that when $e_i$ is available, we have $w(e') \leq \zeta_\ell \cdot w(e_i) \leq \zeta_\ell \cdot w(e)$.
    The same arguments apply for the weight of $e''$.
    Overall, considering cases (1) and (2), we get $w(M_{opt}) \leq 2 \cdot \max\{1,\zeta_\ell\} \cdot w(M)$.
    
\end{proof}

In Alg.~\ref{alg6:edge_local}, since there cannot be more than $|V|/2$ edges added in total in $M$, and that for all iterations $i$ where $e_i$ is still available an edge is added to $M$, we deduce that there cannot be more than $(\ell+1) \cdot |V|/2$ calls to the weight function. 
As with strong edge order, the imposed order is very restrictive as it forces to be able to compare almost all edges with each other, forbidding only a small constant number of comparisons for each edge. Hence, it is significantly stronger than weak orders on the nodes that only ``locally'' order the edges. 
In the following section, we provide possible instantiations of order oracles and show that edge orders are obtained at the cost of significantly increasing the constants $\zeta$ and $\zeta_\ell$ in the required assumptions.

\subsection{Instantiations of Order Oracles} \label{sec:instantiations}

\begin{figure*}[t]
    \centering
    \footnotesize
\begin{minipage}{.6\textwidth}
  \centering
  \begin{tikzpicture}[yscale=0.5,xscale=0.45]
        \node[inner sep=0cm] (e1) at (-2.2,0) {$I(e_1) = [6.3,11.7]$};
        \node[inner sep=0cm] (e2) at (-2.2,-1) {$I(e_2) = [5.6,10.4]$};
        \node[inner sep=0cm] (e3) at (-2.2,-2) {$I(e_3) = [5.6,10.4]$};
        \node[inner sep=0cm] (e4) at (-2.2,-3) {$I(e_4) = [4.9,9.1]$};
        \node[inner sep=0cm] (e5) at (-2.2,-4) {$I(e_5) = [4.9,9.1]$};
        \node[inner sep=0cm] (e6) at (-2.2,-5) {$I(e_6) = [2.8,5.2]$};
        \node[inner sep=0cm] (e7) at (-2.2,-6) {$I(e_7) = [2.1,3.9]$};
        \node[inner sep=0cm] (e8) at (-2.2,-7) {$I(e_8) = [0.7,1.3]$};

        \draw[->] (-4.5,0) -- (-4.5,-7);
        \node[font=\bfseries] at (-2,1.5) {Ordered Edges};

        \node[font=\bfseries] at (6,1.5) {Linear Representation};

        \node[inner sep=0cm] (e1l) at (6.0,0) {$e_1 [$};\node[inner sep=0cm] (e1r) at (11.7,0) {$]$}; 
        \node[inner sep=0cm] (e2l) at (5.3,-1) {$e_2 [$};\node[inner sep=0cm] (e2r) at (10.4,-1) {$]$}; 
        \node[inner sep=0cm] (e3l) at (5.3,-2) {$e_3 [$};\node[inner sep=0cm] (e3r) at (10.4,-2) {$]$}; 
        \node[inner sep=0cm] (e4l) at (4.6,-3) {$e_4 [$};\node[inner sep=0cm] (e4r) at (9.1,-3) {$]$}; 
        \node[inner sep=0cm] (e5l) at (4.6,-4) {$e_5 [$};\node[inner sep=0cm] (e5r) at (9.1,-4) {$]$}; 
        \node[inner sep=0cm] (e6l) at (2.5,-5) {$e_6 [$};\node[inner sep=0cm] (e6r) at (5.2,-5) {$]$}; 
        \node[inner sep=0cm] (e7l) at (1.8,-6) {$e_7 [$};\node[inner sep=0cm] (e7r) at (3.9,-6) {$]$}; 
        \node[inner sep=0cm] (e8l) at (0.4,-7) {$e_8 [$};\node[inner sep=0cm] (e8r) at (1.3,-7) {$]$}; 
        
        \draw[-] (e1l) -- node[above, midway,font=\tiny,gray] {$(p_1,c_3)$} (e1r);
        \draw[-] (e2l) -- node[above, midway,font=\tiny,gray] {$(p_1,c_2)$} (e2r);
        \draw[-] (e3l) -- node[above, midway,font=\tiny,gray] {$(p_2,c_3)$} (e3r);
        \draw[-] (e4l) -- node[above, midway,font=\tiny,gray] {$(p_1,c_1)$} (e4r);
        \draw[-] (e5l) -- node[above, midway,font=\tiny,gray] { $(p_3,c_4)$} (e5r);
        \draw[-] (e6l) -- node[above, midway,font=\tiny,gray] {$(p_3,c_2)$} (e6r);
        \draw[-] (e7l) -- node[above, midway,font=\tiny,gray] {$(p_2,c_4)$} (e7r);
        \draw[-] (e8l) -- node[above, midway,yshift=1.5pt,font=\tiny,gray] {$(p_2,c_1)$} (e8r);

        \draw[->, gray] (0,-8) -- (12,-8);
        \node[gray] at (0,-8) {$|$}; \node[gray] at (0,-8.66) {$0$};
        \node[gray] at (1,-8) {$|$}; \node[gray] at (1,-8.66) {$1$};
        \node[gray] at (2,-8) {$|$}; \node[gray] at (2,-8.66) {$2$};
        \node[gray] at (3,-8) {$|$}; \node[gray] at (3,-8.66) {$3$};
        \node[gray] at (4,-8) {$|$}; \node[gray] at (4,-8.66) {$4$};
        \node[gray] at (5,-8) {$|$}; \node[gray] at (5,-8.66) {$5$};
        \node[gray] at (6,-8) {$|$}; \node[gray] at (6,-8.66) {$6$};
        \node[gray] at (7,-8) {$|$}; \node[gray] at (7,-8.66) {$7$};
        \node[gray] at (8,-8) {$|$}; \node[gray] at (8,-8.66) {$8$};
        \node[gray] at (9,-8) {$|$}; \node[gray] at (9,-8.66) {$9$};
        \node[gray] at (10,-8) {$|$}; \node[gray] at (10,-8.66) {$10$};
        \node[gray] at (11,-8) {$|$}; \node[gray] at (11,-8.66) {$11$};

    \end{tikzpicture}
     \caption{
     Example of associated intervals to weights following the (yet to be discovered) input of Figure~\ref{fig:double_greedy} and the edges ordered according to the optimistic or centered order; in this example, intervals have been set using original weights $\pm 30$\%.} 
  \label{fig:intervals}
\end{minipage}
\hfill
\begin{minipage}{.35\textwidth}
  \centering
  \begin{tikzpicture}[yscale=0.5,xscale=0.25]
        \node[font=\bfseries] at (15,1.5) {Errors in the $C$-order};

        \node at (8,0) {$e_1 = (p_1,c_3)$};
        \node at (8,-1) {$e_2 = (p_1,c_2)$};
        \node at (8,-2) {$e_4 = (p_1,c_1)$};
        \node at (8,-3) {$e_3 = (p_2,c_3)$};
        \node at (8,-4) {$e_7 = (p_2,c_3)$};
        \node at (8,-5) {$e_8 = (p_2,c_1)$};
        \node at (8,-6) {$e_5 = (p_3,c_4)$};
        \node at (8,-7) {$e_6 = (p_3,c_2)$};

        \draw [decorate,decoration={brace,amplitude=10pt}] (12,0.33) -- (12,-2.33) node[midway, xshift=45pt, text width=2cm]{$I_r(e_2)/I_l(e_1)$ for $p_1$; $\approx$ 1.65};

        \draw [decorate,decoration={brace,amplitude=10pt}] (12,-2.66) -- (12,-5.33) node[midway, xshift=45pt, text width=2cm]{$I_r(e_7)/I_l(e_3)$ for $p_2$; $\approx$ 0.69};

        \draw [decorate,decoration={brace,amplitude=10pt}] (12,-5.66) -- (12,-7.33) node[midway, xshift=45pt, text width=2cm]{$I_r(e_6)/I_l(e_5)$ for $p_3$; $\approx$ 1.06};
        
         \node at (8,-8.66) {};
    \end{tikzpicture}
    \vspace{0.2cm}
  \caption{Bounds on the error associated to each node in $P$ (i.e. according to $C$-order) following the example of Figure~\ref{fig:intervals}. }
  \label{fig:interval_errors}
\end{minipage}
    
\end{figure*}

We propose here how order oracles can be built based on partial or approximate \textit{apriori} knowledge on the weights. As before, our setting is that the exact weights are always accessible but costly to query. Hence, based on knowledge on the particular problem that is being studied, some information can be available to generate a rough estimate of each weight before querying it. We study here a few such instantiations and how strong and weak order oracles can be built upon those approximations.

\paragraph{Strong orders based on interval approximation} 

Suppose that for every edge $e \in E$, there exists an interval $I(e) = [a_e,b_e]$ with $a_e,b_e \in \mathbb{R}^+$ so that one has always the guarantee that $a_e \leq w(e) \leq b_e$. For  $I(e) = [a_e,b_e]$, denote $I_l(e) = a_e$ the interval's lower bound and $I_r(e) = b_e$ the interval's upper bound.
Using such approximation, and without more information on the distribution of weights within their possible intervals for guidance, we can define three possible ordering of the edges:
\begin{enumerate}
    \item ``Optimistic'': order the edges by the highest possible value they can take, i.e., by decreasing value of $I_r(e)$.
    \item ``Centered'': order the edges by the center of their intervals, i.e., by decreasing value of $(I_l(e)+I_r(e))/2$.
    \item ``Pessimistic'': order the edges by the lowest possible value they can take, i.e., by decreasing value of $I_l(e)$.
    
\end{enumerate}

For any given pair of edges $e_1$ and $e_2$, given an order $\sigma_E$ that places $e_1$ before $e_2$, we can associate a maximum possible \textit{relative error} for the order and the given pair as $\hbox{error}(e_1,e_2)_{\sigma_E} = I_r(e_2)/ I_l(e_1)$.
Hence, the order $\sigma_E = e_1, \dots, e_m$ entails a $\zeta$-strong $E$-order (Assumption~\ref{assumption_zeta}) with $\zeta = \max_{e_i,e_j \in E,\, j>i} \hbox{error}(e_i,e_j)_{\sigma_E}$. 
For example, using the intervals of Figure~\ref{fig:intervals}, one can compute a bound of $\zeta = 9.1/4.9 \approx 1.85$ achieved by the pair $(e_4, e_5)$.
In general, one can bound the maximum error of using such heuristic order. Since in the worst case the maximum overlap includes a full interval, the maximum error between two edges is bounded by the maximum over all $e \in E$ of $ I_r(e)/I_l(e) \leq (I_l(e) + I_{\max})/I_l(e)$ with $I_{\max}$ the maximum length of an interval according to $I$. 
Since $ (I_l(e) + I_{\max})/I_l(e)$ is a decreasing function in $I_l(e)$, we have $(I_l(e) + I_{\max})/I_l(e) \leq (w_{\min}+I_{\max})/w_{\min}$ where $w_{\min}$ is the minimum value possible for the weights, {\em i.e.}, the minimum of $I_l(e)$. This gives a bound of e.g.~$\zeta = I_{\max}+1$ if $w_{\min} = 1$.

Now, assume we define the following orders $\sigma_P$ and $\sigma_C$ on $P$ and $C$: number vertices as they appear in $\sigma_E$.
For example, using the intervals of Figure~\ref{fig:intervals}, $\sigma_P = p_1, p_2, p_3$ and $\sigma_C = c_3, c_2, c_1, c_4$. 
Using such node order, one can compute similarly bounds on the maximum ``error'' that the order can imply, but this time only edges sharing an endpoint do produce errors (in $P$ if considering $\sigma_P$, and in $C$ if considering $\sigma_C$).
Using the same example as previously, we obtain a bound of $\beta = \gamma \approx 1.65$ because of the pairs $\{(p_1,c_3),(p_1,c_2)\}$ for $\gamma$ and $\{(p_1,c_3),(p_2,c_3)\}$ for $\beta$.
In particular, this example illustrates well that based on the exact same approximation information (interval weights), order oracles on nodes may entail tighter bounds for the discovery algorithms than using an edge order based on the same interval weight approximation.

\paragraph{Weak orders based on interval approximation} 

Define the \textit{overlap count} (OC) as the maximum number of overlapping edges for any given edge, i.e., OC is calculated as
$$OC(E,I) \,=\, max_{e \in E} \; |\{e' \;\mid\; e' \in E \:\land\: e'\neq e \:\land\: I_{l}(e') < I_{r}(e) \:\land\: I_{r}(e') > I_{l}(e)\}|.$$
For a given set of interval weights, we have $\zeta_\ell$-$\ell$-weak $E$-order holding with $\zeta_\ell \leq 1$ whenever $\ell = OC(E,I)$. 
Under all aforedefined edge orders $\sigma_E$ that are based on intervals (centered, pessimistic or optimistic)\footnote{The only condition required on the order is that two non-overlapping edges are placed in the order of their endpoints.}, we have the property that for any given edge $e$ in $\sigma_E$, all edges appearing beyond the overlap count will have at most a strictly smaller weight than the lowest possible value for $e$.
Observe that the same property holds for weak node orders: we have $\beta_\ell$-$\ell$-weak $P$-order holding with $\beta_\ell \leq 1$ whenever $\ell = OC_C(E,I)$. Here, $OC_C(E,I) \leq OC(E,I)$ is the overlap count only accounting for edges sharing the same endpoint in $C$ (that is the maximum overlap count for any $c$).
Symmetrically, we have $\gamma_\ell$-$\ell$-weak $C$-order holding with $\gamma_\ell \leq 1$ when $\ell = OC_P(E,I)$.
For example, using the intervals of Figure~\ref{fig:intervals} and $\ell = 1$, we obtain $\zeta_1 \approx 1.65$, $\zeta_2 \approx 1.62$, $\zeta_3 \approx 1.44$, $\zeta_4 \approx 0.82$ etc; whereas $\beta_1 = 0$, $\gamma_1 \approx 1.44$ and $\gamma_2 = 0$.
That means, assuming an overlap count of $\ell$ for both $P$ and $C$ weak node orders, only $\mathcal{O}(\minpc \cdot \ell)$ edges are examined with Alg.~\ref{alg:double_greedy} to guarantee a $2$-approximation for the produced matching (i.e., we have $\beta_\ell,\gamma_\ell \leq 1$ then, cf.~Proposition~\ref{proposition:betal_gammal}). 
 That is, whenever we have $m = |E| = \Omega(\minpc{}^2)$ and interval weights with a constant overlap count, only $\mathcal{O}(\sqrt{m})$ edges have to be queried to generate a $2$-approximate matching.

\paragraph{Function approximation}

Suppose there exists a function $f$ that provides an approximation of the weighting function $w$, such as
$ w(e) - \Delta \leq f(e) \leq w(e) + \Delta$ or $ (1-\varepsilon) \cdot w(e) \leq f(e) \leq (1+\varepsilon) \cdot w(e)$ for some $\Delta \geq 0$ (resp. $\varepsilon \geq 0$).
In this case, one can set $I(e) = [f(e)-\Delta,f(e)+\Delta]$ (absolute error guarantee) or $I(e) = [f(e)\cdot (1-\varepsilon),f(e) \cdot (1+\varepsilon)]$ (relative error guarantee) so that all previous results on interval weights hold on approximate weights, including the implications based on the value of the overlap count.
For example, in the relative error guarantee, by the precedent arguments, the maximum error based on the optimistic order is bounded by $(1+\varepsilon)/(1-\varepsilon)$, e.g. the error is less than $2$ if $\varepsilon \leq 1/3$.
This entails that the \naiveedge{} Matching algorithm is $2$-approximate without calculating any weights whenever the heuristic order of the edges is so that each approximated weight is within $77\%-133\%$ of its real value (using Proposition~\ref{proposition_zeta}).
In addition, depending on how much overlap there is in the estimations, tighter approximation ratios can be obtained by calculating the exact value of some of the weights using our discovery algorithms.
In the best scenario, the local approximations do not overlap at all (at least when considering only the edges in the neighborhood of each node) and 
Alg.~\ref{alg:matching2} outputs a $2$-approximation of the optimal without calculating any weights.

\section{Extensions to One-to-Many Assignment Problems and Applications} \label{sec:hypergraph_matchings}


We extend our results in this section to one-to-many assignment problems, i.e., when each member of the set $P$ can be matched with up to $k\geq 2$ different members of the set $C$ instead of only one in the previously studied one-to-one assignment problem.
Assignments where $p \in P$ may be allowed to be paired with up to $k$ elements of $C$ can take two forms: either the assignment of $p$ to many tasks follow the same weighting function as in the one-to-one assignment, or the weight of assigning $p$ to a subset $X$ of tasks is different from the sum of assigning $p$ to the individual tasks from $X$, that is $w(\{p\} \cup X) \not= \sum_{c \in X} w(p,c)$, extending the weighting function to subsets of $P \cup C$. In the former case that we hereafter refer to as \textit{simple one-to-many assignment problem}, the problem is a straightforward generalization of the one-to-one assignment problem, whereas in the latter case, the problem corresponds to a form of bipartite hypergraph matching, a significantly more challenging problem.

\subsection{Simple One-to-many Assignment Problem} 
\label{subsec:simple_otmap}

One can reduce any simple one-to-many assignment problem to a one-to-one assignment problem in the following manner.
For each $p \in P$, make $k$ copies $p^1, \dots, p^{k}$ of node $p$, while keeping the original weights, \ie{} $\forall c \in C, w(p^j, c) = w(p, c)$.
Finally, solve the maximum matching in bipartite graphs problem with the input $P'_k = \{ p^j \;|\; j \in [1..k], p \in P\}$ and~$C$. 
This reduction allows to extend all our results to simple one-to-many assignment problems with all bounds shown in Table~\ref{tab:results} still holding the same way over $G = (P'_k \cup C, E)$ using the algorithms we have introduced in this work (upon using appropriate order oracles). 
In detail, we can re-use all the discovery algorithms that have been introduced so far and adapt if needed how the original order oracles translate to this situation.  
That is, assuming order oracles $\sigma_P$ and $\sigma_C$ are available for $P$ and $C$, one has to extend $\sigma_P$ to $\sigma_{P'_k}$. 
Let us consider three intuitive strategies: (1) \textsc{Round Robin} places all vertices in $P$ first $\{p_i^1 \;|\; p_i \in P\}$ ordering them in $\sigma_P$, then cycle $k-1$ more times over the other copies of $P$ according to $\sigma_P$ each time; (2) \textsc{Single Pass} places first all copies of $p_1$ before moving to all copies of $p_2$, etc; (3) \textsc{Classic Greedy} orders all edges (including the copies) in the usual decreasing order of edge weights.
Here, \textsc{Single Pass} preserves the assumption bounds associated with $\sigma_P$.
This is because pairs of edges involving the ``new edges'' are either involving a node in $P$ and one of its copies (simply enforcing $\beta \geq 1$) or 
two distinct nodes in $P$ but appearing in $\sigma_{P'_k}$ in the same order as in $\sigma_P$.
Since \textsc{Round Robin} shuffles how nodes appear in $\sigma_{P'_k}$, it does not preserve the original assumption bounds.
At last, \textsc{Classic Greedy} provides a $\zeta$-strong edge order albeit enforcing $\zeta = 1$ when $k \geq 2$ because of the copies.

\subsection{General One-to-many Assignment Problem} 
\label{one-to-many-assignments:hypergraphs}

Following our original motivations stemming from energy systems~\cite{duvignau2024geographical}, we further explain here how to extend our results to the bipartite hypergraph matching problem. 

\paragraph{Bipartite hypergraph matching problem}

Contrary to usual graphs where the definition of bipartiteness is rather intuitive and unique, the notion accepts several variants for its hypergraph equivalent~\cite{lovasz2009matching}. One natural extension of bipartition to hypergraphs assumes for the vertex set $V$ of the hypergraph $\mathcal{G}$ to be partitioned into two disjoint sets $P$ and $C$, 
such that every hyperedge of $\mathcal{G}$ contains at least one vertex from $P$ and one from $C$. 
To match the setting of a \textit{general one-to-many} assignment problem, we rather restrict hyperedges to have exactly one vertex from $P$ (and since hyperedges contain at least two vertices, this is a subset of the bipartite hypergraphs in the wider definition).
The $k$-\textit{BHM-Discovery} problem seeks, given a bipartite hypergraph $\mathcal{G} = (P \cup C, \mathcal{E})$, to find the maximum-weight matching of the hyperedges of $\mathcal{E}$ where hyperedges are of size at most $k$. As with the previous discovery problems, hyperedge weights are not given as input and must be individually queried.
For $k \geq 4$ and if no weight assumptions are provided, the $k$-BHM-Discovery problem is not approximable within a factor of $o(k/\log k)$ in polynomial time, unless P = NP (cf.~\cite{duvignau2024geographical}, based on a reduction to the $k$-bounded hypergraph matching problem~\cite{HazanSS06}).
By calculating the weight of all $\mathcal{O}(N^{k})$ possible hyperedges with $N = \max\{|P|,|C|\}$, the best approximation algorithms~\cite{berman2000d, Neuwohner23} achieve slightly less than a $(k+1)/2$ approximation ratio.


\paragraph{Approximation bounds for the Peer-To-Peer Energy Sharing application}

As with simple one-to-many assignments, we can re-use all the discovery algorithms that have been introduced in this work by simply running them on the input $G = (P'_{k-1} \cup C, E)$, with $E = P'_{k-1} \times C$, assuming we can calculate pairwise weights $w(p,c)$ that provide indications for the weight of hyperedges containing both $p$ and $c$, and $w(p,c) = 0$ if $p$ and $c$ do not appear in any hyperedges of the original input $\mathcal{G} = (P \cup C, \mathcal{E})$. 
The output hypergraph matching is then obtained by merging together the different copies so to create groups of size up to $k$.
In this general setting, our results do not carry over because the weight $w(\{p\} \cup X)$ of a hyperedge $\{p\} \cup X \in \mathcal{E}$ is different from the sum of the individual pairwise weights, i.e., $\sum_{c \in X} w(p,c)$.
However, it is shown in~\cite{duvignau2024geographical} that if for any hyperedge $e = \{p\} \cup X$ of the input we have $\alpha_1(k) \leq w(e)/w'(e) \leq \alpha_2(k)$ where $w'(e) = \sum_{c \in X} w(p,c)$, then an $r$-approximate matching algorithm entails an algorithm with an $r\cdot \alpha_2(k)/\alpha_1(k)$ approximation ratio for the $k$-BHM-Discovery problem.
For the practical application of ``Peer-To-Peer Energy Sharing'' considered in~\cite{duvignau2024geographical}, bounds of $\alpha_1(k) = \frac{1}{k-1}$ and $\alpha_2(k) = 1$ are proven. 
This thus entails discovery algorithms of approximation ratio $(k-1) \cdot \varepsilon$ where $\varepsilon$ corresponds to the bound as shown in Table~\ref{tab:results}, depending on the chosen matching algorithm and strength of the involved order oracles. 
In this application, the clear advantage of using discovery algorithms for the bipartite hypergraph matching problem instead of one based on exhaustively enumerating all hyperedges resides in calculating at most $\mathcal{O}(n)$ weights (e.g. using Alg.~\ref{alg:matching3} or Alg.~\ref{alg:double_greedy} with a constant value for $\ell$) instead of $|\mathcal{E}| = \mathcal{O}(s \cdot q^{k-1})$ where $s = |P|$, $q = |C|$, $n = \min\{s,q\}$.
Our results thus entail that those efficient greedy algorithms also provide proven approximation guarantees depending only on the quality of the heuristic orders used to process the input.

\section{Conclusions} \label{sec:conclusions}

We have in this work extended the notion of discovery algorithms to assignment problems, and we believe our work provides useful theoretical bounds for algorithms that are efficient in practice.
The algorithms that we have developed require only weaker assumptions on processing orders for the nodes than a total ordering of all edges and achieve a bounded approximation ratio depending only on the quality of the heuristic orders used.
Furthermore, to provide a bounded-approximation solution to practical applications, we also discuss here extensions of the greedy algorithms introduced earlier to one-to-many assignments and further study their performances based on assumptions stemming from real-world data.
We note that for a given input, processing order and with access to all weights, one can 
compute efficiently the exact values of the heuristic parameters which can become good estimations for bounds on larger instances in a given application.
Our findings open up for further rehabilitation of greedy algorithms in theoretical analysis, and advocate that greedy algorithms do not only often provide computationally-efficient solutions to hard problems but can also be formally analyzed within the scope of concrete applications.


\section*{Acknowledgements}

Romaric Duvignau is partially supported by the TANDEM project within the framework of the Swedish Electricity Storage and Balancing Centre (SESBC), funded by the Swedish Energy Agency together with five academic and twenty-six non-academic partners.
Ralf Klasing is partially supported by the ANR project TEMPOGRAL (ANR-22-CE48-0001).



\bibliographystyle{plain}
\bibliography{main}




\end{document}